\documentclass[aps,prx,twocolumn,superscriptaddress,noeprint,longbibliography, nofootinbib]{revtex4-2}

\usepackage[utf8]{inputenc}
\usepackage{amsmath}
\usepackage{amsthm}
\usepackage{enumitem}
\usepackage{amssymb}
\usepackage{wasysym}
\usepackage{oplotsymbl}
\usepackage{bm} 
\usepackage{dsfont} 
\usepackage{mathrsfs} 
\usepackage{physics}
\usepackage{mathtools}

\usepackage{xcolor}
\usepackage{framed}
\definecolor{darkblue}{rgb}{0,0,.65}
\definecolor{darkgreen}{rgb}{0.28,0.41,0.19}
\definecolor{nicegreen}{rgb}{0.28,0.85,0.19}


\usepackage[colorlinks=true,linkcolor=blue,allcolors=blue]{hyperref}
\usepackage[capitalize,nameinlink]{cleveref}

\def\equationautorefname~#1\null{Eq. (#1)\null}

\newcommand{\appref}[1]{\hyperref[#1]{App.~\ref*{#1}}}

\renewcommand\vec{\bm}


\DeclareMathOperator\linspan{span}
\DeclareMathOperator\dist{dist}

\DeclarePairedDelimiter\floor{\lfloor}{\rfloor}

\newcommand{\id}{\ensuremath{\mathds{1}}}

\newcommand{\pG}{\ensuremath{p_{\rm G}}}

\newcommand{\Sconf}{\ensuremath{S_{\rm conf}}}
\newcommand{\sconf}{\ensuremath{s_{\rm conf}}}
\newcommand{\Hcheck}{\ensuremath{H}}
\newcommand{\hamil}{\ensuremath{\mathcal H}}

\newcommand{\indicator}{\ensuremath{\id}}

\newcommand{\wbit}{\ensuremath{w_{\rm bit}}}
\newcommand{\wcheck}{\ensuremath{w_{\rm check}}}

\newtheorem{prototheorem}{Theorem}[section]
\newtheorem{protodef}[prototheorem]{Definition}

\newtheorem{protolemma}[prototheorem]{Lemma}

\newtheorem{protoconjecture}[prototheorem]{Conjecture}

\newenvironment{theorem}{\colorlet{shadecolor}{gray!15}\begin{shaded}\begin{prototheorem}}
{\end{prototheorem}\end{shaded}}

\newenvironment{lemma}{\colorlet{shadecolor}{gray!15}\begin{shaded}\begin{protolemma}}
{\end{protolemma}\end{shaded}}

\newenvironment{definition}{\colorlet{shadecolor}{gray!15}\begin{shaded}\begin{protodef}}
{\end{protodef}\end{shaded}}

\usepackage[framemethod=tikz]{mdframed}
\setlength{\FrameSep}{6pt}
\usepackage{mathrsfs}

\newcommand{\taumix}{\ensuremath{\tau_{\rm mix}}} 

\newcommand{\codespace}{\ensuremath{\mathcal C}}
\newcommand{\codeword}{\ensuremath{{\vec x}^{(\codespace)}}}
\newcommand{\codestate}{\ensuremath{{\vec \sigma}^{(\codespace)}}}

\newcommand{\tanner}[2]{\ensuremath{\mathcal T\left(#1, #2\right)}}

\newcommand{\Edyn}{\ensuremath{\epsilon_{\rm mem}}}
\newcommand{\Eglass}{\ensuremath{\epsilon_{\rm G}}}

\newcommand{\Tdyn}{\ensuremath{T_{\rm mem}}}
\newcommand{\Tglass}{\ensuremath{T_{\rm G}}}

\newcommand{\pGibbs}{\ensuremath{p_{\rm G}}}
\newcommand{\pGibbsTr}[1]{\ensuremath{p_{\rm G}^{(#1)}}}

\newcommand{\gdeg}{\ensuremath{s}}
\newcommand{\depth}{\ensuremath{R}}
\makeatletter
\def\l@subsubsection#1#2{}
\makeatother

\begin{document}

\title{
Expansion creates spin-glass order in finite-connectivity models: \\a rigorous and intuitive approach from the theory of LDPC codes
}

\author{Benedikt Placke}
\email{benedikt.placke@physics.ox.ac.uk}
\affiliation{Rudolf Peierls Centre for Theoretical Physics, University of Oxford, Oxford OX1 3PU, United Kingdom}
\author{Grace M. Sommers}
\affiliation{Department of Physics, Princeton University, Princeton, NJ 08544, USA}
\author{Nikolas P.\ Breuckmann}
\affiliation{School of Mathematics, University of Bristol, Bristol BS8 1UG, United Kingdom}
\author{Tibor Rakovszky}
\affiliation{Department of Physics, Stanford University, Stanford, California 94305, USA}
\affiliation{Department of Theoretical Physics, Institute of Physics,
Budapest University of Technology and Economics, M\H{u}egyetem rkp. 3., H-1111 Budapest, Hungary}
\affiliation{HUN-REN-BME Quantum Error Correcting Codes and Non-equilibrium Phases Research Group,
Budapest University of Technology and Economics,
M\H{u}egyetem rkp. 3., H-1111 Budapest, Hungary}
\author{Vedika Khemani}
\email{vkhemani@stanford.edu}
\affiliation{Department of Physics, Stanford University, Stanford, California 94305, USA}

\date{\today}

\begin{abstract}
Complex free-energy landscapes with many local minima separated by large barriers are believed to underlie glassy behavior across diverse physical systems. This is the heuristic picture associated with replica symmetry breaking (RSB) in spin glasses, but RSB has only been rigorously verified for certain mean-field models with all-to-all connectivity. 
In this work, we give a rigorous proof of finite temperature spin glass order --- defined in terms of a complex free-energy landscape --- for a family of models with local interactions on finite-connectivity, non-Euclidean expander graphs.
To this end, we bypass the RSB formalism entirely, and instead exploit the mathematical equivalence of such models to certain low-density parity check (LDPC) codes.
We use \emph{code expansion}, a property of LDPC codes which guarantees extensive energy barriers around ground states. Together with mild additional assumptions, this allows us to construct an \emph{explicit} decomposition of the low-temperature Gibbs state into disjoint \emph{components}, each hosting an asymptotically long-lived state associated with a local minimum of the landscape. Each component carries at most an exponentially small fraction of the total weight (``shattering''), and almost all components do not contain ground states --  which we take together to define spin-glass order.
The proof is elementary, and treats various expanding graph topologies on the same footing, 
including those with short loops where existing approaches such as the cavity method fail.
Our results apply rigorously to certain diluted $p$-spin glasses for sufficiently large (but finite) $p$, and, while unproven, we also expect our assumptions to hold in a broader family of codes. 
Motivated by this, we numerically study two simple models, on random regular graphs and a regular tesselation of hyperbolic space, respectively. 
We provide evidence that both models undergo \emph{two} transitions as a function of temperature, corresponding to the onset of weak ergodicity breaking and spin glass order, respectively. 
\end{abstract}

\maketitle

\clearpage
\newpage

\section{Introduction}

Complex free-energy landscapes provide a unifying framework for understanding glassy behavior across diverse physical systems. From spin glasses in disordered magnets \cite{anderson1978concept,binder1986review,bramwell2001review} to structural glasses in supercooled liquids \cite{debenedetti_stillinger_supercooled2001, tarjus2005frustration, kivelson2008search,bertier2011rmp}, the emergence of rugged, hierarchical landscapes populated by many local and global minima offers potential explanations for the dramatically slow dynamics, aging, and memory effects that define glassy phenomenology. Beyond physics, the mathematical machinery developed around landscape-based theories has found applications in computer science and information theory, spanning neural networks, constraint satisfaction, error correction, and machine learning~\cite{hopfield1982network,amit1985sg_nn,stein2013spin,Krzakala2024machine_learning,mezard2009information,monasson1999complexity_transitions,achlioptas2008algorithmic,mezard2002survey_propagation,krzakala2007gibbs,ricci_tersenghi2010xorsat,franz2002dynamic,di2004weight_enumerators}. 

This intuitive picture of a complex landscape can be formalized in terms of the properties of the \emph{Gibbs state}, in cases where the local minima are separated by \emph{macroscopic} free-energy barriers. The existence of such barriers permits a \emph{decomposition} of the Gibbs state into  distinct `extremal components'~\cite{mezard2009information}, each of which defines a long-lived (meta)stable state associated with a local minimum.  The task of \emph{rigorously} defining such a Gibbs state decomposition and characterizing its complexity is highly non-trivial~\cite{friedli2017statistical}, as we elaborate in the next section. Fundamental questions therefore remain:   How can landscape complexity be quantified and proved?  When and why do complex landscapes emerge? 
What are the minimal ingredients required for such complexity?

The most sophisticated approach for addressing these questions is the formalism of replica symmetry breaking (RSB), developed in the context of spin-glasses \cite{mezard1987spin}. RSB quantifies landscape complexity by postulating an abstract hierarchy of `replicas', and an order parameter that takes the form of an $n\times n$ matrix capturing `overlaps' between the replicas, evaluated in a formal limit $n \to 0$. In principle, each replica corresponds to one of the extremal equilibrium Gibbs states discussed above\footnote{These are called `pure states' in the terminology of Ref.~\onlinecite{parisi1983order}.}~\cite{mezard2009information, parisi1983order, franchini2023rsbwr}, but the correspondence is indirect  and subtle~\cite{huse_fisher1987incongruent}, and the underlying physical mechanisms responsible for landscape complexity are obscured by heavy mathematics. 
This machinery yields exact results for certain `mean-field' fully-connected (`all-to-all') models [\autoref{fig:intro_graphs}(a)], most famously the Sherrington–Kirkpatrick model~\cite{sherrington1975solvable, parisi1979solution1,parisi1979solution2}, many aspects of which have since been rigorously verified \cite{guerra2003rsb,Talagrand2011a,Talagrand2011b,panchenko2013sherrington}.  While more direct proofs, not involving replicas, also exist, they also remain limited to similar all-to-all models~\cite{alaoui2023shattering,gamarnik2023shattering,arous2024shattering}. 

Extending these rigorous results to more realistic `sparse' models with finite local connectivity has proven significantly more challenging \cite{dembo_montanari2010review, Panchenko2014structure1,Panchenko2016structure2}. This is true not only of models in finite Euclidean dimensions [(\autoref{fig:intro_graphs}(b)] (where the applicability of the RSB landscape picture remains hotly debated~\cite{mcmillan1984scaling,fisher_huse1986, huse_fisher1987droplet,moore2021droplet, newman_stein1992,newman_stein1997,baity_jesi2021temperature_chaos, read2014short_range,read2022complexity}), but also in the case of various non-Euclidean graphs that sit between the fully-connected and finite-dimensional Euclidean extremes [(\autoref{fig:intro_graphs}(c)]. As such, these latter models are as close to the mean-field limit as possible, while retaining a finite local connectivity, and indeed, various methods suggest that they can exhibit RSB and spin glass order~\cite{montanari2001glassy,franz2001ferromagnet,franz2002dynamic,krzakala2007gibbs}. 

\begin{figure}
    \centering
    \includegraphics{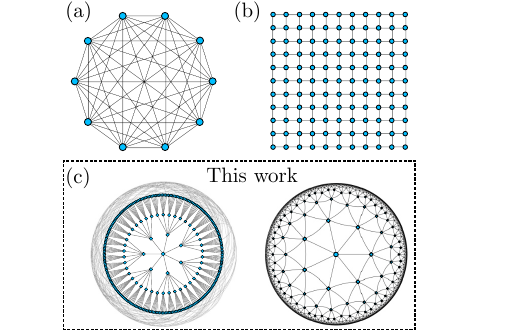}
    \caption{Different interaction graphs: (a) Fully connected, `all-to-all', such as in the Sherrington-Kirpatrick model;  (b,c) Sparse and local with finite connectivity, in (b) low-dimensional Euclidean space, such as in the Edwards Anderson model, and on (c) non-Euclidean expander graphs. Expander graphs can be locally tree-like (left) or have small loops (right), but both cases are closed at the boundaries so every vertex is (statistically) identical to every other.}
    \label{fig:intro_graphs}
\end{figure}

\textbf{Our work presents a transparent---and fully rigorous---proof of spin-glass order, and a complex free energy landscape, in families of sparse, finite-connectivity systems.}
We focus on models defined on non-Euclidean \emph{expander graphs}. [\autoref{fig:intro_graphs}]. These have the defining property that the bulk-to-boundary ratio is finite: for \emph{any} subset of vertices smaller than some finite fraction of the graph, the number of neighbours outside the set grows in direct proportion to the size of the subset itself. These constitute a large family of graphs, including both random constructions---e.g. finite-degree random regular graphs (RRGs)---and deterministic ones---e.g. Cayley graphs of certain groups. While some expander graphs (including RRGs) are \emph{locally tree-like}  \footnote{A  finite tree with an open boundary is not an expander graph because subsets of sites at the boundaries have neighbor-sets whose size grow too slowly. A graph is called \emph{regular} if every vertex has the same degree (number of neighbors). } [\autoref{fig:intro_graphs}(c), left], but closed with no boundaries, others include many short loops, such as regular tesselations of the hyperbolic plane [\autoref{fig:intro_graphs}(c), right]. 
Our proof technique treats all these cases on an equal footing, but we note that only certain random constructions are known to satisfy all the assumptions of our proof.

A canonical example of the kind of sparse model we consider is the \emph{diluted} $p$-spin model  in which interactions involve $p>2$ spins and are strictly local on an RRG. This model has been extensively studied in the spin-glass literature using the `cavity method'~\cite{Mezard1986cavity,mezard1987spin,mezard2015cavity,montanari2001glassy,franz2001ferromagnet,franz2002dynamic,krzakala2007gibbs}, a heuristic, but powerful extension of mean‑field reasoning that is most reliable on loop‑free tree graphs\footnote{In its simplest version, one deletes a spin, treats the neighbors as independent cavities, and enforces self‑consistency, yielding recursive Bethe‑Peierls (belief‑propagation) equations. These are exact on a tree because deleting a spin disconnects the tree into independent sub-trees, which can be solved recursively.} \cite{mezard2001bethe,mezard2006reconstruction}, but becomes increasingly uncontrolled as loops proliferate. 
On closed graphs that are only \emph{locally} tree‑like, it has been rigorously proven that the cavity solution is asymptotically exact as long as correlations decay before long loops close; this defines the so-called \emph{replica symmetric} regime, corresponding to a lack of spin-glass order \cite{DemboMontanari2010_Ising,DemboMontanariSun2013_Factor,DemboMontanariSlySun2014_Potts}.
On lowering the temperature (or increasing frustration), correlations survive around loops, destabilizing the replica symmetric solution.  Heuristic `one-step' RSB cavity calculations then predict that the diluted $p$-spin model undergoes a transition to a spin-glass phase. 
Numerics support this picture \cite{franz2001ferromagnet}, and it has been established that the cavity solution always provides a variational lower bound for the free energy \cite{franz_leone2003replica,panchenko_talagrand2004bounds}.
However, rigorously establishing the correctness of the RSB cavity solution in the low-temperature spin glass phase on any closed sparse graph remains, to the best of our knowledge, an open problem \cite{dembo_montanari2010review, Panchenko2014structure1,Panchenko2016structure2}.

\begin{figure*}
\includegraphics[]{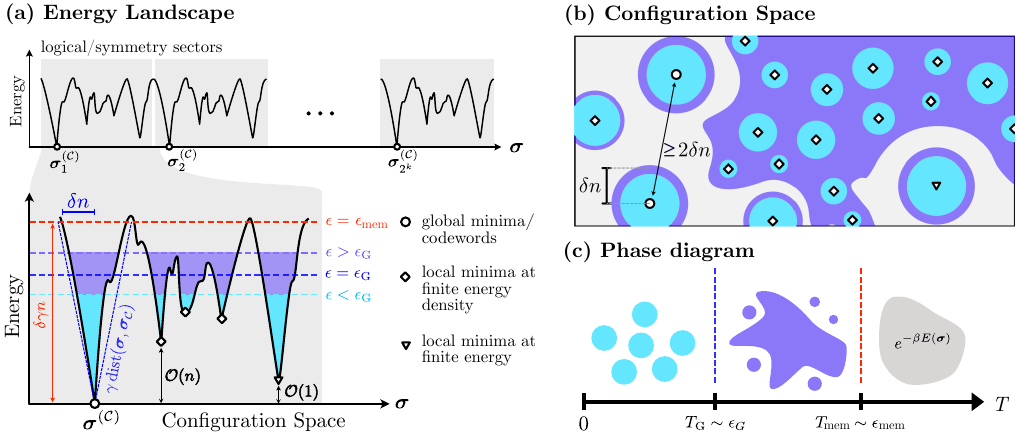}
\caption{
The energy landscape of expander LDPC codes, and resulting phase diagram. (a) Expander codes have extensive energy barriers surrounding ground states (`codewords') $\codestate$, denoted by circles (\autoref{sec:Expanders}). This also implies extensive energy barriers around states at finite energy (triangles) and even finite energy density (squares) (\autoref{sec:Glassiness}). Codewords are related by exact symmetries of the Hamiltonian. 
(b) The energy barriers lead to clustering of states in configuration space: Below a cutoff energy density $\epsilon<\Eglass$, configuration space shatters into disjoint clusters surrounded by extensive barriers (\autoref{sec:Glassiness}), each of which carries only an exponentially small fraction (in the number of variables $n$) of the total weight (light blue-shaded regions).
At higher energies, $\Eglass < \epsilon < \Edyn$, there is a regime of weak clustering: there are still exponentially many clusters but one cluster carries almost all configurations (purple-shaded regions).
(c) The two characteristic energy scales $\Eglass$ and $\Edyn$ lead to two phase transitions in the structure of the Gibbs state $\pG\sim \exp(-\beta E)$ as a function of temperature. The two low-temperature phases correspond to spin glass order, and weak ergodicity breaking, respectively (\autoref{sec:spin_glass_order}).}
\label{fig:landscape_intro}
\end{figure*}

Our work bypasses the cavity method altogether and utilizes an entirely different approach. Diluted $p$-spin-models on RRGs—with \emph{uniform} ferromagnetic couplings—are mathematically equivalent to a family of classical error correcting codes~\cite{mezard2009information}, namely Gallager’s low density parity check (LDPC) codes~\cite{gallager1960thesis,gallager1962low}, and to the linear XORSAT constraint satisfaction problem~\cite{ricci_tersenghi2010xorsat}. 
Cavity calculations have already shown that this ferromagnetic system can exhibit glassy behavior, even in the absence of the disorder and frustration in couplings which is traditionally associated with glassiness~\cite{montanari2001glassy,franz2001ferromagnet,franz2002dynamic,krzakala2007gibbs}. We build on results from coding theory to rigorously establish the existence of spin-glass order for such ferromagnetic diluted $p$-spin-models at sufficiently large (but finite) values of $p$.

From the perspective of error correcting codes, Gallager codes belong to a broader family of \emph{expander LDPC codes} \cite{sipser_spielman1996}, which can be defined on diverse families of expander graphs, including loopy non-random geometries like the hyperbolic tessellations mentioned above. Our framework imports tools from coding theory to analyze these systems, treating this wider class of models on the same footing and opening new territories for rigorous spin-glass analysis in regimes where traditional methods fail.

The main contributions of this work may be summarized as follows: 

\begin{itemize}
    \item  \textbf{We provide a rigorous proof of spin-glass order and a complex energy landscape in certain sparse finite-connectivity models on closed graphs at low temperatures.}
   
    \item  \textbf{Our framework, inspired by coding theory, is orthogonal to existing replica and cavity approaches. In particular, it is also well suited to study loop-rich graphs where cavity methods become uncontrolled}. 

   \item \textbf{Our derivation is transparent, and reveals new insights on physical mechanisms for landscape complexity.} 
    
\end{itemize}

We now provide a more detailed overview of our technical approach and results. 

We consider Hamiltonians of $n$ spins with sparse ferromagnetic interactions, associated with LDPC codes defined on families of expander graphs whose size grows with $n$.
We furnish an \emph{explicit} decomposition of the low-temperature Gibbs state into disjoint components separated by \emph{extensive} free-energy barriers of size $\propto n$. We then define and prove spin-glass order through properties of this decomposition, as quantified by various configurational entropies. We do this by proving a closely related result on the complexity of the \emph{energy} landscape, and use this to derive properties about the \emph{free-energy} landscape. 
 
While our results imply 1-RSB (and do not rule out higher levels of RSB) in the sense of a nontrivial overlap distribution, we bypass the question of exactness of the corresponding cavity solution entirely.
Instead, our approach directly analyzes the structure of the Gibbs state itself. The analysis is transparent, requiring only basic linear algebra with binary variables, and reveals the underlying geometric mechanisms for landscape complexity. 

Our proof uses \emph{code expansion} --- a property defined in coding theory which implies that ground states (`codewords') are surrounded by \emph{extensive} energy barriers: the cost of flipping spins starting from a ground state is lower bounded by a linearly growing function, until an extensive fraction of spins have been flipped\footnote{Graph expansion and code expansion are distinct but closely related concepts. In particular, in order for an LDPC code to be expanding in this sense, the underlying interaction graph must also be an expander.} [\autoref{fig:landscape_intro}(a)]; the rate of this linear growth is given by the so-called \emph{expansion coefficient}. Codes with a finite expansion coefficient are called expander codes, and we explain how such barriers naturally arise via the interplay of local constraints and expanding geometry. 

Next, we show that due to the linearity of the underlying constraint satisfaction problem, expansion also implies that all states below an energy density cutoff are surrounded by extensive energy barriers.  
We then impose two additional assumptions: (i) a strong enough expansion coefficient and (ii) lack of redundancies in the interactions\footnote{A redundancy is a subset of interaction terms that multiply to the identity. For the Ising model with nearest-neighbor $\sigma_i\sigma_j$ interactions, the loops of the graph define the redundancies. However, for more general interactions, the redundancy structure need not be set by loops in the graph.}. Under these assumptions, we prove that the configuration space below a cutoff energy density, $\epsilon < \epsilon_G$, \emph{shatters} into exponentially many (in $n$) disjoint `clusters', each carrying an exponentially small fraction of the total weight. Each cluster is separated from the others by extensive Hamming distance and extensive energy barriers [\autoref{fig:landscape_intro}]. Thus, these clusters are local minima of the low energy landscape. In addition, we prove that the clustering is \emph{incongruent}\footnote{The choice of this term is inspired by, but not identical to, the terminology in \cite{huse_fisher1987incongruent}, as we will explain later}, meaning that most clusters do \emph{not} contain ground states and are \emph{not} related to each other by symmetries of the Hamiltonian. This is crucial because  our models are `good' LDPC codes with \emph{exponentially} many ground states belonging to different symmetry sectors. Thus, good codes may be shattered purely from symmetry considerations. Incongruence ensures shattering occurs even \emph{within} individual symmetry sectors. 

The properties of shattering and incongruence (defined precisely later) together define a complex energy landscape. At low enough temperatures,  extensive energy barriers translate to extensive \emph{free-energy barriers}. This, in turn, enables a decomposition of the Gibbs state into exponentially many components associated with the local energy minima, each hosting a long-lived equilibrium state. We define and prove spin-glass order in terms of the complexity of this decomposition. 
The flow of our argument, with pointers to the corresponding sections of this work, is sketched in Figure~\autoref{fig:proof_flowchart}. 

\begin{figure}
    \centering
    \includegraphics{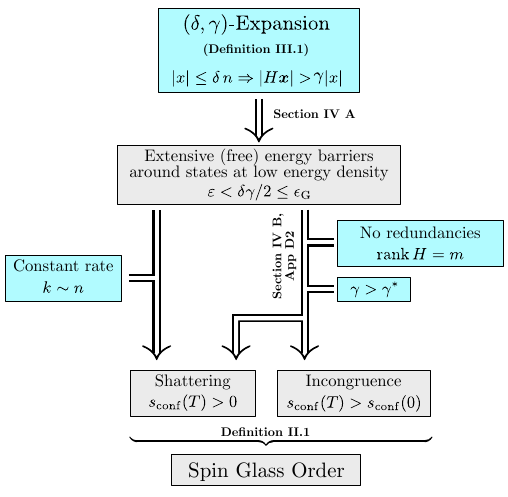}
    \caption{Flowchart sketching the sequence of implications from expansion to spin glass order. Sections discussing the reasoning behind certain implication are indicated at the respective arrow. Taken together, the right half of the flowchart encompasses the proof of \cref{thm:sg_expanders}.}
    \label{fig:proof_flowchart}
\end{figure}
 
Our results apply rigorously to diluted $p$-spin glass on RRGs for sufficiently large (but finite) $p$, which instantiate all our assumptions. However, while unproven, we also expect our assumptions to hold more broadly, including on loopy graphs. The main technical challenge here lies in the fact that the available bounds on the code expansion of these graphs are too weak for our proof to apply\footnote{Understanding of redundancies is also more limited in this case, but we expect this requirement to be less stringent, for reasons we discuss below.} However, we expect (supported by numerical evidence) that the actual value of the expansion coefficient is much larger than what is suggested by these existing bounds. As such, improvements from coding theory would directly lead to an extension of our proof of spin-glass order to these models as well. 

We support our heuristic picture with numerics on two simple models, one on a locally-tree like graph and one on a loopy hyperbolic tesselation [\autoref{fig:intro_graphs}(c)], both with degree $s=7$. The assumptions on strong code expansion that we need to instantiate our theorems cannot be rigorously proved for these models. A range of numerical experiments using Metropolis dynamics simulations on closed, finite graphs nevertheless show close agreement with our analytic pictures. For the locally tree-like model, we also compare the simulation results with those of the 1-RSB cavity method (realized via a recursive population dynamics calculation on trees), finding close agreement. We also show numerically that the models in fact exhibit three distinct thermodynamic regimes as a function of temperature, each corresponding to a qualitatively distinct structure of the Gibbs state.
This is sketched in \autoref{fig:landscape_intro}(c): In between the low temperature spin glass phase and the high-temperature paramagnetic phase lies a regime of \emph{weak ergodicity breaking}: there are (exponentially) many Gibbs state components, but a single component carries almost all the weight.

Tangentially, we note that having new approaches to study spin glasses on loopy graphs is also essential for extending the analysis of spin-glass theory to \emph{quantum} LDPC codes, which are a subject of much recent interest in quantum computation, and where short-loops are necessarily present in the interaction graph. Indeed, we have recently tackled the quantum problem using these new methods, and uncovered a new type of \emph{topological} quantum spin-glass \cite{placke2024tqsg}.

The rest of this paper is organized as follows. We begin, in \autoref{sec:spin_glass_order} by defining Gibbs state decompositions as well as spin-glass order in terms of the properties of these decompositions. Then, in \autoref{sec:Expanders}, we review classical LDPC codes, code expansion, and discuss some examples of their construction. In \autoref{sec:Glassiness}, we give an informal presentation of our main results, and sketch their derivation. Formal definitions and proofs are relegated to \appref{app:complexity}, which is self-contained.
Finally, we present our numerical results in \autoref{sec:numerics}.
We note that \appref{app:TreeIsing} contains a detailed study of the phase diagram of the Ising model on a tree from a memory perspective, which might be of independent interest.

\tableofcontents

\section{Spin Glass Order\label{sec:spin_glass_order}}

In this section, we provide a brief but self-contained review of Spin Glass Order and its distinction from other nontrivial thermodynamic phases, in particular those with (conventional) spontaneous symmetry breaking. 
To this end, we first define what we mean by a ``nontrivial thermodynamic phase'' in terms of a decomposition of the Gibbs state into distinct components, following Ref.~\onlinecite{mezard2009information}. We then discuss how to distinguish different nontrivial phases, including different kinds of spin glass order. We conclude this section by sketching a connection of our definitions to replica symmetry breaking. 

\subsection{Non-uniqueness of Gibbs states\label{sec:gibbs_decomposition}}

We say that system shows non‑trivial order at temperature $T$ if there is more than one Gibbs state at that temperature. The two dimensional Ising model, for example, has \emph{two} Gibbs states 
in the ferromagnetic phase below its critical temperature $T_c$, one with positive and one with negative magnetization; above $T_c$, the Gibbs state is unique. 

There are different ways of formalizing the non-uniqueness of Gibbs states~\cite{georgii2011gibbs,friedli2017statistical}. A standard `Peierls' approach is to consider the distribution of Gibbs states obtained in the bulk of a system with open boundaries, upon taking the  limit of increasing system size with different fixed boundary conditions\footnote{This is closely related to the cavity method, and we will utilize a version of this approach in \autoref{sec:recursive} to compare against our results on locally tree-like graphs.} (for example the `all-up' and `all-down' boundary conditions in the Ising model). 
However, relating models with open and closed boundaries is non-trivial, especially for graphs that go beyond finite-dimensional Euclidean lattices, such as the expander graphs that are at the focus of the present paper. 
We will instead use the approach outlined in Chapter 22 of Ref. \onlinecite{mezard2009information} (see also Ref. \onlinecite{dembo_montanari2010review}) which is applicable to a sequence of finite systems of increasing size without relying on boundary conditions, making it well suited for  the examples we discuss in this paper. 
We now outline how this approach works.

We consider a system of $n$ classical spins,  living on a configuration space $\chi$. The Gibbs distribution assigns probability 
\begin{equation}\label{eq:def:gibbs}
    \pGibbs(\vec\sigma) \propto e^{-\beta \hamil(\vec\sigma)}
\end{equation} to the configuration $\vec\sigma \in \chi$, where $\beta$ is the inverse temperature and $\hamil$ is the system's Hamiltonian .  
In a finite closed system, the Gibbs state is always unique. However, in a non-trivial phase, the distribution decomposes into multiple \emph{components}, corresponding to different subsets of $\chi$, which look macroscopically different and are separated by macroscopic free-energy barriers, in a manner we make precise below.   In the infinite system size limit, each component can then be thought of as a distinct (approximate) Gibbs state in its own right, with diverging lifetime. 
 Again, the two-dimensional Ising model serves to illustrate the idea: below $T_c$, the Gibbs distribution  concentrates around two subsets of configurations with extensively different magnetizations; these 
 contain all the probability weight except for a small correction that vanishes (exponentially) in the limit $n\to\infty$.

\begin{figure}
    \centering
    \includegraphics[width=0.9\columnwidth]{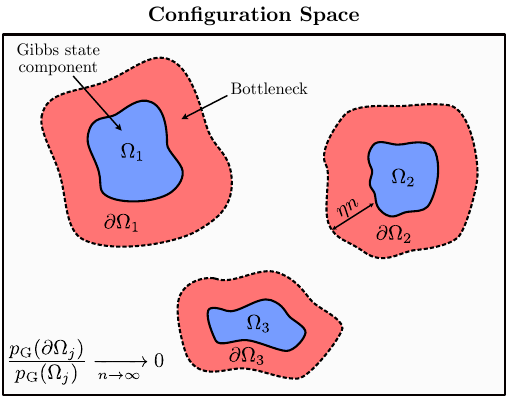}
    \caption{
    Sketch of a system with multiple Gibbs states. Configuration space splits into regions $\Omega_j$ each of which has weight $<1$ in the Gibbs state $\pG\propto e^{-\beta \hamil}$ and each of  which is surrounded by boundaries $\partial\Omega_j$ with negligible relative probability [see \autoref{eq:bottleneck_def}]. These boundaries act as bottlenecks for detailed-balance satisfying dynamics, meaning that each $\Omega_j$ hosts a state whose lifetime diverges wit increasing $n$.
    \label{fig:Gibbs_states}}
\end{figure}

More concretely, consider a subset $\Omega\subset \chi$ of the configuration space. We say that $\Omega$ hosts a Gibbs state component if it is surrounded by a ``bottleneck'' of low (relative) probability configurations. In particular we require that set of states within distance $\eta n$ of $\Omega$ has a probability that is vanishingly small compared to that of $\Omega$ itself: 
\begin{subequations}\label{eq:bottleneck_def}
\begin{align}
\frac{\pGibbs(\partial_\eta \Omega)}{\pGibbs(\Omega)} \leq \Delta(n) \quad \text{ where } \Delta(n) \xrightarrow[n \to \infty]{} 0.
\end{align}
Here $\pGibbs(\Omega) \equiv \sum_{\vec \sigma \in \Omega} \pGibbs(\vec \sigma)$ and 
we defined the boundary of $\Omega$ as\footnote{Here, we here take the width of the bottleneck region ($\eta n$) to be extensive. This is appropriate for the models we consider, but it could be easily relaxed to a weaker dependence on $n$ as long as the width diverges as $n\to\infty$.}
\begin{equation}
  \partial_\eta\Omega \equiv \{\vec\sigma \notin\Omega; 0 < \dist(\vec \sigma,\Omega) \leq \eta n \}.
\end{equation}
\end{subequations}
The distance $\dist(\vec\sigma,\vec\sigma')$ is defined by the Hamming distance, which is the number of spins on which two configurations $\vec\sigma$ and $\vec\sigma'$ differ; likewise, the distance between two subsets of configurations is the smallest distance between their elements, which formalizes the notion of `macroscopically different'. 
Given such a subset $\Omega$ of configurations, we define a Gibbs state component $\pGibbsTr{\Omega}$ as
\begin{equation}
    \pGibbsTr{\Omega}(\vec \sigma) \propto \begin{cases}
         \pGibbs(\vec \sigma) & \text{if} ~ \vec\sigma \in \Omega \\
        0 & \text{otherwise}
    \end{cases},
    \label{eq:truncated_gibbs}
\end{equation}
with a suitable normalization. Further, such an approximate Gibbs state $\pGibbsTr{\Omega}$ is termed \emph{extremal} if it cannot be further decomposed on multiple subregions of $\Omega$ while fulfilling the bottleneck condition \autoref{eq:bottleneck_def}. We consider a system to be nontrivial if it allows for multiple Gibbs state components corresponding to distinct, non-overlapping regions $\Omega$ (\autoref{fig:Gibbs_states}). We review in \appref{app:decomposition_Ising} how precisely to apply this definition to the ferromagnetic Ising model in two dimensions. 

Another way of looking at this definition is that the bottleneck condition \autoref{eq:bottleneck_def} formalizes the notion of a \emph{free-energy barrier} separating $\Omega$ from the rest of the configuration space: the states forming the bottleneck region $\partial_\eta\Omega$ need to have sufficiently high energy compared to the states in $\Omega$, while also need to be sufficiently small in number, such that the bottleneck ratio in~\autoref{eq:bottleneck_def} goes to zero in the limit $n\to\infty$. Formally, we can define the free energy of a subset via $e^{-\beta F(\Omega)} \equiv \sum_{\vec \sigma \in \Omega} e^{-\beta \hamil(\vec \sigma)}$, so that $\Delta(n) = e^{-\beta (F(\partial_\eta \Omega)- F(\Omega))}\rightarrow 0$. Thus,  states initially within $\Omega$ have to overcome this macroscopic free-energy barrier to escape, which implies a diverging lifetime for the Gibbs state component $\pGibbs^{(\Omega)}$.

Indeed, an appealing aspect of this approach is that nontrivial thermodynamics (in the sense of the existence multiple Gibbs state components) immediately implies the existence of nontrivial \emph{dynamical} properties. In particular, the so-called Markov chain bottleneck theorem~\cite{levin2017markov}, implies that the Gibbs state components as defined above are approximate steady states of any sufficiently local Markov process that has the global Gibbs distribution $\pGibbs$~\cite{mezard2009information} as a steady state (e.g. by virtue of obeying detailed balance). In particular, if we initialize the system in the state $\pG^{(\Omega)}$, it will remain close to this initial state up to times that are lower-bounded by $\Delta(n^{-1})$ in~\autoref{eq:bottleneck_def}. 

The existence of multiple Gibbs state components therefore implies that the system serves as a \emph{passive memory} under any local (Monte Carlo) dynamics: different initial states remain distinguishable up to a time scale that diverges as $n\to \infty$. For example, in the $D$-dimensional Ising model, the sign of the global magnetization could serve as a readout for the stored information and is expected to survive for an exponentially long time in (linear) system size (see~\appref{app:decomposition_Ising}). The situation for the ising model on a tree is more complicated, as we discuss in~\appref{app:TreeIsing} but one still ends up with a long-lived memory at sufficiently low temperatures. Flipping this argument on its head, we expect passive memories to generically be associated with nontrivial thermodynamics. Strikingly, it may also be possible to obtain this non-trivial decomposition and passive memory even as the (global) partition function remains analytic at all temperatures \cite{eggarter1974cayley, montanari2006bethe, rakovszky2023physics, hong2024quantum}.

\subsection{Spin Glass Order}

We defined a non-trivial phase by the Gibbs state having multiple, macroscopically distinct components. An obvious next question is in regard to the number and nature of these components. In the two-dimensional Ising model, there are just two components, related to each other by symmetry, exemplifying the idea of spontaneous symmetry breaking\footnote{In $d\geq 3$, and at low enough temperatures, there are in fact other states satisfying the definition in \autoref{eq:bottleneck_def}, e.g. those with a pair of far-separated flat domain walls stretching through the entire system. However, these have large energy and thus give a negligible contribution to the overall Gibbs state; see our discussion below in this section.\label{fn:Dobrushin}} More generally, however, the number of components can grow rapidly (super-polynomially, or even exponentially) with $n$ and they need not be related to each other by any exact symmetries of the classical Hamiltonian $\hamil(\vec\sigma)$. The spin glass models we study exemplify both of these features. 

Going beyond merely counting the number of components, we should also ask how much each of them contributes to the global Gibbs state $\pG$ [\autoref{eq:def:gibbs}]. To define what we mean by this, let us decompose the configuration space into a collection of disjoint subsets $\chi = \biguplus_j \Omega_j \uplus \Lambda$, where the $\{\Omega_j\}$ satisfy \autoref{eq:bottleneck_def} (we might also take them to be extremal, for good measure) while $\Lambda$ is a ``junk" region whose total weight $w_\Lambda = \pGibbs(\Lambda)$ vanishes in the limit $n\to\infty$ (it could contain, for example, contain all the boundaries $\partial_\eta \Omega_j$). We can then write
\begin{subequations}\label{eq:gibbs_decomposition_weights}
\begin{equation}
    \pGibbs(\vec \sigma)
    = \sum_j w_j \,\pGibbsTr{\Omega_j}(\vec \sigma) + w_\Lambda\, \pGibbsTr{\Lambda}(\vec \sigma),
\end{equation}
where 
\begin{equation}
    w_j = \pGibbs(\Omega_j) \equiv \sum_{\vec \sigma \in \Omega_j} \pGibbs(\vec\sigma).
\end{equation}
\end{subequations}
The distribution $\{w_i\}$ is normalized, and gives a more detailed characterization of the phase beyond the number of Gibbs states itself. 

We can conveniently characterize this distribution through its Shannon entropy\footnote{By contrast, the number of Gibbs state components corresponds to the \emph{Hartley entropy}.}, which is called the \emph{configurational entropy} (also referred to as the complexity): 
\begin{equation}\label{eq:sconfig_def}
    S_{\rm conf} \equiv n\,s_{\rm conf} = -\sum_i w_i \log w_i.
\end{equation}
The configurational entropy quantifies the number of distinct components that contribute relevantly to the naive Gibbs state\footnote{For example, in the case of the Ising model in dimensions $d \geq 3$ mentioned in Footnote \ref{fn:Dobrushin}, while the number of Gibbs states increases polynomially with $n$, due to the different possible placement of the flat domain walls, $S_{\rm conf} = \ln{2}$ is independent of $n$, since only the two homogenous uniformly magnetized states give non-vanishing contributions to $\pGibbs$.}.

$S_{\rm conf}$ is a function of both the temperature $T = 1/\beta$ and the system size $n$. If we take the limit $T\to 0$ at a fixed $n$, the configurational entropy simply counts the (logarithm of) the number of exact minimum energy configurations (ground states). In the cases we are interested in, such exact ground state degeneracies stem from symmetries of $\hamil(\vec\sigma)$ and thus $S_{\rm conf}(T=0)$ gives a measure of the degeneracy arising from symmetry-breaking\footnote{More generally, one might have to be more careful about taking limits. For example, in a quantum version of these models, the ground state degeneracy would cease to be exact; in this case, rather than simply taking $T=0$, one would probably want to consider a small, $n$-dependent temperature that correctly captures the size of the approximately degenerate ground state manifold.}. The difference $S_{\rm conf}(T) - S_{\rm conf}(T=0)$, on the other hand, measures the fraction of Gibbs state components that do not contain any of the ground states and are not accounted for by symmetry breaking.

Another interesting characterization of the decomposition is the weight of the largest component, which corresponds to the \emph{configurational min-entropy}:
\begin{equation}\label{eq:sconfmin_def}
    \Sconf^{(\rm min)} \equiv  n\,\sconf^{(\rm min)} = -\log {\rm max}(\{w_i\}).
\end{equation}
and which lower bounds the configurational (Shannon) entropy. If the configurational min-entropy density is finite, this means that no single component carries more than an exponentially small fraction in $n$ of the weight.

Given this, we can now give a definition of spin glass order fit to our purposes:

\begin{definition}[informal]\label{def:SG_def}

     A classical spin Hamiltonian is said to have \emph{spin glass order} at temperature $T$ if it exhibits
     \begin{itemize}
         \item \textbf{Shattering}: no single Gibbs state component contributes more than an exponentially small fraction  to $\pG$:
         \begin{equation}\label{eq:SG_def1}
	         \sconf^{(\rm min)}(T) > 0.
         \end{equation}
         
         \item \textbf{Incongruence}: the number of Gibbs state components with relevant contributions is exponentially larger than the number of ground states
        \begin{equation}
            s_{\rm conf}(T) > s_{\rm conf}(T=0) \geq 0,
            \label{eq:SG_def2}
        \end{equation}
     \end{itemize}
\end{definition} 

We note that our choice of the word ``incongruence" is inspired by (but not identical to) the usage in ~\cite{huse_fisher1987incongruent}, where an incongruent decomposition is one in which different Gibbs states are not even \emph{locally} related by the symmetries of the Hamiltonian.  

This definition corresponds to a particularly strong form of spin glass order. While, as we will see, this is appropriate to the models that we discuss below, it could be broadened significantly to include a much larger class of models that may reasonably be called spin glasses. For example, we could require  that $\Sconf(T)$ and $\Sconf^{\rm (min)}(T)$ only grow as a weaker power $n^\alpha$ with some power $1 > \alpha > 0$. This would include, for example, the Sherrington-Kirkpatrick model which does not satisfy the stricter definition in \autoref{eq:SG_def1}. Similarly, \autoref{eq:SG_def2} could be weakened to the requirement that $S_{\rm conf}(T) - S_{\rm conf}(T=0)$ grows with a weaker power $n^\alpha$, which is sufficient to guarantee that most Gibbs states with relevant contributions at $T$ do not look like any of the ground states. 
However, the stronger condition we use has a simple interpretation. As mentioned, in the models studied below, the extensive ground state degeneracy ($s_{\rm conf}(T=0) > 0$) originates from a large symmetry group (discussed in \autoref{sec:Expanders}). One could then divide the configuration space into ``sectors'' related to each other by symmetry. The condition \autoref{eq:SG_def2}  means that there are exponentially many relevant Gibbs states (and hence, shattering) even within a single such sector. We also note that this definition of spin-glass order may not be relevant to finite-dimensional Euclidean spin-glasses, where the applicability of the landscape picture remains unclear. For example, the competing droplet theory of Huse and Fisher~\cite{huse_fisher1987droplet} predicts only two symmetry-related Gibbs state components whose structure changes chaotically with temperature \cite{bray_moore1987temperature_chaos,fisher_huse1988temperature_chaos}.

\subsection{Weak vs. strong ergodicity breaking\label{sec:weak_vs_strong}}

The discussion of the previous section also leads to another distinction. It might be the case that even when the Gibbs state is not unique so that multiple components fulfilling \autoref{eq:bottleneck_def} exist, only one of them contributes significantly to $\pG$. In this case, $S_{\rm conf} \to 0$ in the limit $n \to \infty$,  so that the naive Gibbs state $\pG$ itself becomes extremal in the thermodynamic limit, within the convex set of all Gibbs states.

Dynamically, in this situation, typical initial conditions (sampled from $\pG$ itself) relax back to the naive Gibbs state. Instead, it is only certain fine-tuned initial states that get stuck in one of the other Gibbs state components. For this reason, we will also refer to this scenario as \emph{weak ergodicity breaking}. This is in contrast to the strong ergodicity breaking that occurs when $S_{\rm conf} > 0$ and multiple components contribute to $\pG$. In the strong ergodicity breaking case, typical initial conditions will relax to one of these components, rather than $\pG$ itself. 

In the systems we study, we will show that there is strong ergodicity breaking at the lowest temperatures, $T < \Tglass$, satisfying \cref{def:SG_def}. However, we find that this low-temperature phase is separated from the trivial high-temperature phase by an intermediate phase with weak ergodicity breaking; see \autoref{fig:landscape_intro}. We denote $\Tdyn$ as the transition temperature between the intermediate and trivial phases, because the nontrivial  Gibbs state components, which first appear at $\Tdyn$, can serve as passive memories. 

A similar phenomenology is known to occur for the Ising model on the Bethe lattice \cite{eggarter1974cayley,muller1974new,matsuda1974infinite,chayes1986mean,bleher1995purity,ioffe1996extremality,mezard2001bethe,mezard2006reconstruction,magan2013memory}, reviewed in ~\appref{app:TreeIsing}. In that case, the (naive) partition function is analytic and trivially factorizes at all temperatures~\cite{eggarter1974cayley} (a feature shared with the expander codes studied in this work), so that equilibrium two-point correlation functions decay exponentially at all temperatures. Nevertheless,  the model exhibits three distinct regimes, separated by two phase transitions at temperatures $\Tglass < \Tdyn$. These are characterized by changes in the decomposition of the Gibbs states, and can be diagnosed by applying different boundary conditions on finite trees.  

At high temperatures, $T > \Tdyn$, the Gibbs state in the bulk of the tree is unique and independent of boundary conditions. Below \Tdyn, this uniqueness is lost and ergodicity is weakly broken: only certain fine-tuned (polarized) BC  lead to a change in the bulk state.\footnote{
That is, below \Tdyn, the Bethe-Peierls equations admit nontrivial solutions (accessible via fine-tuned BCs), hence the designation of \Tdyn~as the ``Bethe-Peierls temperature'' in some works~\cite{eggarter1974cayley,muller1974new}.} 
Finally, below \Tglass, we enter a regime of strong ergodicity breaking where even a typical boundary condition---e.g., one obtained by sampling from the naive Gibbs measure at that temperature, then freezing the boundary---leads to a bulk Gibbs state that is different from the naive (free BC) one. These transitions also have interesting dynamical consequences. 
On a finite tree, it is known that the Ising model is rapidly mixing (i.e.  the mixing time scales similarly to the paramagnetic phase) at all temperatures above \Tglass \cite{martinelli2003ising,Martinelli2004}, seemingly insensitive to the weak ergodocity breaking transition at $\Tdyn$.. Nevertheless,  we show in \appref{app:TreeIsing} that the change in Gibbs state structure at $\Tdyn$ does reflect in the relaxation (memory) time of the bulk magnetization.

\subsection{Connections to replica symmetry breaking}

The definition of spin glass order given above may, on first sight, look quite distinct from that in terms of replica symmetry breaking as first proposed by Parisi as the (full) solution to the Sherrington-Kirkpatrick model \cite{sherrington1975solvable,parisi1979solution1,parisi1979solution2}. 
A connection between these approaches has been made in the literature, as we discuss in the following for completeness (see e.g. Ref. \onlinecite{mezard2009information} for a more comprehensive review).  We note, however, that it has also been pointed out that the translation from the overlap distributon of replicas to the Gibbs state decomposition is not always reliable~\cite{huse_fisher1987incongruent}. It is the latter that is of physical interest and,  indeed,  one of the strengths of our approach is that we work directly with the Gibbs state decomposition. 

The central object to characterize replica symmetry breaking is the overlap function,
\begin{equation}\label{eq:overlap}
    Q_{\alpha\gamma} = \frac{1}{n} \sum_{j = 1}^n \expval{\sigma_j}_{\alpha} \expval{\sigma_j}_{\gamma}
\end{equation}
where the subscripts $\alpha$ and $\gamma$ in the expectation values denote that they are taken with respect to different ``replicas''.
In the original approach \cite{sherrington1975solvable}, the replicas are introduced as literal copies of the system, as a computational trick to evaluate the quenched free energy 
\begin{equation}\label{eq:replica_trick}
    -T F = \overline{\log Z} = \lim_{M\to0} \frac{1}{M}(\overline{Z^M} - 1)
\end{equation}
where $\overline{\bullet}$ denotes the disorder average. For the SK model, one can calculate the disorder-averaged partition function of $M$ coupled copies of the system ($\overline{Z^M}$) and only in the final result one lets $M\to 0$ to obtain the result for $\overline{\log Z}$. 
While one naively expects the $M$ copies to behave identically, i.e. $Q_{\alpha\gamma} = Q$, Parisi showed, however, that for some range of parameters, a physical solution is attained by assuming a more complex structure for $Q_{\alpha\gamma}$ (in the regime where the replica symmetric solution has negative entropy and is hence unphysical \cite{sherrington1975solvable}); for the SK model in particular, the overlap takes a \emph{continuous range} of values \cite{parisi1979solution2}.

In the language of Gibbs states, different replicas are (heuristically) associated with the extremal Gibbs state components introduced above~\cite{parisi2002physical, franchini2023rsbwr}.
The subscript of the expectation value in \autoref{eq:overlap} then denotes the probability distribution with respect to which the expectation is evaluated. Recalling how the Gibbs state decomposition is defined, \autoref{eq:gibbs_decomposition_weights}, each replica hence corresponds to a distinct ergodic component of the system, confined to a distinct region of configuration space $\Omega_\alpha$ [\autoref{eq:bottleneck_def}]. 
The overlap distribution could then in principle be sampled from by the following dynamical experiment. Initial configurations $\vec\sigma^{(\alpha)} \in\Omega_{\alpha}$, $\vec\sigma^{(\gamma)} \in\Omega_{\gamma}$ are drawn according to the global Gibbs state $\pG$, and then evolved in time due to some detailed-balance obeying stochastic dynamics to obtain a time trace $\{\vec\sigma^{(\alpha)}(t)\}_{t=0}^{\mathcal T}$. The Ising variables are averaged over time on each site individually, such that overlap in \autoref{eq:overlap} is that of the resulting `fingerprints'
\begin{equation}
   \expval{\sigma_j}_{\alpha}^{\mathcal T} \equiv \sum_{t=0}^{\mathcal T} \sigma^{(\alpha)}_j(t) 
\end{equation}
For finite systems, the bottleneck condition [\autoref{eq:bottleneck_def}] implies that the dynamics is confined to the region $\Omega_{\alpha}$ during the whole measurement if $\mathcal T \ll \Delta(n)^{-1}$. Hence, assuming ergodicity within the set $\Omega_\alpha$, we can choose the measurement time such that $\mathcal T\to\infty$ as $n\to\infty$ and $\expval{\sigma_j}_{\alpha}^{\mathcal T} \to \expval{\sigma_j}_{\alpha}$.

Beyond the distinction of strong and weak ergodicity breaking mentioned before, the exact shape of the overlap distribution $P(Q)$ distinguishes a full wealth of different spin glass orders. The arguably simplest case of RSB, 1-step RSB, is the case where the overlap takes exactly two values, depending on whether the replica indices are identical or not
\begin{equation}\label{eq:onne_rsb}
    Q_{\alpha\gamma} = \delta_{\alpha\gamma} (Q - Q') + Q'
\end{equation}
which has been argued by the cavity method to be the situation realized in models very similar to those we consider below \cite{franz2001ferromagnet,franz2002dynamic,dembo_montanari2010review}.
The case where the overlap takes up to $k$ different values and is referred to as $k$-step RSB, with the case of the continuous distribution as realized in the SK model called \emph{full} replica symmetry breaking. In this latter case, one can further show that any nontrivial distribution must have a very particular structure called \emph{ultrametricity} \cite{parisi2000ultrametricity}.

We reiterate, however, that in this work, we will not rely on the `replica trick' or overlap distributions [\autoref{eq:replica_trick}] to infer conclusions about the structure of distinct ergodic components; instead we \emph{directly} show the existence of a decomposition of the Gibbs state with nontrivial structure.

\section{Classical Expander Codes}\label{sec:Expanders}

In this section, we provide a general introduction to the models which are the subject of this paper, which are based on a class of error correcting codes called \emph{expander codes} \cite{sipser_spielman1996, richardson2008modern,guruswami2019essential}. We describe these codes, and their associated Hamiltonians. 

\subsection{Low Density Parity Check Codes}\label{sec:ldpc}

The goal of error correction is to store classical information, concretely $k$ \emph{logical bits}, in a way that protects it against noise. 
This can be done by means of redundancy, that is we embed the logical bits into a larger number of $n$ \emph{physical bits} by identifying $2^k$ \emph{codewords} (the states of the logical bits) among the $2^n$ possible bit strings of this larger configuration space. For the information to be stored robustly, these codewords should not be `close', that is they must differ in many different positions. The smallest number of bits that have to be flipped between any two codewords is called the \emph{distance} of the code, denoted by $d$.
These three main characteristics of a code are often denoted as a triplet, and we say that a code with $n$ bits, $k$ logical bits and distance $d$ is a $[n, k, d]$ code.
The perhaps simplest example is the \emph{repetition code}, which has two codewords ($k=1$) defined as the all-zeros and all-ones bitstrings, which have distance $d = n$. 
The repetition code hence is a $[n, 1, n]$ code.

In general\footnote{We here consider only so-called linear codes, see below for a more technical definition.}, the codewords are selected by defining $m$ so-called  \emph{parity checks} $C_i$. Each parity check specifies a subset of bits, which imposes a constraint: codewords are defined as those bitstrings in which the sum of variables in all these subsets is even\footnote{Technically, the codewords are hence the solutions of a specific instance of the XORSAT problem.}.
We are interested here in so called \emph{low-density parity check} (LDPC) codes, in which each bit interacts only with finitely many others: that is, every check $C_i$ only contains a bounded number of bits, and each bit is contained only in a bounded number of checks.

 To make the connection to statistical physics, it is natural to identify the $n$ bits with Ising variables $\sigma_j =\pm1$, $j=1\dots n$.  The even-parity constraints can be enforced energetically, by defining a Hamiltonian of the form
\begin{equation}\label{eq:Ising}
    \hamil = -\frac{1}{2} \sum_i^m \prod_{j \in C_i} \sigma_j + \frac{m}{2}.
\end{equation}
 The LDPC condition ensures that each interaction acts on finitely many spins, and that the number of interaction terms is (at most) extensive in $n$. The Hamiltonian is therefore sparse. Equivalently, there exists a finite-connectivity (possibly non-Euclidean) graph on which the Hamiltonian is local. 
 The $2^k$ codewords are the ground states of this Hamiltonian. As discussed below, this ground state degeneracy reflects spontaneous symmetry breaking of the symmetry group, $\mathbb{Z}_2^k$.

We note that the set of checks is not unique for a given set of codewords:
the repetition code, for example, is realized by the (ferromagnetic, nearest-neighbor) Ising model on an arbitrary graph. The two ground states (codewords) are the symmetry broken `all-up' and `all-down' states, and the Hamiltonian has a $\mathbb{Z}_2$ spin flip symmetry.

\newcommand{\errorvec}{\ensuremath{\vec \varepsilon}}

To make connections  to the computer science literature, we now define the above in more technical terms.  A classical code is a set of bitstrings of length $n$. In this paper, we consider \emph{linear codes} (or parity check codes), where the subset is defined as a \emph{subspace} $\codespace \subset \mathbb{F}_2^n$. Here and below, $\mathbb{F}_2 = \{0,1\}$ is the field with two elements, and arithmetic in $\mathbb{F}_2$ is defined mod 2.
The code subspace is the kernel of the \emph{parity check matrix} $H\in \mathbb{F}_2^{m\times n}$, i.e. $\Hcheck\codeword  = 0$ for all codewords $\codeword\in \codespace$ and $d = \min_{\vec z \in \mathcal C}\abs{\vec z}$, where $\abs{\vec x} \equiv \sum_j x_j$ denotes the Hamming weight. 
The matrix $H$ is related to the checks in \autoref{eq:Ising} by setting $\Hcheck_{ij} = 1$ if $j\in C_i$ and $\Hcheck_{ij} = 0$ otherwise: each row of $H$ corresponds to exactly one parity check and encodes its support. For bitstrings that are not codewords $\vec x \notin\mathcal C$, $ H\vec x = \vec s$ and the vector $\vec s \in \mathbb{F}_2^m$ is called the \emph{syndrome} of $\vec x$. 
If we identify a general bitstring $\vec x$ with a configuration of Ising spins $\vec\sigma$ by setting $\sigma_j = 1 - 2x_j\,\forall j$, then the nonzero entries in the syndrome of $\vec x$ correspond exactly to violated terms in \autoref{eq:Ising} in the state $\vec \sigma$. This means in particular that for the choice of normalization above
\begin{equation}\label{eq:Ising_energy_syndrome}
    \hamil(\vec\sigma) \equiv \abs{H \vec x}.
\end{equation}
The number of logical bits of the code is given by the dimension of the kernel, $k=\dim\mathcal C = \ker H$.
We are ultimately interested in \emph{families} of codes, $\Hcheck_n$, with increasing length $n$. In fact, many properties of ``codes'' should really be thought of as properties of a family. For example, we say that a family of codes satisfies the low-density parity check (LDPC) constraint if there exists two constants, $\wcheck$, $\wbit$, independent of $n$, such that $\sum_i(\Hcheck_n)_{ij} < \wbit$ and $\sum_j(\Hcheck_n)_{ij} < \wcheck$. In other words: the parity check matrix is sparse.

By design, the model defined in \autoref{eq:Ising} has $2^k$ ground states. Because of the underlying linear structure (on $\mathbb{F}_2^n$) of the problem, all these ground states can be written as the sum (mod 2) of $k$ binary vectors, which are the basis vectors of $\ker \Hcheck$.  In the Hamiltonian language, these ``logical bit flips'' ($\codeword_{\ell=1\dots k}$ such that $\linspan(\codeword_1, \dots, \codeword_\ell) = \mathcal C$), correspond to exact symmetries of the Hamiltonian, which thus has a symmetry group $\mathbb{Z}_2^k$ that is spontaneously broken in the ground state. 
We will denote spin configurations corresponding to codewords, that is the ground states of $\hamil$, as $\codestate$ (see e.g. \autoref{fig:landscape_intro}).

Another property that we will rely on later in the paper (in \autoref{sec:Glassiness}) is the absence of \emph{redundancies}. A redundancy is subset of checks such that their product is trivial, that is $\prod_{i\in R}\prod_{j\in C_i}\sigma_j = 1$. In the absence of such redundancies, under an appropriate, non-local change of variables $\sigma_j \to \tau_j$ the Hamiltonian of the system maps to that of a trivial paramagnet $\hamil = \sum_j \tau_j$, which implies that its partition function $\mathcal{Z}(T) \equiv \sum_{\vec{\sigma}} e^{- E(\vec{\sigma})/T}$ is analytic at all nonzero temperatures~\cite{eggarter1974cayley, montanari2006bethe, weinstein2019universality, rakovszky2023physics, hong2024quantum, yoshida2011feasibility}.
We will show, in \autoref{sec:Glassiness}, that nevertheless, $\hamil$ at low temperatures can realize a nontrivial phase in the sense of the Gibbs state decomposition described in \autoref{sec:gibbs_decomposition}, similar to what is observed in the Ising model on the tree (see \appref{app:TreeIsing}).
In the parity check matrix $H$, a redundancy corresponds to a linear dependency of rows, and the number of independent redundancies is given by its rank deficiency $r \equiv m - \rank H = m - (n-k)$.
The absence of redundancies also directly implies that model has \emph{point like excitations}, that is each term in $\hamil$ can be violated independently. This is because if $H$ has full rank, then the linear equation $H\vec x = \vec s$ has $2^k$ solutions $\vec x$ for \emph{any} $\vec s$. However, these excitations may not be mobile, in the sense that it may be necessary to flip a large number of spins (in general a finite fraction) to 'move' the excitation even to neighboring checks\footnote{Two checks are neighbors if they share a bit.}. This may happen due to the structure of energy barriers in the problem, which we now discuss. 

\subsection{Expander Codes}

\begin{figure*}
    \centering
    \includegraphics{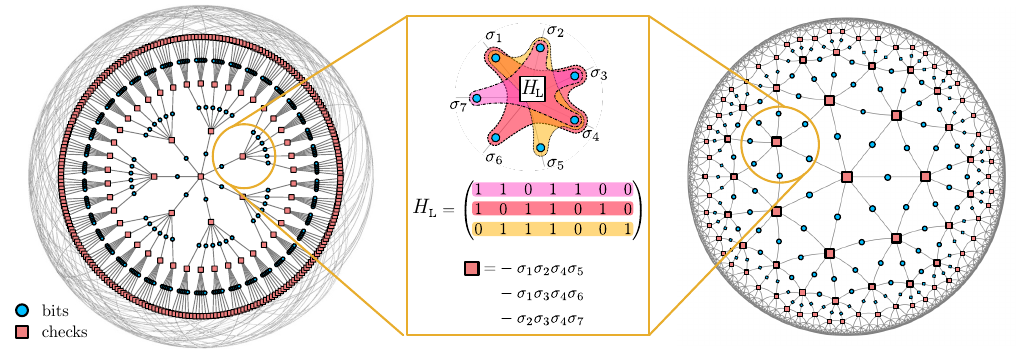}
    \caption{Tanner-Ising models. The models considered in this work are defined on expander graphs, e.g. locally tree-like expanders (left) or regular tessellations of the hyperbolic plane (right). Ising variables (blue circles) are placed on the edges of the graph and interactions (red squares) are defined on its vertices. The interaction is ferromagnetic and given in terms of a local code $\Hcheck_{\rm L}$ (depicted is the Hamming $[7, 4, 3]$ code) such that energy on each vertex is minimized when the Ising variables on incident edges form a codeword of $\Hcheck_{\rm L}$. 
    }
    \label{fig:tanner_codes}
\end{figure*}

The LDPC condition allows for the definition of a generalized notion of locality with respect to a (finite degree) graph, which need not be geometrically local in any finite-dimensional euclidean space. In fact, the models studied in this paper are all defined on \emph{expander graphs} (see \appref{app:constructions} for formal definitions and examples). 
Models defined on such graphs may have properties impossible to realize in any finite euclidean dimensions: for example they can have a number of ground states that scales exponentially with $n$ (i.e. $2^k$ with $k\sim n$), with all these ground states separated by extensive distance ($d\sim n$).  The codes corresponding to such models, with $[n, k, d] = [n, \Theta(n), \Theta(n)]$, are called \emph{good} codes.. While expansion of the underlying graph is a necessary condition to realize a good code, it by no means sufficient (for example, the Ising model on \emph{any} graph, has $k=1$). 

We will in the following study codes with an even stronger property, that is \emph{code expansion}. Informally, this demands that upon flipping bits starting from a ground state, the energy of the configuration grows \emph{linearly} with the number of flips, until a finite fraction of all bits have been flipped (\autoref{fig:lower_bound_sketch}). This ensures extensive energy barriers ($\propto n$) around all $2^k$ ground states. 

Code expansion is a very striking feature: in particular, it is impossible to realize in local models in any Euclidean geometry, where the vanishing bulk to boundary ratio means that some excitations will always have a sub-linear growth of energy with their size. For example, in an Ising model in $D$ dimensions, the energy cost of flipping a domain of $O(n)$ spins, \emph{i.e.} the size of the boudary domain wall, can scale as $O(n^{(D-1)/D})$.   Expansion of the underlying graph is thus necessary (but not sufficient) to ensure code expansion. Even in non-redundant cases where excitations are point-like, expansion guarantees that individual excitations cannot be locally moved  without proliferating more and more excitations.  
We present an intuitive picture in \autoref{sec:expander_examples} below for how the combination of an expanding graph and local constraints can produce code expansion. 

Formally we define code expansion in terms of the parity check matrix:
\begin{definition}
     A linear code defined by a parity check matrix $H\in \mathbb{F}_2^{m\times n}$ is called $(\delta, \gamma)$-expanding if
\begin{equation}
    \abs{\vec x} < \delta(n) ~\Rightarrow~ \abs{\Hcheck \vec x} > \gamma \abs{\vec x}.
\label{eq:expansion}
\end{equation}
for $\delta(n)=\delta\cdot n$ and some $\delta, \gamma > 0$.
\end{definition}
This directly corresponds to the above mentioned statement about energy barriers surrounding code words by \autoref{eq:Ising_energy_syndrome} and linearity\footnote{Use that $H(\vec{y} \oplus \vec{z}) = \Hcheck\vec{y} \oplus \Hcheck\vec{z}$, along with the fact that codewords are in the kernel, $\Hcheck\vec{z} = \vec{0}$.}.
Clearly, expansion implies a lower bound on the code distance, $d \geq \delta n$. We will refer to codes that are $(\delta,\gamma)$-expanding with $\gamma,\delta > 0$ as expander codes. 
Expansion is a special case of a more general property called linear confinement, or robustness\footnote{For experts, we note that code expansion as defined here does not imply local testability with constant \emph{soundness}, since we allow for large errors to have small syndromes.}, where the function $\delta(n)$ in \autoref{eq:expansion} is allowed to scale sublinearly with $n$.

Beyond providing a good code distance, expansion also has important implications for \emph{decoding}. To perform error correction on a linear code in practice, one needs an efficient way to decode, that is map a corrupted codeword $\codeword \oplus \errorvec$ back to the original codeword $\codeword$, which is possible if $\abs{\errorvec} < d/2$ (we use $\oplus$ to explicitly denote addition modulo two, and $\errorvec$ denotes an error corresponding to some spin flips of the codeword). As Sipser and Spielman \cite{sipser_spielman1996} showed, expander codes admit a very simple decoding algorithm which performs this task in linear time (in $n$) for errors up to a finite fraction of the distance. Their decoder, called \texttt{flip}, takes the form of a local greedy algorithm that resembles a zero-temperature Metropolis update\footnote{Note that when making this connection, ``linear time'' means a constant number of sweeps or constant ``parallel time''.}. Bits are flipped one at a time, in a way that ensures that the number of violated checks decreases with each flip. $(\gamma,\delta)$-expansion ensures that this algorithm corrects all errors with $\abs{\errorvec} \leq \delta n /2$ in $\mathcal{O}(n)$ time. From a physical perspective, the linear and extensive energy barriers allow efficient ``cooling" of a corrupted state back to the ground state.

\subsection{Examples of Expander Codes\label{sec:expander_examples}}

We now briefly discuss two broad classes of expander codes.
Deferring the discussion of rigorous guarantees on the parameters $\delta$ and $\gamma$, to \appref{app:constructions}, and the introduction of specific examples to \autoref{sec:numerics}, we focus here on the general construction. The two classes are \emph{Gallager codes} \cite{gallager1960thesis,gallager1962low}, where the parity check matrix $\Hcheck$ is chosen at random with fixed row and column weight, and Sipser-Spielman codes \cite{sipser_spielman1996} where bits are placed on the edges on an expander graph and then subjected to local constraints on each vertex. 
Note that Gallager codes correspond to what are called diluted ferromagnetic $p$-spin models in the spin glass literature \cite{franz2001ferromagnet, mezard2009information}. Since spin glass models based on Sipser-Spielman codes have, to the best of our knowledge, not been discussed in the literature before, we here discuss the general construction and an intuitive picture for their code expansion in some detail. 

The construction is sketched in \autoref{fig:tanner_codes}. Sipser-Spielman codes  are themselves part of a larger family of code constructions called \emph{Tanner codes} \cite{tanner1981recursive} which Sipser and Spielmann placed on expander graphs to construct large families of non-random good expander LDPC codes. We call the models derived from these codes \emph{Tanner-Ising models}.

Given an $s$-regular graph $G$ (i.e. a graph in whcih every vertex has degree $s$) one bit is placed on each edge, and each vertex is identified with a \emph{local code} $\Hcheck_{\rm L}$ defined on $n_{\rm L} = s$ bits. The parity checks of all local codes together define the  \emph{global code}. The codewords of the global code are those configurations of bits such that, for each vertex, the bit configuration on the incident edges is a codeword of the local code. To define a family of Tanner codes, we usually consider a family of $s$-regular graphs with increasing size, leaving the local code fixed. Note that such a family is naturally LDPC since interactions are strictly local on the graph.
One can show that, given certain conditions on the local code, the global code is expanding in the sense of \autoref{eq:expansion} if $G$ is an expander graph (see \cref{thm:expansion_sipser_spielman} for the rigorous guarantee).

\begin{figure}
    \centering
    \includegraphics{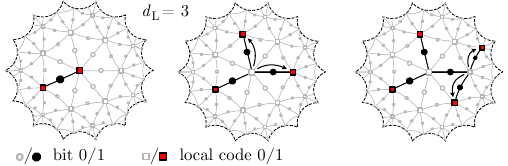}
    \caption{Immobile excitations in Tanner-Ising models. Flipping a single bit starting from the all-zero state (or any ground state) violates a pair of vertex constraints. If the local code has distance $d=3$, satisfying either one requires flipping at least two additional bits around that vertex. This forces excitations to proliferate as bits are flipped.}
    \label{fig:sscodes_excitation_split}
\end{figure}

Tanner-Ising models allow a very physical picture for the origin of expansion, which is illustrated in \autoref{fig:sscodes_excitation_split}. As shown on the left of the figure, flipping a single bit starting from the ground state (here chosen to be the all-zeros state for simplicity) violates a pair of vertex constraints. Since the vertex constraints take the form of a local code, if the local code distance $d_{\rm L}$ is larger than two ($d_{\rm L} = 3$ in the picture), then de-exciting either of the excited codes requires us to flip at least $d_{\rm L}-1 > 1$ additional bits around it. These however will trigger even more checks around them, due to the expanding geometry, and hence the number of excitations proliferates under local moves.
Remarkably, Sipser and Spielman showed that if the underlying geometry is an expander graph that fulfills certain conditions (see \appref{app:constructions} for details) one can guarantee that it is only after an extensive number of spins have flipped that excitations can start recombining enough to lower the energy.
In this case, the combination of a simple local constraint and the expanding geometry guarantees extensive barriers. 
Physically, expansion also produces an extreme immobility of excitations, because local excitations cannot be individually moved without proliferating more excitations, for the same reason as discussed above. This may be interpreted as an extreme  form of (classical) fractonic behavior, with much stronger barriers than has been observed in other constrained classical spin models in Euclidean geometries that have been studied in the context of fractons and spin-glasses \cite{newman_moore1999,garrahan2011kinetically,gromov2024fractons}.

For our rigorous results to apply, we will need that the code expansion parameter $\gamma$ in \cref{eq:expansion} is sufficiently large. The necessary values are known to be realized for Gallager codes with sufficiently large (but finite) bit-degree (see \appref{app:constructions}). However, for Sipser-Spielman codes, despite the  intuitive picture above, it is technically challenging to \emph{rigorously} establish large enough lower bounds on $\gamma$~\cite{breuckmann2021balanced}. However, we expect that in practice the expansion parameter is sufficiently strong for our results to apply even for these models. We verify this expectation numerically in \autoref{sec:numerics}.

\section{Spin Glass Order from Expansion}\label{sec:Glassiness}

The properties of good expander codes discussed in the previous section, namely an exponential number of ground states ($k\sim n$), large rearrangements of the spin configuration needed to move between these ground states ($d\sim n$), and immobile excitations, already hint at a complex energy landscape and nontrivial behavior at low temperature. Indeed, extensive energy barriers around ground states also implies extensive free-energy barriers at low enough temperatures. Thus, the 
the system breaks ergodicity at low temperatures, so that local dynamics initialized in a ground state component stays close to it \cite{montanari2006, hong2024quantum}. This already implies that the system is thermodynamically non-trivial in the sense of supporting multiple Gibbs state components as in  \autoref{sec:gibbs_decomposition}.

Our definition of spin glass order, however, goes beyond just stability of the ground states. As a reminder, for this we demand that the Gibbs state (i) shatters, i.e. no single component carries more than an exponential fraction of the weight, and (ii) it is incongruent, i.e. the number of relevant components grows with temperature, so that there is shattering \emph{within} each symmetry sector. 
While the former already follows from the exponential number of ground states and their stability, the latter property is less obvious to show.

We establish shattering and incongruence by first characterizing the energy landscape. We will then use this to derive properties about the free-energy landscape i.e. the Gibbs state decomposition. A strength of our approach is that we will \emph{directly} and \emph{explictly} make statements about the energy landscape and Gibbs state decomposition, providing a transparent derivation of how the structure of energy barriers in expander codes produces complex landscapes. 

We begin with informal statements of our main results, and then sketch the key steps in the corresponding proofs in the subsequent subsections. We refer the reader to \appref{app:complexity} for precise definitions and rigorous derivations.

We consider the structure of the set of states below an energy-density cutoff $\epsilon$
\begin{equation}
    \Omega(\epsilon) = \{ \vec x ~\text{such that}~ E(\vec x) < \epsilon n\}.
\end{equation}
We will show that this set, for certain models at sufficiently low $\epsilon$, can be decomposed into exponentially many disjoint and far-separated ``clusters" (\autoref{fig:landscape_intro}). Two states $\vec x$ and $\vec y$ are in the same cluster if and only if they are connected by a path in configuration space that flips (at most) a small fraction of bits at any step, and no intermediate state along the path has energy larger than $\epsilon n$. Each cluster is therefore associated with a local minimum in the energy landscape: it is impossible to traverse from one cluster to another via a path of small moves without passing through an extensive energy barrier.

Then, we can write 
\begin{equation}
    \Omega(\epsilon) = \biguplus_j C_j,
\end{equation}
with distinct clusters separated by extensive distance (${\rm dist}(C_i, C_j) > \xi n$ for some $\xi > 0$). 
Remarkably, one can show using only expansion and linearity of the energy functional, $E(\vec x) \equiv \abs{H\vec x}$, that under two assumptions: (i) sufficiently strong expansion $\gamma > \gamma^*$ and (ii) $H$ of full rank, no single cluster carries more than an exponentially small fraction of the weight. Here $H$ is the parity check matrix (\autoref{sec:ldpc}). 
We can then define a configurational entropy of the decomposition denoted by $\mathfrak s_{\rm conf}(\epsilon)$ and $\mathfrak s_{\rm conf}^{(\rm min)}(\epsilon)$ analogously to \autoref{eq:sconfig_def} and \autoref{eq:sconfmin_def}, respectively, but with the weights $w_j$ replaced by $\mathfrak w_i = \abs{C_i} / \abs{\Omega(\epsilon)}$. Stated in terms of this, our first main result is a lower bound on these entropies:

\begin{theorem}[Complexity of the Energy Landscape, Informal]\label{thm:landscape_expanders}
Consider a family of LDPC codes $H_n\in\mathbb{F}_2^{m\times n}$ with rate $r = k / n$, and that
\begin{enumerate}
    \item is $(\delta, \gamma)$ expanding with $\gamma > \gamma^*$
    \item has no redundancies, i.e. $\rank H = m$
\end{enumerate}
where $\gamma^*$ is a constant that depends only on $r$.
Then for sufficiently small $\epsilon$, the decomposition of $\Omega(\epsilon)$ shows \textbf{shattering} 
\begin{equation}
    \mathfrak s_{\rm conf}^{(\rm min)}(\epsilon) > 0
\end{equation}
and \textbf{incongruence}.
\begin{equation}
    \mathfrak s_{\rm conf}(\epsilon) > \mathfrak s_{\rm conf}(0) = r \geq 0.
\end{equation}
\end{theorem}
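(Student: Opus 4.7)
The plan is to reduce the theorem to a comparison of two entropic quantities: an upper bound on the size of every cluster $C_j$ coming from expansion, and a lower bound on $\abs{\Omega(\epsilon)}$ coming from the no-redundancy (full rank) assumption. The ratio $\abs{\Omega(\epsilon)}/\max_j \abs{C_j}$ lower bounds $e^{n\, \mathfrak s_{\rm conf}^{(\min)}(\epsilon)}$; shattering follows when this is exponentially large, while incongruence follows when it strictly exceeds $2^{rn} = e^{n\,\mathfrak s_{\rm conf}(0)}$, the contribution from ground-state clusters alone.

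For the upper bound on cluster size, I would exploit linearity together with expansion to produce a Hamming \emph{gap} in $\Omega(\epsilon)$. For any $\vec x, \vec y \in \Omega(\epsilon)$, the symmetric difference $\vec u = \vec x \oplus \vec y$ obeys $\abs{\Hcheck \vec u} = \abs{\Hcheck \vec x \oplus \Hcheck \vec y} \leq \abs{\Hcheck \vec x} + \abs{\Hcheck \vec y} \leq 2\epsilon n$ by linearity and the triangle inequality for Hamming weight. Expansion then yields a dichotomy: either $\abs{\vec u} \geq \delta n$, or $\abs{\vec u} < \abs{\Hcheck \vec u}/\gamma \leq 2\epsilon n/\gamma$. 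Choosing the path step size $\eta < \delta - 2\epsilon/\gamma$ that defines cluster connectivity prevents any connecting path from jumping across the gap, so each $C_j$ sits inside a Hamming ball of radius $2\epsilon n/\gamma$ around any of its members, giving $\abs{C_j} \leq 2^{n H_2(2\epsilon/\gamma) + o(n)}$, where $H_2$ is the binary entropy. For the lower bound on $\abs{\Omega(\epsilon)}$, full rank implies $\Hcheck : \mathbb{F}_2^n \to \mathbb{F}_2^m$ is surjective with exactly $2^{rn}$ preimages per syndrome, so
\begin{equation*}
\abs{\Omega(\epsilon)} = 2^{rn} \sum_{i = 0}^{\lfloor \epsilon n \rfloor} \binom{m}{i} \geq 2^{rn + (1-r)\,n\,H_2(\epsilon/(1-r)) - o(n)},
\end{equation*}
valid whenever $\epsilon \leq (1-r)/2$. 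Combining the two bounds gives the central estimate
\begin{equation*}
\mathfrak s_{\rm conf}^{(\min)}(\epsilon) \;\geq\; r + (1-r)\, H_2\!\left(\tfrac{\epsilon}{1-r}\right) - H_2\!\left(\tfrac{2\epsilon}{\gamma}\right) - o(1).
\end{equation*}
Shattering is then immediate since the right-hand side is at least $r \geq 0$; for $r = 0$ one instead uses the strict inequality derived next. Incongruence reduces to $(1-r)H_2(\epsilon/(1-r)) > H_2(2\epsilon/\gamma)$ for some small $\epsilon > 0$, and the leading small-$\epsilon$ asymptotics $(1-r)H_2(\epsilon/(1-r)) \sim \epsilon \log(1/\epsilon)$ and $H_2(2\epsilon/\gamma) \sim (2\epsilon/\gamma)\log(1/\epsilon)$ deliver this precisely when $\gamma > 2$. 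Finally, the zero-energy equality $\mathfrak s_{\rm conf}(0) = r$ holds because $\Omega(0) = \codespace$ is the set of $2^{rn}$ codewords, any two separated by Hamming distance $\geq \delta n$ (a nonzero codeword $\vec z$ has $\abs{\Hcheck \vec z} = 0$, forcing $\abs{\vec z} \geq \delta n$ by expansion), so each codeword is its own cluster carrying uniform weight $2^{-rn}$.

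The main obstacle I anticipate is upgrading the asymptotic condition $\gamma > 2$ into a sharp $r$-dependent threshold $\gamma^*(r)$ valid \emph{uniformly} over a concrete interval of $\epsilon$, rather than merely in the $\epsilon \to 0$ limit. This requires a coupled optimization over (i) the path step size $\eta$, which must satisfy $\eta < \delta - 2\epsilon/\gamma$ to preserve the Hamming gap while still being large enough to define a meaningful notion of cluster connectivity, (ii) the energy-density cutoff $\epsilon$, which must be small enough that both entropy bounds are in their monotone regimes and that the gap $\delta - 2\epsilon/\gamma$ remains open, and (iii) the rate $r$, which shrinks the admissible window $\epsilon \leq (1-r)/2$ as $r \to 1$ and hence modulates the trade-off between the symmetry-breaking contribution $r$ and the finite-energy contribution $(1-r)H_2(\epsilon/(1-r)) - H_2(2\epsilon/\gamma)$. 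The threshold $\gamma^*(r)$ is then the smallest $\gamma$ for which this admissible region is nonempty. A secondary subtlety is that the counting argument directly produces only the configurational \emph{min}-entropy; incongruence for the Shannon entropy is then inferred via the standard inequality $\mathfrak s_{\rm conf} \geq \mathfrak s_{\rm conf}^{(\min)}$.
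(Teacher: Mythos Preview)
Your proposal is correct and follows essentially the same approach as the paper: bound each cluster's size by a Hamming ball of radius $2\epsilon n/\gamma$ via the expansion-induced dichotomy (this is the paper's Lemma on cluster diameter), count $\abs{\Omega(\epsilon)}$ exactly via full rank, and compare binary-entropy exponents to lower-bound $\mathfrak s_{\rm conf}^{(\min)}$. Your explicit small-$\epsilon$ asymptotic yielding the threshold $\gamma > 2$ is a nice sharpening that the paper leaves implicit (it only asserts existence of some $\gamma^*(r)$), and your worry about a uniform-in-$\epsilon$ threshold is unnecessary since the theorem only requires the conclusion for \emph{some} sufficiently small $\epsilon$.
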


The above is a direct characterization of the energy landscape. Since the clusters are local energy minima, Theorem \autoref{thm:landscape_expanders} means that, at sufficiently low energy density, the landscape is comprised of exponentially many distinct local energy minima, and most of these minima do not contain ground states. No single minimum contains more than an exponentially small fraction of the low-energy density configurations. 

While the landscape is interesting in its own regard, in statistical mechanics we are ultimately interested in characterizing the structure of the Gibbs state. Here, following the discussion in \autoref{sec:gibbs_decomposition}, we will be concerned with the structure of the Gibbs state decomposition at low temperatures.
To this end, we leverage the decomposition of the landscape at low cutoffs described above, together with the fact that the Gibbs state is supported almost entirely on a microcanonical shell around the average. We can then show that at low temperatures, no Gibbs state component carries more than an exponentially small fraction of the total Gibbs weight.
Similarly to the above, this implies that for a particular Gibbs decomposition, the configurational entropy density [\autoref{eq:sconfig_def}] is finite and an increasing function of the temperature.
s
\begin{theorem}[Spin Glass Order, Informal]\label{thm:sg_expanders} 
Consider a family of codes $\Hcheck_n\in\mathbb{F}_2^{m\times n}$ with rate $r=k/n$, and that 
\begin{enumerate}
\item is ($\delta$, $\gamma$) expanding with $\gamma > \gamma^*$,
\item has no redundancies, i.e. $\rank \Hcheck_n = m$,
\end{enumerate}
where $\gamma^*$ is a constant that depends only on $r$.

Then the classical spin model based on this family [\autoref{eq:Ising}] at sufficiently low temperature realizes spin glass order in the sense of \cref{def:SG_def}. 
\end{theorem}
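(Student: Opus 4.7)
The plan is to lift the characterization of the energy landscape in \cref{thm:landscape_expanders} to a characterization of the low-temperature Gibbs state: extensive energy barriers surrounding each cluster $C_j$ become extensive free-energy barriers at low $T$, and the microcanonical shattering and incongruence of $\Omega(\Eglass)$ transfer to Gibbs weights once $\pG$ concentrates on that energy window. I would use the clusters furnished by \cref{thm:landscape_expanders} directly as the skeleton for the Gibbs decomposition of \autoref{eq:gibbs_decomposition_weights}.

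Concretely, I would first fix $\Eglass$ satisfying the hypotheses of \cref{thm:landscape_expanders}, giving a decomposition $\Omega(\Eglass) = \biguplus_j C_j$ with $\dist(C_i, C_j) \geq \xi n$ for some $\xi > 0$. The no-redundancy assumption makes the global partition function factorize: under the change of variables $\vec{s} = \Hcheck\vec{x}$ the Hamiltonian becomes $\sum_i s_i$ on $m$ independent Bernoulli bits with expected energy density $(1-r)e^{-\beta}/(1+e^{-\beta})$, so standard concentration gives $\pG(\Omega(\Eglass)) = 1 - e^{-\Omega(n)}$ for $\beta$ sufficiently large. I would then define Gibbs components $\Omega_j$ by thickening each $C_j$ by a Hamming skin of width $\eta n$ with $\eta < \xi/2$, so that components remain disjoint and together cover the configuration space up to a $\pG$-negligible remainder that plays the role of the junk region $\Lambda$.

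Verifying the bottleneck condition \autoref{eq:bottleneck_def} then proceeds as follows: for $\eta < \xi/2$, $\partial_\eta \Omega_j$ lies outside every $C_i$ and hence consists only of states of energy at least $\Eglass n$; its size is bounded by a Hamming-ball estimate, $\lvert \partial_\eta \Omega_j \rvert \leq \lvert C_j \rvert \, 2^{h(\eta) n}$ (with $h$ the binary entropy), while $\pG(\Omega_j)$ is lower-bounded by the contribution of the lowest-energy states in $C_j$ at density $\epsilon_j^* < \Eglass$. The bottleneck ratio is then $\lesssim 2^{h(\eta) n} e^{-\beta (\Eglass - \epsilon_j^*) n}$, which vanishes for $\eta$ small and $\beta$ large enough. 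For the entropy bookkeeping, $\pG$ restricted to $\Omega(\Eglass)$ concentrates on a microcanonical shell of width $o(n)$, on which the Boltzmann factor is essentially uniform, so $w_j \approx \lvert C_j \cap \mathrm{shell}\rvert / \lvert\mathrm{shell}\rvert$; the shattering part of \cref{thm:landscape_expanders} then yields $\sconf^{(\rm min)}(T) > 0$ and its incongruence part yields $\sconf(T) > r = \sconf(T=0)$.

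The main obstacle lies in this final step. \cref{thm:landscape_expanders} controls cluster fractions within $\Omega(\Eglass)$, whereas Gibbs weights are re-weighted by Boltzmann factors that can vary exponentially across clusters, since those containing codewords sit at $\epsilon^* = 0$ while others sit at some $\epsilon_j^* > 0$. A naive bound $w_j \lesssim \lvert C_j \rvert \, e^{-\beta n \epsilon_j^*}/Z$ does not simultaneously deliver shattering (controlling the codeword clusters) and incongruence (having the non-codeword clusters collectively compete with the $2^k$ codeword ones). Overcoming this would require either a refinement of \cref{thm:landscape_expanders} that controls $\lvert C_j \cap \{\vec{x} : E(\vec{x}) \approx \epsilon n\} \rvert$ at each density $\epsilon$, or an explicit identification of a temperature window $(T_*, \Tglass)$ in which the configurational entropy of excited-state clusters outweighs their Boltzmann suppression relative to the ground-state sector, so that both conditions of \cref{def:SG_def} hold simultaneously.
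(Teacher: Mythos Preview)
Your overall strategy is the paper's: take the clusters $C_j$ of $\Omega(\epsilon)$ as the Gibbs components, verify the bottleneck condition via the extensive energy barriers, and exploit full rank of $H$ both to compute $Z$ explicitly and to get microcanonical concentration. Where you diverge is only at the very end, and what you flag as ``the main obstacle'' is in fact not one. The resolution is precisely your first suggested fix, and it is already available as the core ingredient inside the proof of \cref{thm:landscape_expanders}.

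The key lemma you are missing is the \emph{cluster diameter bound}: from expansion and the triangle inequality (the same computation as \autoref{eq:energy_landscape_barriers}), any two states $\vec x,\vec y$ in the same cluster that both lie in the microcanonical shell $\Xi_\omega(\beta)$ satisfy $\lvert \vec x \oplus \vec y\rvert \leq 2(\langle\varepsilon\rangle_\beta + \omega)n/\gamma$. Hence $\lvert C_j \cap \Xi_\omega(\beta)\rvert$ is bounded by the volume of a Hamming ball of that radius, \emph{uniformly in $j$ and in particular regardless of whether $C_j$ contains a codeword}. Restricting to ``typical'' clusters (those with at least half their Gibbs weight on the shell; atypical ones together carry exponentially small weight and go into $\Lambda$) gives $w_j \leq 2\,\pG[C_j \cap \Xi_\omega(\beta)]$, and combining with the explicit $Z = 2^{rn}(1+e^{-\beta})^{(1-r)n}$ yields, after $\omega\to 0$,
\[
-\tfrac{1}{n}\log_2 w_{\max} \;\geq\; r + (1-r)\log_2(1+e^{-\beta}) + \beta\langle\varepsilon\rangle_\beta\log_2 e - \log_2\Upsilon\!\bigl(2\langle\varepsilon\rangle_\beta/\gamma\bigr) + o(1),
\]
with $\Upsilon(x)=x^{-x}(1-x)^{x-1}$. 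For $\gamma>\gamma^*(r)$ this exceeds $r$ at small $\langle\varepsilon\rangle_\beta$, so a single uniform bound on $w_{\max}$ delivers shattering and incongruence simultaneously. There is no need to separately balance codeword versus non-codeword clusters or to identify a temperature window; the feared Boltzmann re-weighting never enters once you have committed to the microcanonical shell, because on that shell the Boltzmann factor is constant and only the counting matters. Note also that you cannot simply invoke the shattering conclusion of \cref{thm:landscape_expanders} here, since that controls $\lvert C_j\rvert/\lvert\Omega(\epsilon)\rvert$ rather than $\lvert C_j\cap\Xi\rvert/\lvert\Xi\rvert$; the diameter bound is what transfers between the two.
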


In the remainder of this section, we sketch the proof of the above theorem. We focus here on proof of spin glass order, and defer readers to \appref{app:complexity} for a proof of the shattering and incongruence of the energy landscape.
The flow of implications in the argument is sketched in \autoref{fig:proof_flowchart}.

\subsection{Expansion implies that all low-energy-density states are surrounded by extensive free-energy barriers\label{sec:bottlenecks_proof_main}}

\begin{figure}
    \centering
    \includegraphics{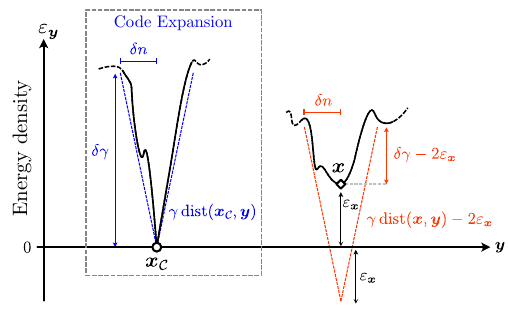}
    \caption{
    Sketch of the lower bound, \autoref{eq:energy_landscape_barriers}, on the energy barriers surrounding low-energy-density states of a code with the $(\delta, \gamma)$-expansion property.
    For states at zero energy, i.e. codewords $\codeword$ (left, blue), the bound just reduces to the expansion property \autoref{eq:expansion}. For states at finite energy density $\varepsilon$ (right, red), if $\varepsilon < \delta\gamma/2$ the bound implies that they are surrounded by extensive energy barriers. 
    }
    \label{fig:lower_bound_sketch}
\end{figure}

A crucial consequence of the form of the energy functional in \autoref{eq:Ising_energy_syndrome} is that expansion does not only implies that codewords/ground states are surrounded by extensive energy barriers but, in fact, that any state at sufficiently low energy density is surrounded by an extensive energy barrier and thus, in turn, an extensive free-energy barrier. 

To make this more precise, let $\vec{x}$ be a bit string corresponding to a configuration with energy $E(\vec x) := \varepsilon_{\vec x} n := \abs{\Hcheck \vec x}$. 
Let $\vec y$ be another string that is close in Hamming distance, i.e. $|\vec x \oplus \vec y| \leq \delta n$. By the triangle inequality, we have $E(\vec x \oplus \vec y) = \abs{H\vec x + H\vec y} \leq \abs{H\vec x} + \abs{H\vec y} = E(\vec x) + E(\vec y)$. Expansion of the code, on the other hand, implies $E(\vec x \oplus \vec y) \geq \gamma |\vec x \oplus \vec y|$. 
Combining these two inequalities, we get
\begin{align}
    E(\vec y) - E(\vec x) 
    	\geq  \gamma |\vec x \oplus \vec y| - 2E(\vec x)
    \label{eq:energy_landscape_barriers}    
\end{align}
which is sketched in the right half of \autoref{fig:lower_bound_sketch}.
We can then take $\vec y$ such that the right hand side is maximal, which gives $E(\vec y) - E(\vec x)  \geq (\gamma \delta - 2 \varepsilon_{\vec x}) n$. Note that for $\varepsilon_{\vec x} = 0$, \autoref{eq:energy_landscape_barriers} reduces to the definition of code expansion in \autoref{eq:expansion}, which is sketched on the left side of \autoref{fig:lower_bound_sketch}.

What does the above imply for the energy landscape of the code? Consider two states $\vec x$ and $\vec x'$, which differ on more than $\delta n$ sites, i.e. $\abs{\vec x \oplus \vec x'} > \delta n$, and that are both at the same energy density $\varepsilon_{\vec x} < \delta\gamma /2$.
The bound in \autoref{eq:energy_landscape_barriers} then implies that to reach $\vec x$ from $\vec x'$ by a sequence of local spin flips, we must have crossed a state $\vec y$ at a higher energy density $\varepsilon_{\vec y} > \varepsilon_{\vec x}$, that is we have crossed an \emph{extensive} energy barrier. 
In other words, expansion implies that all states $\vec x$ with $\varepsilon_{\vec x} < \delta\gamma /2$ are surrounded by extensive energy barriers. This is also sometimes called \emph{clustering} of states \cite{anshu2022cnlts,anshu2022nlts}, see also \cref{lem:clustering}.

As sketched in \autoref{fig:lower_bound_sketch}, this implies in particular that any state at low energy density is either close to some codeword, or separated from all codewords by an extensive barrier, meaning it is part of an energy well surrounding a local minimum.

The discussion so far has focused on the \emph{energy} barriers, while the bottleneck condition used to define nontrivial thermodynamic phases in \autoref{sec:gibbs_decomposition} is in terms of \emph{free-energy} barriers, that is the relative weight of a region $\Omega$ compared to that of its boundary $\Omega_{\eta}$. However, note that the energy barriers separating states are extensive, and entropy can at most be extensive (that is, the number of states in the boundary can at most be exponentially large its size). In other words: extensive energy barriers immediately imply extensive free-energy barriers at sufficiently low temperature.

Formally, around any low-energy state $\vec x$ with $\varepsilon_{\vec x} < \delta\gamma /2$, we can define a set $\Omega(\vec x)$ that is surrounded by a bottleneck in the sense of \autoref{eq:bottleneck_def}. In particular, define $\Omega_{\vec x} := \{\vec x + \vec b; \abs{b} < 2n\epsilon / \gamma\}$ for some $\varepsilon_{\vec x} < \epsilon < \gamma\delta/2$, as the Hamming ball of radius $2n\epsilon/\gamma$ around $\vec x$. Then, for a boundary $\partial_{\eta}\Omega_{\vec x}$  of width $\eta < \delta - 2\epsilon/\gamma$ [cf. \autoref{eq:bottleneck_def}], it is easy to see that
\begin{subequations}\label{eq:bottleneck_low_energy}
\begin{align}
    \frac{\pG(\partial_{\eta}\Omega_{\vec x})}{\pG(\Omega_{\vec x})} 
    &\leq \frac{\sum_{\vec u \in \partial_{\eta}\Omega_{\vec x}} e^{-\beta E(\vec u)}}{\sum_{\vec w\in\Omega_{\vec x}} e^{-\beta E(\vec w)}} \\
    &\leq \sum_{\vec u \in \partial_{\eta}\Omega_{\vec x}} e^{-\beta ( E(\vec u)- E(\vec x) )} \\
    &\leq \abs{\partial_{\eta} \Omega_{\vec x}}\,e^{-\beta n (\epsilon - \varepsilon_{\vec x})} \\
    &\leq e^{n [ \ln(2) -\beta(\epsilon - \varepsilon_{\vec x})]} \xrightarrow[n\to\infty]{\beta > \beta^*}0.
\end{align}
\end{subequations}
In the second line, we have lower bounded the weight of $\Omega_{\vec x}$ by that of $\vec x$ alone. In the third line, we used the definition of the boundary together with \autoref{eq:energy_landscape_barriers}. In the last line, we upper bounded the number of states in the boundary by $2^n$. Since by definition, $\epsilon > \varepsilon_{\vec x}$, the right hand side is vanishes exponentially in $n$ for sufficiently large $\beta>\beta^*$ (low temperature). Naturally, the resulting bound on $\beta^*$ is very loose, but here we are only interested in showing the fact that it is finite.

\subsection{Strong expansion and no redundancies imply shattering and incongruence}\label{sec:gibbs_shattering}

\begin{figure}
    \centering
    \includegraphics{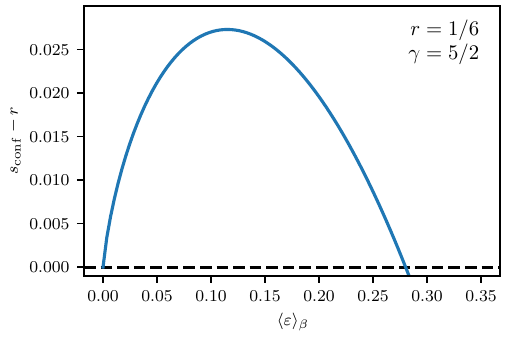}
    \caption{Lower bound for the configurational entropy $\sconf$ in \autoref{eq:pmax_temperature}, as a function of (the expectation value of) the energy $\expval{\varepsilon}_{\beta}$, shown for code rate $r$ and expansion parameter $\gamma$ as realized by the (5, 6)-LDPC ensemble.
    The bound is valid only at low temperatures, in particular, $\expval{\varepsilon}_{\beta} < \delta\gamma/2$ and $\beta > \beta^* > 0$ (see main text for details). 
    Due to the positive slope at $T=0$, and the fact that the bound is tight at $T=0$, an increasing configurational entropy is guaranteed for some finite range of temperature.}
    \label{fig:complexity_energy}
\end{figure}

We have established that low-energy-density states $\vec x$ are surrounded by free-energy barriers. In particular, \autoref{eq:bottleneck_low_energy} implies that we can define a Gibbs state component (in the sense of \autoref{sec:gibbs_decomposition}) around any such state, which is contained within a Hamming ball of radius $2n\epsilon/\gamma$. 
Next, to establish spin glass order in these system, we must characterise the number of distinct components that contribute to the Gibbs state or, more precisely, the configurational entropy defined in \autoref{eq:sconfig_def}.

The configurational entropy can be lower bounded as a function of temperature by a simple counting argument, if we assume sufficiently strong expansion ($\gamma > \gamma^*$) as well as the absence of redundancies ($\rank H = m$). Note that the arguments in the last subsection used neither assumption. 

Recall that the configurational Shannon entropy is lower bounded by the configurational min-entropy, and because of that to establish spin glass order in the sense of \cref{def:SG_def}, it is sufficient to show that
\begin{equation}
    \sconf \geq \sconf^{(\rm min)} = -\log w_{\rm max} > r.
\end{equation}
where $w_{\rm max} \equiv \max{\{w_i\}}$ and we define $\Omega_{\rm max}$ to be the component with Gibbs weight $w_{\rm max}$. 

The weight of any component can now be upper bounded as follows.
At sufficiently low temperature $\beta$ the weight of the Gibbs state is almost entirely concentrated around the average energy density $\expval{\varepsilon}_\beta$.
Because of this, it suffices to upper bound the weight of $\Omega_{\rm max}$ within this microcanonical window:
\begin{align}
	w_{\rm max} 
	&\approx Z_{\beta}^{-1} \, \sum_{\substack{\vec x \in \Omega_{\rm max},\\E(\vec x) \sim n\expval{\varepsilon}_\beta}} e^{-\beta E(\vec x)}\\
	&\approx Z_{\beta}^{-1} \abs{\Omega_{\rm max} \cap \Xi_{\beta}}~e^{-\beta n \expval{\varepsilon}_{\beta}}
\end{align}
where in the second line, $\Xi_{\beta}$ is the micro-canonical shell around energy density $\expval{\varepsilon}_{\beta}$.
Now, as mentioned  above, we know that if $\expval{\varepsilon}_{\beta} < \delta\gamma/2$ and $\beta > \beta^*$, then any Gibbs state component is contained within a Hamming ball of radius $<2n\epsilon / \gamma$, which in turn can be used to upper bound $\abs{\Omega_{\rm max}\cap \Xi_{\beta}}$. This can also be derived simply by inspecting \autoref{fig:lower_bound_sketch}: when the slope $\gamma$ of the energy is increased, any Gibbs state component confined to a given local minima and a given energy density is ``squeezed''.
Finally, in the absence of redundancies, we can derive a closed-form explicit expression for the partition function $Z_{\beta}$ as a function of inverse temperature. 
Putting these steps together (see \appref{app:complexity} for a detailed and rigorous version of this derivation), we arrive at a lower bound for the configurational entropy at low temperature that takes the form
\begin{align}\label{eq:pmax_temperature}
 s_{\rm conf}(T) \geq s_{\rm conf}^{(\rm min)}(T) 
 \geq&~ 
 r + s(T)
\end{align}
where $r = k / n$ denotes the rate of the code. 

The function $s(T)$ goes to zero at $T \to 0$, and further for a given rate $r$, there exist $\gamma^*(r)$ such that for all $\gamma > \gamma^*$, the function is positive and an increasing function of temperature at low temperature. This then concludes the proof of \cref{thm:sg_expanders}.

We show the bound explicitly in \autoref{fig:complexity_energy} as a function of expectation value of the energy, $\expval{\varepsilon}_\beta$, for parameters realized by the $(5,6)$-LDPC ensemble (see \appref{app:constructions} for details on constructing of this ensemble).

\subsection{Comments and possible generalizations\label{sec:shattering_general}}

Before turning to the detailed numerical study of some explicit examples of expander LDPC codes in the next section, we briefly comment on implications of the result and possible generalizations.

\subsubsection{Physical Intuition\label{sec:proof_intutition}}

To us, one of the most attractive features of the proof sketched above is that it gives a physically intuitive picture of the emergence of spin glass order in expander LDPC codes: expansion, together with linearity in the energy functional places tight constrains on the (free) energy landscape, which in turn implies shattering of the Gibbs state. In the last section, we also provided an intuitive picture for how the combination of expanding geometry and local constraints can produce code expansion quite generally.

Beyond the rigorous application to low temperatures, the bounds in \autoref{eq:energy_landscape_barriers}/\autoref{fig:lower_bound_sketch} also suggests some intuition for the behaviour of the system as a function of temperature more generally, which we sketch in the following.

Note that the lower bound on the \emph{height} of the barrier surrounding any state at low energy density $\varepsilon_{\vec x}$ decreases continuously with energy density and, in particular, becomes trivial strictly below the energy density corresponding to the height of  the energy barriers around codewords.
While the codewords remain separated from almost all other states up to an energy of at least $\sim n\gamma\delta$, generic states are only guaranteed to be separated above an energy $\sim \tfrac{n}{2}\gamma\delta$.
This suggest in particular that local minima at finite energy density may become unstable at temperatures strictly below the temperature where global minima become unstable.

This is of course exactly the intuition behind the phase diagram as a function of temperature drawn in \autoref{fig:landscape_intro}. As the temperature is lowered, we expect ergodicity to be broken \emph{weakly} first, at a temperature $\Tdyn$, in the sense that there are exponentially many distinct Gibbs state components, but one component carries almost all the weight. 
In contrast, the Gibbs states shatters only at a lower temperature $\Tglass$. 
Assuming the bound in \autoref{eq:energy_landscape_barriers} to be tight across the full landscape would then suggest $\Tdyn\sim \Edyn\approx\gamma\delta$ and $\Tglass\sim\Eglass\approx\tfrac{1}{2}\gamma\delta$.

\subsubsection{Instantiations}

In \appref{app:complexity} we explicitly show that broad families of Gallager codes provably statisfy all conditions of \cref{thm:sg_expanders} and hence realize spin glass order at low temperatures. These codes correspond  to so-called ``diluted ferromagnetic $p$-spin models'' considered in the spin glass literature, and hence this result is consistent with those based on replica- and cavity-method approaches which also find a low temperature spin-glass phase \cite{franz2001ferromagnet,franz2002dynamic}. Interestingly however, our conclusion is based on quite different arguments.
In particular, while our results use the (sufficiently strong) expansion in these codes, it does not make explicit reference to their underlying (locally) tree-like geometry. This makes our arguments applicable also to models where other approaches may not be useful, such as Sipser-Spielman codes. Importantly, they can also be generalised to quantum LDPC codes \cite{placke2024tqsg}.
Nevertheless, we mention that, for example, the cavity method allows explicit calculation of many quantities of interest (energy, critical temperatures, configurational entropy) if the Tanner/factor graph is locally tree like. In contrast, our results only yield lower bounds on these quantities in general.
We therefore view our method as complementary to existing approaches. 

The other class of expander LDPC codes considered in this work are Sipser-Spielman codes.
Unfortunately, the known lower bounds on the expansion coefficient $\gamma$, as well as available upper bounds on the number of redundancies are much weaker in this case for technical reasons.
However, based on the  picture described above (cf. \autoref{fig:sscodes_excitation_split}) we expect that these bounds are very weak and the expansion parameter $\gamma$ should be sufficiently large in practice. Furthermore, many families of these codes are found to have no or few redundancies (see \appref{app:redundancies} for a numerical study of this).
To verify this expectation we numerically study, in the next section,  two explicit, simple models based on the Sipser-Spielman construction. While for these models, the best available rigorous lower bound on the expansion parameter is in fact \emph{trivial}, we still find strong evidence that these models have, as expected, two critical temperatures corresponding to $\Tdyn$ and $\Tglass$ as explained above.

\subsubsection{Generalizations of the proof}

How much can the assumptions in \cref{thm:sg_expanders} be weakened without changing the conclusion? There are two obvious routes for technical improvement, which we mention for completeness.

First, the assumption of expansion can be weakened to only require linear confinement, that is requiring $\abs{H \vec x} > \gamma \abs{\vec x}$ for all $\abs{\vec x} < \delta(n)$ for some function $\delta(n)$ that scales sub-linearly with $n$. In particular, using arguments from percolation theory as well as graph-locality, one can show that it suffices to assume that $\delta(n)$ grows at least logarithmically with system size ($\log n/ \delta(n) < C~\forall n$ for some constant $C$ that is independent of $n$). The argument is analogous to that presented in Ref. \onlinecite{placke2024tqsg} for quantum codes and we defer a detailed discussion to future work.
We note that even for expander LDPC codes, requiring only a sub-linear scaling of $\delta(n)$ may increase the constant $\gamma$, which is required to be sufficiently large in \cref{thm:sg_expanders}. For example, in Ref. \onlinecite{hsieh2023explicit} (see Lemma 7.4), the authors show strong linear confinement (with $\gamma$ approaching the bit-degree of the code) can be guaranteed for $\delta(n)$ scaling exponentially with the \emph{girth} of the Tanner graph. A generalisation of this result to Tanner codes on high-girth graphs (e.g. with girth $\sim \log n$) would hence provide rigorous justification to apply \cref{thm:sg_expanders} to these codes as well.

Second, while \cref{thm:sg_expanders} requires the code to have \emph{no} redundancies, we expect the result to generalise to the case of families with a subextensive number of redundancies [that is $(m - \rank H) / n \to 0$ for $n\to\infty$] as well.
This is because for the argument presented above, while we used a closed-form expression for the partition function, it is sufficient to \emph{lower bound} it.
By the Kramers-Wannier duality (see e.g. Theorem A.9 in \cite{hong2024quantum} and \cite{rakovszky2023physics2}), a subextensive number of redundancies modifies the partition function only via a sub-extensive contribution to the free energy, so we expect a bound very similar to \autoref{eq:pmax_temperature} to hold in this case.

\section{Results on simple Tanner-Ising models on expander graphs\label{sec:numerics}}

We have shown in~\autoref{sec:Glassiness}, that certain spin models based on expander codes realize a complex energy landscape which leads to spin glass order at sufficiently low temperatures.
While the application of our arguments requires a strong guarantee on expansion that is not always available, we have also argued that we expect the same physics to be realize quite generally in expander codes.
Further, as explained in \autoref{sec:proof_intutition} and sketched in \autoref{fig:landscape_intro}(c), we expect in total \emph{three} different regimes as the temperature is lowered. A high-temperature paramagnetic phase where the Gibbs state is unique, and then two distinct ergodicity breaking phases. 
First, below a temperature $\Tdyn$ we expect weak ergodicity breaking that is there are many distinct Gibbs state components but a single component carries almost all of the total weight. Then below a different, lower temperature $\Tglass < \Tdyn$ we expect the onset spin glass order with shattering and incongruence. 

In this section, we provide evidence, via numerics and semi-analytical calculations, that  two ``minimal'' models inspired by Sipser-Spielman codes realize  this general scenario\footnote{The models are minimal in the sense that they are defined on small-degree graphs ($s=7$), which is the smallest that could potentially lead to good codes. Rigorous bounds establishing large enough $\gamma > \gamma^*$ are not available for these small-degree models.}.

We begin the remainder of this section by introducing the two concrete models that we study in detail. We then describe the numerical experiments and signatures that we used to identify the three regimes introduced above and determine the temperatures $\Tdyn$ and $\Tglass$. 

Finally, we semi-analytically investigate a Tanner-Ising model defined on the Cayley tree using a version of the population dynamics cavity method described in Ref.~\cite{mezard2006reconstruction}. 
We show that one can reproduce many results of the numerical experiments for one of our models, which is defined on a graph that is \emph{locally} tree like (see left side of \autoref{fig:tanner_codes}), by examining the influence of different boundary conditions on observables deep in the bulk. We connect the different choices of boundary conditions to the point-to-set construction on the closed graph. 

\subsection{Tanner-Ising Models}

We consider two specific examples of Tanner-Ising models, both sketched in \autoref{fig:tanner_codes} (see \autoref{sec:expander_examples} for a discussion of the general construction). The two families of models share the same local code but are defined on two different families of graphs. The first is based on a family of random regular graphs chosen in such a way that the graph is guaranteed to have no loops of size less than $\order{\log N_{\rm v}}$ \cite{linial2019rhgrg}, where $N_{\rm v}$ is the number of vertices. We refer to these as ``high-girth random regular graphs'' (HGRRGs). The high girth ensures that locally (below the size of the shortest loop), these graphs are isomorphic to a tree and thus their local structure is fully characterized by their degree $s$ ($s=7$ in our case). 
The other family of graphs we consider are regular tessellations of closed (two-dimensional) hyperbolic manifolds.
These are characterized by the so-called Schl\"afli symbol $\{r, s\}$, which denotes that the tiling uses regular polygons with $r$ edges and $s$ such polygons meet at every vertex of the tessellation. We consider $\{3, 7\}$-tessellations of closed hyperbolic manifolds. See~\appref{app:constructions} for details on the construction of both of these families of graphs.

To turn the graphs into Tanner codes, we need to specify a local code. In all cases, we take the local code to be the $[7, 4, 3]$ Hamming code which has parity check matrix
\begin{equation}
    \Hcheck_{\rm L} = \begin{pmatrix}
    1 & 1 & 0 & 1 & 1 & 0 & 0 \\ 
    1 & 0 & 1 & 1 & 0 & 1 & 0 \\ 
    0 & 1 & 1 & 1 & 0 & 0 & 1 
    \end{pmatrix}
    \label{eq:hamming_code}
\end{equation}
Note that not all bits are equivalent in the above parity check matrix; however, the Hamming code has a \emph{cyclic} property, by which its codespace is symmetric under cyclic permutations of the bits. 
To remove the ambiguity of labeling the edges around each vertex locally, we therefore \emph{symmetrize} the model, that is we add to the global code not only the checks in the parity check matrix defined above, but also all possible linear combinations of the three checks (modulo 2), for a total of seven checks.  
This is equivalent, up to a factor, to energetically distinguishing only between codewords and non-codewords of the local code. Note that this procedure introduces redundant checks, but only locally on each vertex of the graph. Hence, the system still decouples into $N_{\rm v}$ independent few-body systems which can be solved exactly. In particular,  we can efficiently sample states from the global Gibbs distribution $\pGibbs\propto e^{-\beta E}$ (see \appref{app:gibbs_samples} for details).

On hyperbolic manifolds, which are orientable, there is a natural ordering of the edges incident at each vertex, and hence a natural labeling with respect to the local code up to cyclic shifts. In this case, the symmetrization removes all ambiguity in the definition of the model up to a (global) choice of orientation of the manifold. Choosing a specific labeling of the edges with respect to the local code is less obvious for the HGRRGs and we chose an arbitrary labeling\footnote{We did not observe any dependence of the scaling of code parameters, redundancies, or critical behavior on the choice of labeling.}.

An important property of both families of graphs that we use to define our model is that even though we consider finite instances, they have diverging \emph{injectivity radius} $R_{\rm inj}\sim \log N_{\rm v}$. This means that the graphs look locally like their infinite counterpart (that is the Bethe lattice and regular tessellations of the infinite hyperbolic plane) up to distance $R_{\rm inj}$ around \emph{any} vertex of the graph. Note that this would not be the case when considering models defined on a sequence of Cayley trees, or hyperbolic lattices with open boundary conditions where a finite fraction of all spins lie within $\order{1}$ distance of the boundary (although these may still have non-trivial thermodynamics, see \appref{app:TreeIsing}). In the following, we choose system sizes with increasing injectivity radius and hence geometrically spaced number of bits $n\sim e^{R_{\rm inj}}$.

\begin{figure}
    \centering
    \includegraphics{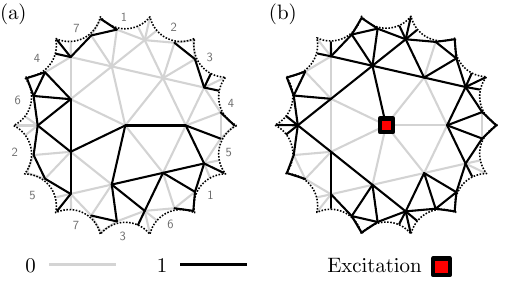}
    \caption{Illustration of a codeword (a), and a state with a single excitation (b), in the Klein-Hamming code. The Klein-Hamming code is constructed as a Tanner code on the Klein-quartic surface with the $[7, 4, 3]$ Hamming code as the local code.
    Hence, in a code word of the Klein-Hamming code each vertex has either zero or at least three incident edges labelled~`1'.
    The 14 sides are identified according to hyperbolic geodesics as indicated in the left panel.
    }
    \label{fig:klein-hamming}
\end{figure}

It is illustrative to consider the smallest member of the abovementioned families, the Klein-Hamming code, 
which is the Tanner code obtained from placing a local Hamming code [\autoref{eq:hamming_code}] on the Klein quartic (which is the smallest compactification of the $\{3, 7\}$-tessellation of the hyperbolic plane). The resulting global code has parameters $[n, k, d] = [84, 12, 26]$ and in \autoref{fig:klein-hamming}, we show both a codeword, in panel (a), as well as a state with a single excitation, in panel (b). Both states differ from the all-zero codewords on many sites.

\subsection{Numerical experiments\label{sec:hgrrg-numerics}}

\newcommand{\tauav}{\ensuremath{\tau_{\rm av}}}
\newcommand{\taumax}{\ensuremath{\tau_{\rm max}}}

\subsubsection{Heating and Annealing Dynamics}

\begin{figure}
\centering{}
\includegraphics{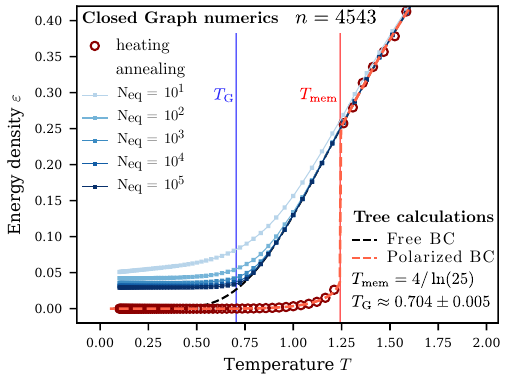}
\caption{
Results of Metropolis dynamics simulations on high-girth random regular graphs. We perform both heating an annealing simulations. For annealing dynamics, we initialize the system in a random state and slowly lower the temperature, with $N_{\rm eq}$ sweeps between each temperature change. For heating dynamics, we initialize the system in a ground state, quench the system to a higher temperature $T$, equilibrate the system at this higher temperature for $10^5$ sweeps, and then measure the energy for $10^3$ sweeps.
We compare the numerical results on a closed graph with calculations performed on the same model, but defined on a tree with boundary conditions (dashed lines), see \autoref{sec:recursive} for details. There is excellent agreement between the numerical results on the closed graph, and the calculation on the tree.
}
\label{fig:rrg_heating_cooling}
\end{figure}

Since our models are derived from codes, and we have established above that memory properties are useful in characterizing the thermodynamic properties we are interested in, we start our numerical study by conducting what could be considered a memory experiment. We initialize the system in the all-zero state (which is a ground state of the model with $E=0$) and evolve it under single spin flip metropolis dynamics at temperature $T$ for $10^5$ sweeps. The late-time value of the energy density reached in the simulation is shown in \autoref{fig:rrg_heating_cooling} (labeled ``heating'') for the model defined on a HGRRG with $n = 4543$ Ising variables. 
We observe a sharp first order like transition where the energy jumps from a value close to zero to the equilibrium value, that is the expectation value with respect to the \emph{global} Gibbs state $\pGibbs\propto e^{-\beta E}$.
We call this temperature $\Tdyn$. Below $\Tdyn$, the system does not reach the equilibrium value and ergodicity is broken. In particular, the state of the system remains close to its initial state (for times that are analytically predicted to be exponentially long in $n$ by the considerations of the bottleneck theorem and section \autoref{sec:bottlenecks_proof_main}). Hence, below $\Tdyn$ the system acts as a passive memory for its ground states. 
While the number of sweeps used to equilbrate the system at temperature $T$ is arbitrary, we have checked that the position of the jump in energy does not depend on the number of sweeps used and is stable on increasing system size. The fact that $\Tdyn$ is independent of such details is further evidenced by comparing the numerical heating results on the closed graph to a calculation on the same model defined on an infinite tree (see below for details) with \emph{polarized} boundary conditions, that is we force the system to be in a ground state at the boundary. The average energy density obtained from this calculation is shown in \autoref{fig:rrg_heating_cooling} as a red dashed line and agrees perfectly with the numerical heating experiment on the closed graph. The tree calculation also yields an analytical result for the transition, $\Tdyn = 4/\ln(25)$. 

\begin{figure*}
    \centering
    \includegraphics{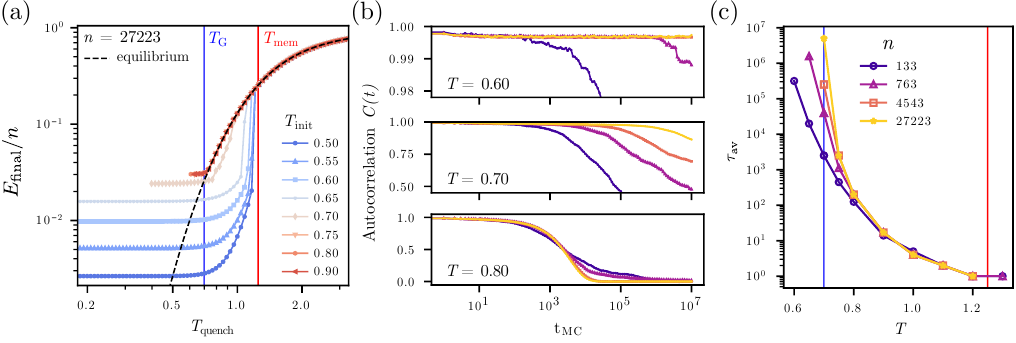}
    \caption{
    Numerical evidence for the glass transition on high-girth random regular graphs.
    (a) Late-time average energy $E_{\rm final}$ after quenches from $T_{\rm init}$ to $T_{\rm quench}$. The energy after quenches to below $\Tglass$ (solid blue vertical line) do never track the equilibrium curve (black dashed line), and for $T_{\rm init} < \Tglass$ the system falls out of equilibrium even when quenched to some range of higher temperatures $T_{\rm init} < T_{\rm quench}$.
    (b) Average autocorrelation function as a function of Monte-Carlo time, for a temperature below, approximately at, and above $\Tglass=0.70\pm0.05$.
    (c) Average relaxation time $\tauav$ [\autoref{eq:tauav}] as a function of temperature.
    }
    \label{fig:glass}
\end{figure*}

Our numerics show that at low temperature and under local Metropolis dynamics, the system does not explore all of configuration state when initialized in a ground state. A converse question would be whether it is possible to reach such a ground state when cooling the system slowly from infinite temperature. 
To address this numerically, we simulated annealing experiment: we initialize the system in a random state ($T=\infty$) and then change the temperature according to a schedule with $101$ geometrically spaced temperatures between $T=10$ and $T=0.1$, and a varying number of $N_{\rm eq}$ metropolis sweeps per temperature. The results of these experiments for the same model as before is also shown in \autoref{fig:rrg_heating_cooling} (marked ``annealing'').
The energy of the simulated annealing tracks the equilibrium curve at high temperature, while at very low temperature, the energy of simulated annealing is almost completely flat, indicating that the system is ``stuck'' in a local minimum.
As  $N_{\rm eq}$ is increased (cooling is done more slowly), the crossover from tracking the equilibrium curve to being flat becomes more narrow, occurring over a small window around $\Tglass\approx 0.7$. 
The existence of many local minimum at finite energy, where local dynamics gets ``stuck'' and hence (numerical) cooling experiments fall out of equilibrium, is a hallmark of (spin) glasses and hence suggest that indeed our Tanner-Ising models realize spin glass order below $\Tglass\approx 0.7$. 
Again, we can compare this  to an estimate from a calculation done on a tree (see \autoref{sec:recursive} below) which yields $\Tglass\approx 0.704\pm0.005$, again in excellent agreement with the closed-graph numerics.

Note that the glass transition temperature is significantly below the memory threshold temperature $\Tdyn\approx 1.25$.
This serves as direct evidence for the existence of two distinct critical temperatures $\Tdyn$ and $\Tglass$ (see sketch in \autoref{fig:landscape_intro}). 

What is the intuition behind these rather simple numeric picking up on exactly the two transitions? While the system has multiple ergodic components below $\Tdyn$, a single large component dominates above $\Tglass$. Simulated annealing below $\Tdyn$ then explores this largest component, which reproduces the expectation values of observables in the global Gibbs state. Only below $\Tglass$ does the system get stuck in a local minimum containing a vanishing fraction of all configurations and thus falls out of equilibrium.

\subsubsection{Quench Dynamics in the Spin Glass Phase}

We now further characterize the nature of the low-temperature phase. According to the picture of spin glass order  in \autoref{sec:spin_glass_order}, below $\Tglass$ the Gibbs state shatters into many distinct components which are dynamically stable, and most of which do not contain ground states. 
This has direct consequences for Metropolis dynamics at  $T < \Tglass$, when initialized in a configuration spin randomly sampled from the Gibbs distribution at the same temperature $T$. Such dynamics should only explore a vanishing fraction of configuration space, and we expect the system to act as passive memory for such initial states. We test these two expectations in the following.

We study \emph{quench dynamics}: we initialize the simulation in a state drawn randomly  from the global Gibbs distribution at a temperature $T_{\rm init}$ and then evolve using local metropolis dynamics at a different temperature $T_{\rm quench}$ for $10^5$ sweeps before measuring the energy for $10^3$ sweeps.
The resulting steady-state energy density $E_{\rm final}/ n$ as a function of $T_{\rm quench}$, for a range of $T_{\rm init}$ is shown in \autoref{fig:glass} (a) as solid curves with markers in a range of colors ranging from blue for low $T_{\rm init}$ to red for large $T_{\rm init}$.

We observe qualitatively different behavior depending on whether $T_{\rm init} < \Tglass$ or $T_{\rm init} \geq \Tglass$. 
In particular, for  $T_{\rm init} < \Tglass$, the energy $E_{\rm final}$ is roughly independent of $T_{\rm quench}$ for $T_{\rm quench} \leq T_{\rm init}$. In contrast, for $T_{\rm quench} > T_{\rm init}$ the energy increases with $T_{\rm quench}$ and there is a sharp jump of $E_{\rm final}$ as a function of $T_{\rm quench}$, similar to the one observed in \autoref{fig:rrg_heating_cooling}. 
The temperature at which the jump occurs decreases as $T_{\rm init}$ increases. 

The first observation, that $E_{\rm final}$ is roughly independent of $T_{\rm quench}$ for $T_{\rm quench} \leq T_{\rm init}$ means that the system is unable to equilibrate to a lower energy than that of the initial state, compatible with the system being stuck in a local minimum of the energy. 
The second observation, a sharp jump in $E_{\rm final}$ as a function of $T_{\rm quench}$, is evidence for the fact that the system acts as passive memory for typical states sampled from the Gibbs distribution at $T < \Tglass$. 
The fact that the position of the jump decreases as $T_{\rm init}$ increases is evidence that, as discussed in \autoref{sec:proof_intutition},  local minima of the energy landscape are less stable as their energy density approaches $\Eglass$.
Finally, for $T_{\rm init} \geq \Tglass$, the quench dynamics basically reproduces the annealing dynamics of \autoref{fig:rrg_heating_cooling}. The value of $E_{\rm final}$ tracks the equilibrium curve for all $T_{\rm quench} > \Tglass$, meaning that even if the system is quenched to a lower temperature, the local dynamics successfully equilibrates to the equilibrium energy. For $T_{\rm quench} < \Tglass$, the energy gets stuck roughly at the equilibrium energy at $\Tglass$, consistent with the dynamics getting stuck around that energy. 

\begin{figure*}
    \centering
    \includegraphics{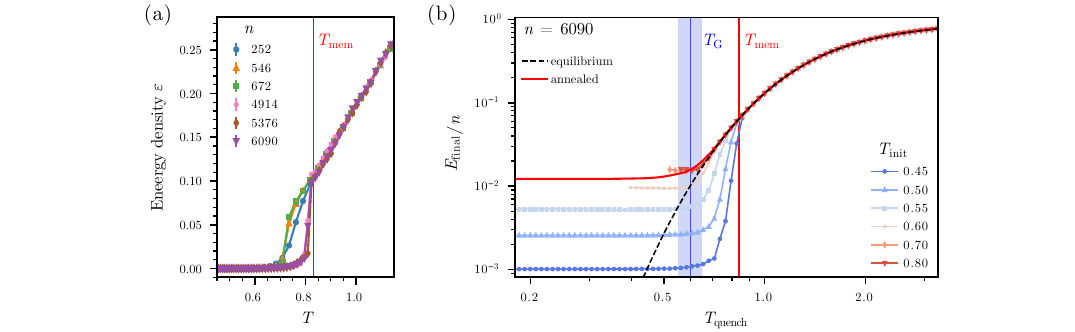}
    \caption{Results for metropolis dynamics simulations on $\{3, 7\}$ hyperbolic lattices. 
    The data and interpretation in panel (a) and (b) are analogous to those presented for the same model on HGRRGs in \autoref{fig:rrg_heating_cooling} and \autoref{fig:glass} (a), respectively. As such, the data is evidence that also Tanner Ising models on hyperbolic lattices have two transitions at distinct temperatures $\Tglass$ and $\Tdyn$, respectively. Since the hyperbolic lattices are not locally tree like, no approximate recursive solution of these models is available. Because of this, $\Tdyn$ and $\Tglass$ are estimated from the numerical data as the temperature where the energy under heating dynamics jumps to the equilibrium value, and the temperature where the energy under annealing dynamics starts digressing from the equilibrium curve, respectively. 
    }
    \label{fig:glass_hyperbolic}
\end{figure*}

Finally, in \autoref{fig:glass_hyperbolic}, we show data equilvalent to that in \autoref{fig:rrg_heating_cooling} and \autoref{fig:glass} (a) but for the Tanner-Ising model on the $\{3, 7\}$ tiling of the hyperbolic plane. Since the hyperbolic lattices are not locally tree like, no approximate recursive solution of these models is available. However, from the numerical data we can conclude $\Tglass = 0.60 \pm 0.05$ and $\Tdyn = 0.84 \pm 0.02$ for the Tanner-Ising model on the $\{3, 7\}$ tiling of the hyperbolic plane. As before, these two transitions are well separated.

\subsubsection{Autocorrelation time}

Another signature of ergodicity breaking is a large autocorrelation time. The autocorrelation time, in particular, quantifies on which time scale on which the system explores regions of configuration space that are `far' from the initial state. Because of this, we expect the autocorrelation time to diverge strongly with system size for $T < \Tglass$.

There are multiple ways to define the autocorrelation time. For a given initial state $\vec\sigma_{\rm init}$,  we define its autocorrelation time to be the first time the expectation value of the autocorrelation function falls below a threshold value
\begin{align}
    \tau(\vec\sigma_{\rm init}) &= \min_t \left\{\frac{1}{n}\sum_j \expval{\sigma_{{\rm init}, j}\,\sigma_j(t)} < 0.9\right\},
\end{align}
where we have taken the threshold value to (arbitrarily) be 0.9. We denote by $\vec\sigma(t)$ the state time-evolved under stochastic single-spin flip metropolis dynamics at a given temperature $T$, and the expectation value $\expval{\bullet}$ is taken over different trajectories of such dynamics (starting from the same initial state).

Naturally, the value of $\tau(\vec\sigma_{\rm init})$ depends sensitively on the initial state. We therefore distinguish the autocorrelation time of ``worst case'' and average initial states by defining
\begin{align}
    \taumax &= \max_{\vec\sigma}  \tau(\vec\sigma) \label{eq:taumin}\\
    \tauav &= \sum_{\vec\sigma} \pGibbs(\vec \sigma) \tau(\vec \sigma),
    \label{eq:tauav}
\end{align}
where the max and sum over $\vec \sigma$ implies taking the maximum and sum over all of configuration space, respectively.
Note that in $\tauav$, we take the Gibbs distribution $\pGibbs$ to be at the same temperature as the one used for the stochastic dynamics. 
As we will see, the quantities \autoref{eq:taumin} and \autoref{eq:tauav} are defined in such a way that they diverge strongly as a function of system size for $T$ below $\Tdyn$ and $\Tglass$, respectively, and show only very weak\footnote{Even in a paramagnetic phase, one expects the autocorrelation time to diverge logarithmically with $n$.} system size dependence otherwise.

Note that the quench dynamics studied above already provides a lower bound on $\taumax$, since the system falling out of equilibrium implies that $\taumax > 10^5$, which is the number of sweeps used in the quenched simulations. 
If we assume that the codewords are the most stable initial states at all temperatures, i.e. $\taumax = \tau(\codestate)$ for an arbitrary code word $\codestate$, then, the results of the previous section also already imply that $\taumax \ll 10^5$ for $T > \Tdyn$ and $\taumax \gg 10^5$ for $T < \Tdyn$ as expected. 

Similarly, the fact that the system does equilibrate after a quench from high temperature as long as $T_{\rm quench} > \Tglass$ suggest that the average autocorrelation time diverges only below $\Tglass$. However, this provides no strict lower or upper bound since the average in the definition $\tauav$ is defined with respect to the Gibbs state at the same temperature. 
To study $\tauav$ more directly as a function of temperature, we initialize the system, as before, in a state drawn from the equilibrium distribution at temperature $T$ and evolve it under local metropolis dynamics until the autocorrelation function, averaged over 500 trajectories, decays to zero, or for a maximum of $10^7$ sweeps. Assuming that the average is not dominated by rare events, this allows us to directly estimate $\tauav$ as long as $\tauav < 10^7$. The result is shown in \autoref{fig:glass} (b) and (c). In panel (b), we show the autocorrelation function $C(t) = \tfrac{1}{n}\sum_j \expval{\sigma_j \sigma_j(t)}$ as a function of Monte-Carlo time $t_{\rm MC}$, while in panel (c) we show the average autocorrelation time $\tau_{\rm av}$ as a function of temperature for different system sizes. As expected, we observe a stark change in the way the autocorrelation function depends on system size between $T > \Tglass$ and $T < \Tglass$. This is particularly apparent in panel (c), where we observe that the average autocorrelation time does not scale with system size for temperature $T > \Tglass$ but diverges as a function of system size for $T < \Tglass$.

\subsection{Recursive solution on trees}\label{sec:recursive}

In the previous subsection, we probed the multiplicity of pure Gibbs states numerically through their \textit{dynamical} signatures, identifying \Tdyn~and \Tglass~with transitions in heating and cooling, respectively. However, as we have defined them, weak and strong ergodicity breaking are not merely dynamical features, but structural ones, which therefore can be probed via certain \textit{equilibrium} correlation functions as well as the configurational entropy. To evaluate these numerically and analytically, we rely on the alternative formulation of pure Gibbs states as the limits of different boundary conditions. This perspective, which differs from that presented up to this point (cf.~\autoref{sec:gibbs_decomposition}), is especially useful on locally tree-like graphs, where extremal Gibbs states correspond to fixed points of belief propagation~\cite{mezard2009information}. At a technical level, we analytically determine \Tdyn~by solving for ``ferromagnetic'' fixed points, while \Tglass~is framed as a ``reconstruction threshold'' assessed in the spirit of the cavity method~\cite{mezard2001bethe,mezard2006reconstruction}.

In general, weak and strong ergodicity breaking manifest not in the standard correlation function between distant spins, but rather, in the correlation between the \textit{boundary} of a graph and a spin deep in the bulk. If this correlation is nonzero, then we have a finite probability to succeed in reconstructing the bulk spin given knowledge of only the boundary configuration~\cite{mezard2009information}. \Tdyn~and~\Tglass~are associated with two different \textit{boundary reconstruction} tasks. Above \Tdyn, the Gibbs state is \textit{unique}, so that the bulk states induced by any pair of boundary conditions are indistinguishable in the bulk~\cite{krzakala2007gibbs}. At and below~\Tdyn, reconstruction is possible for certain fine-tuned BCs, such as fully aligned all up or all down, which induce a nonzero bulk magnetization. The fact that a bulk spin ``remembers'' the fully polarized BCs corresponds to the memory of the ground state in the heating experiment. However, in the intermediate regime $\Tglass<T<\Tdyn$, boundary configurations that are \textit{typical}, in a sense more carefully defined below, still fail to influence the bulk. Not until $T$ falls below $\Tglass$ does the length scale of the correlation between boundary and bulk diverge. This diverging length scale is accompanied by a diverging relaxation time (cf.~\autoref{fig:landscape_intro}c)~\cite{Martinelli2004,Berger2005,montanari2006}, manifesting in the Monte Carlo dynamics as ergodicity breaking in the annealing experiment.

The boundary reconstruction problem is general, but can only be solved controllably on locally tree-like graphs, as the effect of different boundary conditions on the leaves of the tree can be iterated recursively into the bulk. Since there are no loops, this iterative process ``flows'' in one direction, with the initial condition set by the particular boundary condition. Said differently, belief propagation converges to an exact fixed point when the factor graph is a tree. We note that since the boundary of a Cayley tree is an extensive fraction of the bulk, imposing different BCs is a standard way to probe the thermodynamically non-trivial nature of models on the Bethe lattice~\cite{peierls1936ising,friedli2017statistical,chayes1986mean}, despite the triviality of the partition function.
We review how this program plays out for the ferromagnetic Ising model on the Bethe lattice in~\appref{app:TreeIsing}.  

The fixed points of the flow depend on both the temperature and the \textit{ensemble} of boundary conditions used. For $T>\Tdyn$, the uniqueness of the Gibbs state means that \text{any} choice of boundary conditions flows to a unique, trivial fixed point, the paramagnetic global Gibbs state. At $T=\Tdyn$, the recursion relation associated with ``codeword-polarized'' boundary conditions--- those which favor the ground states---develops nontrivial ``ferromagnetic'' fixed points.  The recursion relation associated with ``typical'' boundary conditions, however, does not acquire a nontrivial fixed point until $T=\Tglass$. 

\begin{figure}[t]
\includegraphics[width=\linewidth]{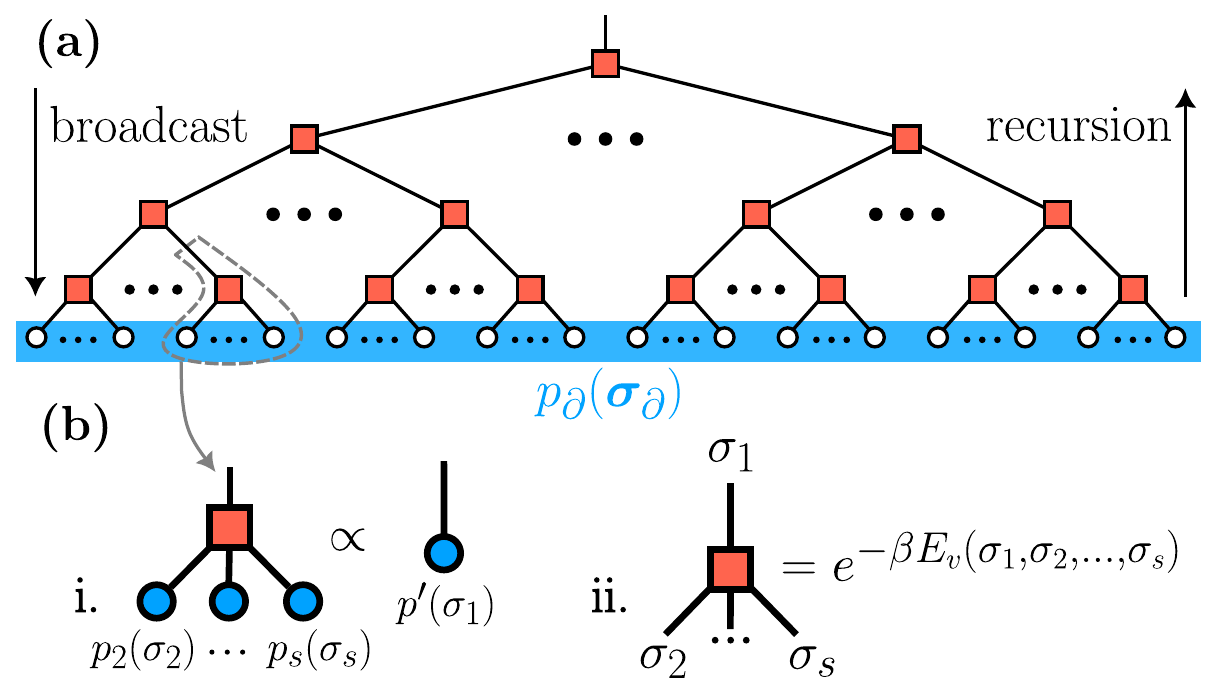}
\caption{\label{fig:tree-sketch} (a) Depth $\depth=4$ rooted tree with a (possibly correlated) probability distribution $p_\partial(\vec{\sigma}_\partial)$ on its boundary (highlighted in blue). Dashed gray curve indicates the neighborhood of one vertex $v$. (b) Recursion step (i) mediated by the interaction $E_v$ around a single vertex (ii) [\autoref{eq:pp-recursion}]. This interaction also defines one step of the noisy broadcasting process (running from root to leaves) discussed in~\autoref{sec:sg-tree}.}
\end{figure}
\subsubsection{Setup}
To be concrete, consider a Tanner code on a \textit{rooted} tree, where each vertex except the root has degree \gdeg, and the total number of layers is \depth. 
A depth $\depth=4$ tree is sketched in~\autoref{fig:tree-sketch}a. The last layer of the tree has a set of edges sticking out (the ``leaves'', shown as white circles), which constitute its boundary. Using the Ising variables $\sigma = 1-2x$, we denote the corresponding set of boundary edges by~$\vec{\sigma}_\partial$ while we reserve~$\vec{\sigma}$ for the configuration of the remaining spins in the bulk of the tree.
The energy is a function of both sets of variables, $E=E(\vec{\sigma},\vec{\sigma}_\partial)$. Given a particular probability distribution for the boundary spins, $p_\partial(\vec{\sigma}_\partial)$, the probability of a global configuration $(\vec{\sigma}, \vec{\sigma}_\partial)$ is
\begin{equation}
p(\vec{\sigma},\vec{\sigma}_\partial) = p_\partial(\vec{\sigma}_\partial) p(\vec{\sigma}|\vec{\sigma}_\partial).
\end{equation}
where
\begin{subequations}
\begin{align}
p(\vec{\sigma}|\vec{\sigma}_\partial) &= e^{-\beta E(\vec{\sigma},\vec{\sigma}_\partial)} / \mathcal{Z}_\partial(\vec{\sigma}_\partial), \\\mathcal{Z}_\partial(\vec{\sigma}_\partial) &\equiv \sum_{\vec{\sigma}} e^{-\beta E(\vec{\sigma},\vec{\sigma}_\partial)}
\end{align}
\end{subequations}

Suppose we sample the boundary conditions with probability $p_\partial(\vec{\sigma}_\partial) = \mathcal{Z}_\partial(\vec{\sigma}_\partial)/\mathcal{Z}$. These are ``free boundary conditions'' in the sense that $p(\vec{\sigma},\vec{\sigma}_\partial) \propto e^{-\beta E(\vec{\sigma},\vec{\sigma}_\partial)}$ is just the Gibbs distribution of the model on the full tree. Alternatively, we can tune the temperature of the boundary conditions relative to the interior by taking 
\begin{equation}\label{eq:alpha}
p_\partial(\vec{\sigma}_\partial) \propto \mathcal{Z}_\partial(\vec{\sigma}_\partial)^{\alpha} = e^{-\alpha \beta F_\partial}.
\end{equation}
This choice of BC interpolates between a uniformly random (``quenched'') BC at $\alpha=0$ and a codeword-polarized (``ferromagnetic'') BC at $\alpha=\infty$, which selects only the ground states.\footnote{Those familiar with one-step replica symmetry breaking may recognize $\alpha$ as the \textit{1RSB Parisi parameter}~\cite{mezard2006reconstruction,krzakala2007gibbs}.}

To relate rooted trees with certain boundary conditions back to closed graphs, such as those on which the models of the previous subsection were defined, consider the following process: (a) Sample from the equilibrium Gibbs state on a closed graph $G$, (b) freeze the configuration outside a ball $B_\depth(i)$ of radius \depth, centered at vertex $i$, and (c) sample inside the ball (the induced subgraph $G(\depth)$) according to the measure conditional on the frozen spins. This procedure gives a rigorous way to impose ``typical'' boundary conditions on $G(\depth)$. The \textit{point-to-set correlation} which probes the spin glass transition is then defined as the correlation between the ``point'' $i$ and the ``set'' $\overline{B}_\depth(i)$, quantified, for example, by the mutual information between $i$ and $\overline{B}_\depth(i)$~\cite{mezard2009information}.

If $G$ is a locally treelike graph, then $G(\depth)$ can be taken to be a Cayley tree. While the rigorous point-to-set construction involves sampling boundary configurations on $G(\depth)$ according to the partition function on the full graph $G$, taking $\alpha = 1$ in~\autoref{eq:alpha} is a minimal way to mimic these correlated BCs, consistent with the formulation of one-step RSB within the cavity method~\cite{mezard2006reconstruction}.

\subsubsection{Memory transition}
We first consider the case $\alpha=\infty$, which probes \Tdyn. While $\alpha=\infty$ selects \textit{all} ground states (by sampling the boundary at zero temperature), we can equivalently favor a \textit{single} codeword, taken without loss of generality to be the all-up codeword.

This fully polarized BC is special on two counts.
First, the boundary distribution \textit{factorizes}, $p_\partial(\vec{\sigma}_\partial) = \prod_{e \in \partial} p_e(\sigma_e)$, allowing the effect of the boundary to be straightforwardly iterated into the bulk. That is, letting $i=2,\ldots,\gdeg$ label the set of edge spins that are connected to the same vertex $v$, summing over $\sigma_e$ gives rise to a new effective distribution on the spin (let us denote it by $\sigma_1$) which connects $v$ to the rest of the tree (\autoref{fig:tree-sketch}b):
\begin{equation}\label{eq:pp-recursion}
p'(\sigma_1) \propto \sum_{\sigma_2,...,\sigma_{\gdeg}} e^{-\beta E_v(\sigma_1,\ldots,\sigma_{\gdeg})} \prod_{i=2}^{\gdeg} p_i(\sigma_i),
\end{equation}
where $E_v$ denotes the terms in the Hamiltonian associated with the checks on vertex $v$. In the language of belief propagation,~\autoref{eq:pp-recursion} defines one step of ``message passing'' from the leaves towards the root.

Defining the conditional magnetization $m = p(+1) - p(-1)$,
\begin{align}\label{eq:m-recursion}
m' &= \frac{\sum_{\sigma_1,\dots,\sigma_\gdeg} \sigma_1 e^{-\beta E_v(\sigma_1,\dots,\sigma_\gdeg)} \prod_{i=2}^{\gdeg} p_i(\sigma_i)}{z(m_2,...,m_\gdeg)} \notag \\
&\equiv F(m_2,...,m_\gdeg)
\end{align}
where
\begin{equation}\label{eq:z}
z(m_2,...,m_\gdeg) = {\sum_{ \sigma_1,...,\sigma_\gdeg}e^{-\beta E_v(\sigma_1,...,\sigma_\gdeg)} \prod_{i=2}^{\gdeg} p_i(\sigma_i)}.
\end{equation}
Second, the polarized boundary distribution is \textit{homogeneous}: $m_e \equiv m_0$ for each boundary spin $e$. Thus, if vertex $v$ is at depth $r$ from the leaves,~\autoref{eq:m-recursion} becomes
\begin{equation}\label{eq:m-recursion-ferro}
m_{r+1} = F(m_r,...,m_r) \equiv f(m_r).
\end{equation}

The fixed points which characterize the bulk are solutions to the equation $f(m_*) = m_*$, and their stability is determined from the derivative $\frac{\text{d}f}{\text{d}m}{\big |}_{m_*}$, since a perturbation away from $m_*$ scales as $\Delta m \sim \lambda^r$ where $\lambda = \frac{\text{d}f}{\text{d}m}{\big |}_{m_*}$. The paramagnetic fixed point, $m=0$, always solves this equation. At high temperatures, this is the unique solution towards which any initial boundary condition flows. However, at sufficiently low temperatures, other, non-trivial solutions can appear. \Tdyn~is the highest temperature which admits nontrivial fixed points: since the polarized boundary condition sets the initial condition $m_0 = \pm 1$, the system flows to a nontrivial bulk magnetization as soon as one of these solutions appear.

Concentrating on the case where the local code is the $[7,4,3]$ Hamming code, we derive in~\appref{app:TensorNetwork} that 
\begin{equation}\label{eq:tree-mem}
\Tdyn = 4 / \ln(25) \approx 1.24.
\end{equation}
As shown in~\autoref{fig:ferro}, the pair of fixed points appearing at $T=\Tdyn$ are at a finite conditional magnetization, $m = \pm \sqrt{2/3}$, indicating a first-order transition.

\begin{figure}[t]
\includegraphics[width=\linewidth]{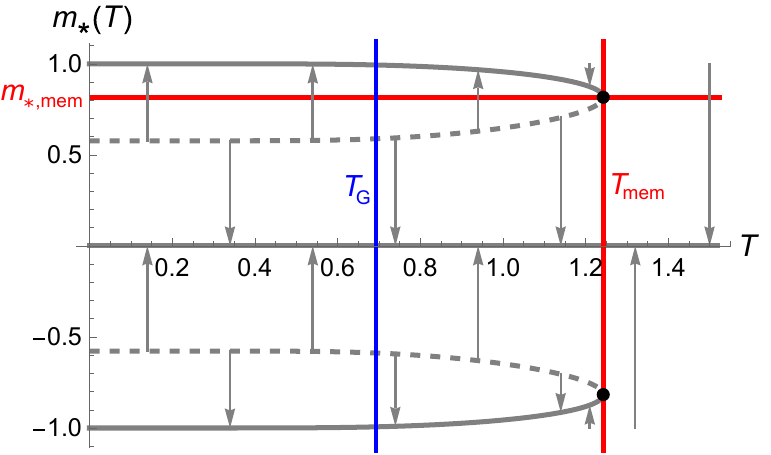}
\caption{Fixed points of the recursion [\autoref{eq:m-recursion-ferro}] under polarized BCs, for a Tanner-Ising model on a tree graph with the $[7,4,3]$ Hamming code as the local code. Solid (dashed) curves indicate stable (unstable) fixed points. Arrows indicate the flow under recursion. Red and blue lines mark \Tdyn~[\autoref{eq:tree-mem}] and \Tglass [\autoref{eq:Tglass}], respectively. Black points mark the nontrivial fixed point magnetization at \Tdyn, $\pm m_{*,\mathrm{mem}} = \sqrt{2/3}$.\label{fig:ferro}}
\end{figure}
In fact, down to $T=0$, the trivial fixed point remains stable ($\frac{\text{d}f}{\text{d}m}{\big |}_{0} = 0 < 1$) and is separated from the stable large-$|m|$ fixed points by a symmetric pair of unstable fixed points. 
Thus, one needs to add a finite boundary field to induce magnetization in the bulk, in contrast with the Ising model on a tree (\appref{app:TreeIsing}). We find similar results for the $[8,4,4]$ local code. 

The analytic calculation of $\Tdyn = 4/\ln(25)$ on a tree is in excellent agreement with the numerical estimate for $\Tdyn$ on the HGRRG with the same local code, discussed in the previous section. Another point of agreement is in the bulk energy density per spin $ \frac{2}{7}\langle E_v\rangle$, where $\langle E_v\rangle$ is the local energy per vertex.\footnote{The prefactor $2/7$ arises because a degree $\gdeg$ regular graph with $N_v$ vertices has $\gdeg N_v/2$ edges.} The latter can be evaluated from the fixed-point magnetization $m_*$ of the root by taking 7 independent rooted trees, each with $p_*(\sigma)=(1+m_*\sigma)/2$ at the root, and joining them at a common vertex $v$:
\begin{equation}
    \langle E_v \rangle = \frac{\sum_{\sigma_1,\dots,\sigma_\gdeg}e^{-\beta E_v(\sigma_1,\dots,\sigma_\gdeg)}E_v(\sigma_1,\dots,\sigma_\gdeg) \prod_{i=1}^\gdeg p_*(\sigma_i)}{\sum_{\sigma_1,\dots,\sigma_\gdeg}e^{-\beta E_v(\sigma_1,\dots,\sigma_\gdeg)} \prod_{i=1}^\gdeg p_*(\sigma_i)}.
\end{equation}

The bulk energy densities for the Tanner-Ising model with the $[7,4,3]$ code at each vertex, in the naive Gibbs state (open BC) and with fully polarized BC are shown by the dashed lines in~\autoref{fig:rrg_heating_cooling}. While the former is a smooth function of the temperature, the latter has a discontinuity at $T=\Tdyn$, indicating the first-order transition, and has excellent quantitative agreement with the energy density obtained from Monte Carlo numerics on the locally tree-like HGRRG initialized in a fully polarized initial state. We thus find that the tree graph model correctly captures the memory transition of the closed graph. 

\subsubsection{Spin glass transition}\label{sec:sg-tree}

The memory transition occurs when the valleys with the largest free-energy barriers---those surrounding codewords--- become stable. Operationally, this means that, given a boundary configuration which is consistent with a global codeword, we can confidently deduce the state of a spin in the bulk. On the other hand, at finite temperature, typical states belong to valleys not surrounding a codeword. The spin glass transition is where those typical valleys, induced by $\alpha=1$ conditions, become stable. We can frame the reconstruction of typical boundaries as a classical decoding problem in the following sense~\cite{evans2000broadcasting,mezard2006reconstruction}.

To generate a typical boundary condition, consider sampling a configuration at inverse temperature $\beta$ in an iterative manner, a process that runs in the opposite direction of the recursive flow (see~\autoref{fig:tree-sketch}a). After fixing the root to $\sigma_0=\pm 1$ at random, we noisily \textit{broadcast} this spin to the leaves, one layer at a time: in each layer, if the input branch to vertex $v$ carries spin $\sigma_1$, then the spin configuration $\sigma_2,...,\sigma_\gdeg$  on the outgoing branches is sampled with probability $\propto \exp[-\beta E_v(\sigma_1,\dots,\sigma_\gdeg)]$ (cf.~\autoref{fig:tree-sketch}b.ii). Running this process up to depth $r$, we can then freeze the boundary configuration $\vec{\sigma}_\partial$ and, from it, try to \textit{reconstruct} the root spin.\footnote{The ensemble of $\vec{\sigma}_\partial$ of course depends on depth $r$, but we omit the superscript to avoid burdening the notation.} The optimal decoding strategy computes a conditional root magnetization, $m_0(\vec{\sigma}_\partial)$,  and guesses that $\sigma_0=\mathrm{sgn}[m_0(\vec{\sigma}_{\partial})]$, succeeding with probability $(1 - |m_0|)/2$. This conditional magnetization belongs to a probability distribution, $Q^{(r)}(m)$\
\begin{equation}\label{eq:Q-r}
    Q^{(r)}(m) = \mathrm{prob}[m_0(\vec{\sigma}_\partial) = m] = \sum_{\vec{\sigma}_\partial} p(\vec{\sigma}_\partial) \delta(m - m_0(\vec{\sigma}_\partial)).
\end{equation}

In the previous subsection, we used a particular (polarized) boundary condition to induce a nonzero magnetization. The broadcasting process, in contrast, generates a $\mathbb{Z}_2$-symmetric ensemble of boundary conditions, such that the average magnetization must vanish. Thus, the onset of spin glass order must instead be probed by a higher moment of the magnetization distribution. The variance of a bulk spin, akin to the Edwards-Anderson order parameter, is discussed in~\appref{app:TensorNetwork}; here, applying the reconstruction perspective, we consider the mutual information between the root and the boundary~\cite{evans2000broadcasting,mezard2006reconstruction}\footnote{The integrand is simply the Kullback-Leibler divergence, for a given boundary condition, between the conditional root distribution $p(\sigma_0)=(1 \pm \sigma_0 m)/2$, and the unconditional distribution $p(\sigma_0)=1/2$~\cite{mezard2006reconstruction}.}:
\begin{align}\label{eq:I-r}
I(r) &=\frac{1}{2}\int \sum_{\sigma_0=\pm 1} (1 + \sigma  m) \log_2(1 + \sigma_0 m) \mathrm{d}Q^{(r)}(m).
\end{align}
At depth $r=0$, reconstruction is perfect: the root and boundary are identical, so $Q^{(0)}(m) = [\delta(m-1) + \delta(m+1)]/2$, and $I(0)=1$. The mutual information decreases monotonically with depth, and either converges towards zero (signaling that the bulk and boundary have decoupled, $T>\Tglass$) or plateaus at a finite value ($T<\Tglass$). 

To determine $I(r)$, we simulate a distributional recursion relation for $Q^{(r)}$, generalizing the recursive update for $m_r$ in the previous subsection\footnote{We can interpret this recursion relation as performing \textit{survey propagation} with surveys weighted by Parisi parameter $\alpha=1$, rather than the standard $\alpha=0$~\cite{krzakala2007gibbs}.}:
\begin{equation}\label{eq:eq18}
Q^{(r+1)}(m) \propto \int \delta(m - F(\bm{m})) z(\bm{m}) \prod_{i=2}^\gdeg dQ^{(r)}(m_i)
\end{equation}
where $\bm{m} = (m_2,...,m_s)$, and $z(\vec{m})$ was defined in~\autoref{eq:z}. We simulate~\autoref{eq:eq18} numerically up to large $r_{max}$ via the population dynamics method of Ref.~\cite{mezard2006reconstruction}. Further details on the method are provided in~\appref{app:TensorNetwork}. 

\begin{figure}[t]
\includegraphics[width=\linewidth]{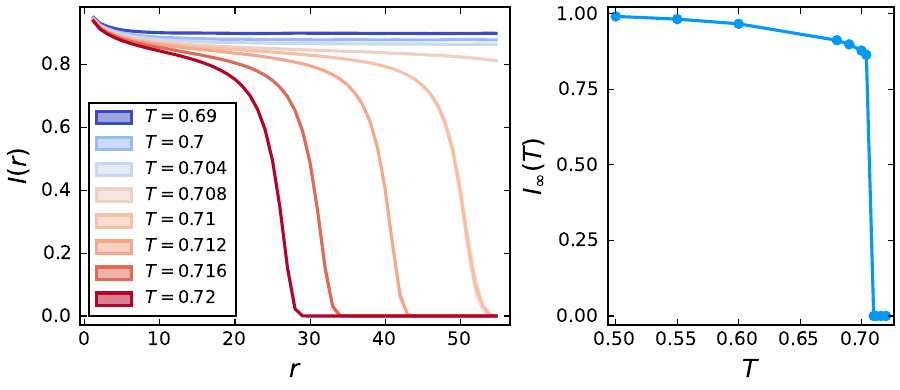}
\caption{Left: Mutual information between root and boundary [\autoref{eq:I-r}] as a function of the tree depth $r$ at varying temperatures $T$. The local code at each node is the Hamming [7,4,3] code. Thin ribbons around each curve indicate the standard error across 10-20 independent simulations. Right: Plateau value of the mutual information, i.e. $\lim_{r\rightarrow\infty} I(r)$, as a function of temperature, determined by averaging over the last 10 iterations of the recursion relation ($r_{max}=80$ for $T\geq 0.704$). Error bars are smaller than the data points. \label{fig:Ir-plateau}}
\end{figure}

The data in~\autoref{fig:Ir-plateau} indicate a discontinuous transition at
\begin{equation}\label{eq:Tglass}
    \Tglass =0.704 \pm 0.005
\end{equation}
is in good agreement with that estimated from HGRRGs in the previous section. The first-order nature of the transition is evidenced by the discontinuity in $I_\infty(T) = \lim_{r\rightarrow\infty}(I(r)).$

Since the recursion relation~\autoref{eq:eq18} does not admit an analytical solution, in \appref{app:TensorNetwork} we modify the problem in a way that does admit an exact analytical treatment. This is achieved by taking $\alpha=2$ in~\autoref{eq:alpha}, which leads to a recursion relation on two copies of the original problem and can be formulated elegantly in the language of tensor networks. From this, one can calculate a critical temperature $T^{(2)}$. Since this modified boundary condition is effectively sampling the boundary conditions from a lower temperature compared to the bulk, we expect it to provide an upper bound on the true spin glass temperature. Indeed, we find $\Tglass < T^{(2)} \approx 1.009 < \Tdyn$.

\subsubsection{Configurational entropy}
In~\autoref{sec:gibbs_shattering}, we obtained a lower bound,~\autoref{eq:pmax_temperature}, on the configurational entropy $s_{\mathrm{conf}}$. This lower bound requires a sufficiently strong expansion parameter $\gamma$, and becomes trivial for the Tanner-Ising models considered in this section, for which it has not even been proven that $\gamma > 0$. In the absence of rigorous bounds, we can nevertheless demonstrate, by numerically estimating the configurational entropy within the population dynamics method, that $s_{\mathrm{conf}}$ is nonzero and increases with temperature at sufficiently low temperatures for a locally tree-like Tanner-Ising model whose local code is the symmetrized Hamming [7,4,3] code.

Recall that for these models, the global Gibbs state factorizes across vertices, allowing us to exactly solve for the entropy density $s_\beta \equiv \frac{1}{n}\frac{\partial(T \log Z_\beta)}{\partial T}$. While $s_\beta$ itself is insensitive to the spin glass transition, below \Tglass, it splits into two contributions: the average entropy density \textit{within} a typical component, and the Shannon entropy density of the  Gibbs decomposition, which is none other than the configurational entropy density.  Thus, we can recover \sconf~by subtracting from $s_\beta$ the intracomponent contribution. This contribution is isolated by imposing $\alpha=1$ BCs: below \Tglass, a typical boundary condition drawn from this ensemble drives the system into a single extremal component at that temperature or energy. Thus, 
\begin{equation}
\sconf(T) = s_{\beta}(T) - s_{\alpha=1}(T).
\end{equation}
At $T=0$ (or equivalently, at energy $\expval{\varepsilon}_\beta = 0$), $s_{\mathrm{conf}} = s_{\beta} = r$, where $r=1/7$ is the global code rate. \autoref{fig:complexity-tree} shows the surplus configurational entropy beyond this bare code rate, $\sconf - r$, as a function of $\expval{\varepsilon}_\beta$ up to $T=0.68<\Tglass$. $s_{\alpha=1}$ is evaluated using population dynamics, which at sufficiently low temperatures, is in excellent agreement with a low-temperature expansion to leading order in $y= \exp(-4\beta)$ (\appref{sec:complexity-tree}):
\begin{equation}\label{eq:s-conf-series}
    s_{\mathrm{conf}}(y) = \frac{1}{7} + 2y(1 - \ln(2y))/\ln(2).
\end{equation}

\begin{figure}[t]
\centering
\includegraphics[width=0.8\linewidth]{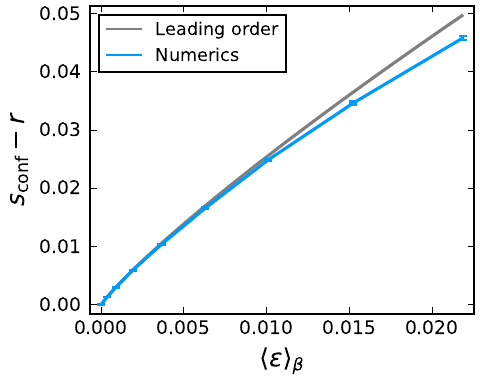}
\caption{Configurational entropy density of the Tanner-Ising model on a locally tree-like graph, minus the global code rate $r=1/7$, as a function of the expectation value of the energy $\expval{\varepsilon}_{\beta}$. The blue curve is the result obtained from population dynamics, averaged over $\geq 50$ independent simulations. The gray curve is a low-temperature expansion to leading order in $\exp(-4\beta)$ [\autoref{eq:s-conf-series}]\label{fig:complexity-tree}.
}
\end{figure}

\section{Summary and Outlook}

In this paper, we have established a direct connection between \emph{code expansion}, a property of large class of classical error correcting codes defined on sparse non-Euclidean graphs, and \emph{spin glass order}, defined in terms of complex energy and free energy landscapes, which dictate the properties of the associated classical spin Hamiltonian at low  temperatures.
We establish this connection rigorously, under the condition that the codes have sufficiently strong expansion and no redundancies. These conditions are fulfilled, for example, for models based on families of Gallager codes, which take the form of (a particular variant of) diluted $p$-spin glasses \cite{franz2001ferromagnet,franz2002dynamic}. 

Our rigorous results can be separated into two main contributions: a characterization of the structure of the energy landscape, and a characterization of the structure of the Gibbs state. For the energy landscape, the set of low energy density configurations decomposes into disjoint \emph{clusters} -- corresponding to local minima of the landscape --- and under the above-mentioned conditions and at low cutoff energy we show (i) that the clusters are separated by an extensive distance in configuration space, (ii) \emph{shattering}: no single cluster contains more than an exponentially small fraction of all configurations below the cutoff, and (iii) \emph{incongruence}: almost all clusters do not contain ground states.

The properties of the energy landscape lead to a decomposition of the Gibbs state into disjoint \emph{components}. These components are restrictions of the (global) Gibbs state to regions of configuration space which are separated from each other by extensive free energy barriers. This leads to a bottleneck under any local, detailed balance obeying dynamics and thereby endows the components with dynamical stability. 
We explicitly construct such a decomposition of the Gibbs state, such that (i) all components are surrounded by bottlenecks (ii) \emph{shattering}: no single component carries more than an exponentially small fraction of the weight, and (iii) \emph{incongruence}: almost all components have no support on the ground states.

The proof of these results is simple and also physically intuitive. First, sufficiently strong code expansion guarantees steep energy barriers around ground states, and the linearity of the energy functional implies that these barriers carry over to states at low but finite energy densities.  
This already implies a decomposition of the energy landscape into multiple components. A simple counting argument, relying on the fact that there are no redundancies of the checks and hence number of states with a given energy can be easily bounded, then allows for a lower bound on the relative weight of these clusters. 
For the Gibbs state results, we then additionally use the fact most of the Gibbs weight is concentrated in an infinitesimal window around the average energy density.

We argued that the barrier structure necessary for the above argument arises naturally in certain spin models which we called \emph{Tanner-Ising models}, defined on expander graphs and with spins subjected to a local constraint.
Motivated by this, we numerically studied two such models of codes that are not rigorously known to realize the conditions of the proof (that is, sufficiently strong expansion and the lack of redundancies) but for which the intuitive picture applies. We called these models . 
The two models use the same local constraints but are defined on locally tree-like random regular graphs (RRGs), and hyperbolic lattices, respectively. 

The family of locally tree-like RRGs allows comparison between extensive dynamical Monte-Carlo simulations on closed finite graphs, and semi-analytic recursive techniques (i.e., the cavity method) applied to the same model on a tree. We find excellent agreement between these different approaches, providing strong evidence that this model undergoes \emph{two} transitions in the structure of its Gibbs state as temperature is lowered, at two distinct temperatures $\Tdyn$ and $\Tglass$, respectively. The family of hyperbolic lattices has small loops and hence does not allow a solution by recursive methods. Still, in Monte-Carlo simulations on closed graphs show the same hallmark features as for the family of RRGs, providing strong evidence that the same two transitions are realized in this setting. 

At high temperatures $T> \Tdyn$, the Gibbs state is unique, with a single component carrying all weight. Below $\Tdyn$, ergodicity is broken \emph{weakly}: there are exponentially many Gibbs state components, but a single one still carries almost all weight of the global state.
Dynamically, this corresponds to a situation where the system serves as a passive memory when initialized in special initial states, while typical initial conditions equilibrate to the global Gibbs state. 
Finally, at $T < \Tglass < \Tdyn$ the Gibbs state shatters and develops spin glass order: no single component carries more than an exponentially small fraction of the weight. Dynamically, this is the temperature below which the system fails to reach the global equilibrium energy under local annealing, and the system serves as a passive memory for typical initial states.

In summary, our results provide a new perspective on sparse, non-euclidean models of spin glasses with linear constraints. While these have been studied extensively in the literature by both rigorous \cite{franz_leone2003replica,panchenko_talagrand2004bounds, DemboMontanari2010_Ising,DemboMontanariSun2013_Factor,DemboMontanariSlySun2014_Potts} and non-rigorous \cite{franz2001ferromagnet, franz2002dynamic,krzakala2007gibbs, ricci_tersenghi2010xorsat} methods, our results are complementary to these existing approaches and, to the best of our knowledge, constitute the first rigorous proof of finite-temperature spin-glass order in models with closed finite-degree interaction graphs. 

Our methods have the advantage of not relying on the locally tree-like nature of the underlying interaction graph, making them applicable, at least in principle, to a larger setting such as models defined on hyperbolic lattices (and also allows generalization to quantum models, explored in Ref. \cite{placke2024tqsg}). Further, in contrast to existing approaches, we are able to construct an explicit sequence of decompositions of the Gibbs state at finite size, and explicitly characterize properties of this decomposition, providing a direct proof of spin glass order. While our approach, unlike the cavity method, does not yield an explicit calculations of quantities of interest such as the (free) energy, or the configurational entropy \cite{franz2002dynamic}, it nevertheless gives a lower bound on the configuration entropy for a specific decomposition of the Gibbs state without having to make additional uncontrolled assumptions (such as the validity of the replica symmetri breaking solution in the cavity method).

Our work suggests multiple directions for future work. While our proof provides a bound on the number and size of Gibbs state components, a more detailed understanding of their distribution would give further insight into the low-temperature physics of these models. This could include a proof of the intermediate, weak ergodicity breaking phase observed in our numerics. At the same time, while we present a basic numerical study of a diluted spin glass model on a graph with small loops (the hyperbolic Tanner-Ising model in \autoref{sec:numerics}), it would be interesting to see whether the presence of small loops \emph{qualitatively} changes any of its thermodynamic properties, e.g. correlations or scaling of the barriers. 

It would also be desirable to instantiate the conditions required by our main rigorous results for a broader class of models, such as the Tanner-Ising models that we studied numerically. The main challenges here lies in establishing sufficiently strong bounds on the expansion parameter $\gamma$, as well as the absence of redundancies. To this end, we note that, in principle, for many of our results to apply it is only necessary to bound the parameter $\gamma$ for a subextensive range of spin flips (i.e. for $\delta(n)$ sublinear in \autoref{eq:expansion}, see also the results of Ref. \onlinecite{placke2024tqsg}) and stronger lower bounds may be possible in this setting. Furthermore, rigorously proving the absence of redundancies (which we numerically confirm for a range of Tanner codes in \appref{app:redundancies}) has not been of central interest in coding theory (since redundancies are generally beneficial) so it is not unlikely that improvements over known bounds are possible. Furthermore, we expect that it should be possible to relax the condition of no redundancies, e.g. by requiring merely that their number is sub-extensive, while retaining our main results.

\section{Acknowledgments}

We thank Daniel Fisher, Silvio Franz, Luis Golowich, David Huse, Steven Kivelson, Ethan Lake, Roderich Moessner, Akshat Pandey, Shivaji Sondhi, Gilles Tarjus and Ruben Verresen, for helpful discussions, and Andrea Montanari for pointing out useful references.

B.P. acknowledges funding through a Leverhulme-Peierls Fellowship at the University of Oxford and the Alexander von Humboldt foundation through a Feodor-Lynen fellowship.
T.R. was supported in part by Stanford Q-FARM Bloch Postdoctoral Fellowship in Quantum Science and Engineering, by the HUN-REN Welcome Home and Foreign Researcher Recruitment Programme 2023 and by the Supported Research Groups Programme, HUN-REN-BME-BCE Quantum Technology Research Group (TKCS-2024/34). Numerical work by G.M.S. was completed using computational resources managed and supported by Princeton Research Computing, a consortium of groups including the Princeton Institute for Computational Science and Engineering (PICSciE) and the Office of Information Technology's High Performance Computing Center and Visualization Laboratory at Princeton University.
This work was done in part while N.P.B. was visiting the Simons Institute for the Theory of Computing, supported by DOE QSA grant \#FP00010905.
V.K. acknowledges support from the Packard Foundation through a Packard Fellowship in Science and Engineering and the Office of Naval Research Young Investigator Program (ONR YIP) under Award Number N00014-24-1-2098. 

\bibliography{references}

\appendix

\section{Gibbs State decomposition for the two-dimensional Ising model\label{app:decomposition_Ising}}

In this appendix, we explain the pure state decomposition of the Euclidean Ising model at sufficiently low temperatures. Recall from the main text that this means we want to find a decomposition of the configuration space into non-overlapping subsets $\Omega_{\pm}$ such that
\begin{subequations}
\begin{align}
\frac{\pGibbs(\partial_\epsilon \Omega_{\pm})}{\pGibbs(\Omega_{\pm})} \xrightarrow[n \to \infty]{} 0
\end{align}
where
\begin{equation}
  \partial_\epsilon\Omega \equiv \{\vec\sigma \notin\Omega; \dist(\vec \sigma,\Omega) \leq \epsilon n \}.
\end{equation}
\end{subequations}
For the Ising model below the critical temperature the two sets can be choosen as $\Omega_{+} = \{\vec \sigma; \sum_j\sigma_j > 2\epsilon n\}$ and
$\Omega_{-} = \{\vec \sigma; \sum_j\sigma_j < -2\epsilon n\}$.

For free boundary conditions, since $p_{\rm G}(\Omega_+) = \pGibbs(\Omega_-)$ the bottleneck condition is equivalent to $\pGibbs(-2\epsilon n < M < 2\epsilon n ) \xrightarrow[n \to \infty]{} 0$ where $M = \sum_j \sigma_j$ is the total magnetization.

This in turn follows from classic results in large deviation theory of the Ising model (see e.g. Theorem 4 in Ref. \onlinecite{schonmann1987second}) which states that if $T < T_c$ then for any $-M_{\rm eq}(T) < a < b < M_{\rm eq}(T)$ there exist $A_1, A_2, c_1, c_2 > 0$ such that
\begin{equation}
    A_1 e^{-c_1 \abs{\partial \Psi}} \leq \pGibbs(a \leq M(\Psi) \leq b) 
    \leq A_2 e^{-c_2 \abs{\partial \Psi}}
\end{equation}
where $M_{\rm eq}(T)$ is the equilibrium magnetization at temperature $T$, $\Psi$ is any connected set of \emph{sites}, and $M(\Psi) = \sum_{j\in\Psi} \sigma_j$.
To show the bottleneck condition we choose $2\epsilon n < M_{\rm eq}(T)$ and $\Psi$ as the whole system. In this case, $\partial \Psi=\Theta(\sqrt n)$.

\section{The ferromagnetic Ising model on a tree}\label{app:TreeIsing}

A useful reference point, which shares some features of our models while differing from them in other respects, is the ferromagnetic Ising model on a regular tree graph, shown in~\autoref{fig:tree_ising_model}a. This is a problem with a long history~\cite{eggarter1974cayley,muller1974new,matsuda1974infinite,chayes1986mean,bleher1995purity,ioffe1996extremality,mezard2001bethe,mezard2006reconstruction,magan2013memory}, and previous studies have revealed a surprisingly rich structure, which we now summarize. 
Here, following Ref.~\onlinecite{baxter2007exactly}, we use the term \textit{Cayley tree} to refer to the finite tree inclusive of its boundary, which comprises a finite fraction of the total volume, while using the term \textit{Bethe lattice} when referring to properties deep in the bulk (where each vertex sees the same local environment) and sending the boundary to infinity. Transitions occur when even the infinitely distant boundary can influence the bulk, due to sharp changes in the nature of the Gibbs state.

\begin{figure}[t]
    \centering
    \includegraphics{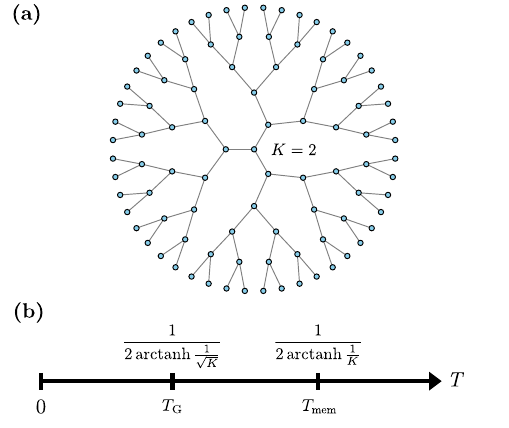}
    \caption{(a) Cayley tree with forward-branching number $K=2$ and $R=5$ generations. Classical Ising variables are placed on each vertex. (b) Phase diagram of the Ising model on the infinite tree as a function of temperature.}
    \label{fig:tree_ising_model}
\end{figure}
\subsection{Phase transitions despite a trivial partition function}

The Ising model on a graph $(V,E)$, with vertices $V$ and edges $E$ is defined by the energy $E_\text{Ising}(\mathbf{\sigma}) = -J\sum_{(i,j) \in E} \sigma_i \sigma_j$, where $\sigma_i = \pm 1$ is an Ising spin on vertex $i \in V$ and we denote an edge $(i,j) \in E$ by its two endpoints. We consider the case of a regular rooted tree graph, with branching ratio $K \geq 2$, such that every vertex has $K+1$ neighbors. In the following, we take $J=1/2$ to match the conventions of~\autoref{eq:Ising}.

As with the redundancy-free LDPC codes considered in this paper, each term in the Ising Hamiltonian can be independently satisfied or violated, since the tree has no closed loops. This means that one can introduce new variables $\tau_{ij} = \sigma_i \sigma_j$, in terms of which the model becomes a paramagnet, with a trivial partition function $\mathcal{Z}(T) = \sum_{\mathbf{\sigma}} e^{-E_\text{Ising}(\mathbf{\sigma}) / T}$ that is a smooth function of the temperature $T$ everywhere~\cite{eggarter1974cayley}. Relatedly, two-point correlation functions decay exponentially at all $T$, as $\langle \sigma_i \sigma_j \rangle = (\tanh{(\beta/2)})^{|i-j|}$ where $\beta = 1/T$ and $|i-j|$ is the graph distance between vertices $i$ and $j$ on the tree (i.e., the length of the unique path connecting them). Despite this apparent triviality on the finite Cayley tree, the model on the infinite Bethe lattice is known to have at least two critical temperatures (\autoref{fig:tree_ising_model}b), whose interpretation is similar to the transitions at \Tglass~and \Tdyn~identified in our models. 

Above \Tdyn, the Gibbs state is \emph{unique}~\cite{bleher1995purity,georgii2011gibbs}: for \textit{any} boundary condition, the bulk behavior is that of a paramagnet. Below \Tdyn, the Gibbs state ceases to be unique, as (uncountably many~\cite{georgii2011gibbs}) different Gibbs states appear. In particular, if we impose ``ferromagnetic'' or ``fully polarized'' BCs, with all spin up (or all down) on the leaves of the tree, the magnetization of a spin deep in the bulk remains finite, signified by stable ferromagnetic fixed points for the recursion from boundary to bulk~\cite{baxter2007exactly}.

However, immediately below \Tdyn, \textit{typical} boundary conditions still flow to the paramagnetic fixed point. In other words, the Gibbs state associated with ``free'' BCs remains pure, down to a second critical temperature \Tglass~\cite{ioffe1996extremality}. Below \Tglass, the free BC Gibbs state ``shatters'' into a mixture of an exponentially large number of different pure states, corresponding to different choices of boundary conditions~\cite{Gandolfo2020}. On a more information-theoretic level, this corresponds to a transition in our ability to reconstruct the value of the central spin based on a knowledge of the boundary spins alone~\cite{evans2000broadcasting}, which has been shown to be related to one-step replica symmetry breaking characteristic of mean-field spin glasses~\cite{mezard2006reconstruction}.

Let us take advantage of the simplicity of the Ising model to cast these two transitions in more familiar thermodynamic terms. First, note that the separation of the two transitions stands in contrast with the thermodynamics of the Ising model on finite-dimensional Euclidean lattices. The latter models possess a single critical temperature below which the Gibbs state is non-unique \textit{and} the free BC is the mixture of the two pure states corresponding to fully polarized boundaries (see \appref{app:decomposition_Ising}). 

Second, an alternative motivation for considering different boundary conditions comes from considering instead a nonzero field.
While the partition function at zero field, $\mathcal{Z}(T)$, is smooth, the partition function on the Cayley tree in the presence of a field, $\mathcal{Z}(T,h) = \sum_{\mathbf{\sigma}} e^{-(E_\text{Ising}(\mathbf{\sigma}) - h \sum_i \sigma_i) / T}$, is only smooth as a function of $h$ at $T > \Tdyn$ and develops a singularity at $h=0$ below this temperature~\cite{eggarter1974cayley,muller1974new}. Taking the thermodynamic limit $n = |V| \to\infty$ first, the limits $h \to \pm 0$ give rise to two different states. They are distinguished from each other by the local magnetization in the bulk of the tree, on vertices close to its center. This can be diagnosed by the one-site susceptibility $\chi_0 \equiv \frac{\partial \langle \sigma_0 \rangle}{\partial h} = \beta \sum_j \langle \sigma_0 \sigma_j \rangle$, where $0$ denotes the central spin. While the individual correlations decay exponentially with distance, there are exponentially many vertices $\propto K^r$ at a distance from the origin. Thus, one has that $\chi_0 \propto (K\tanh({\beta}/2))^n$ which vanishes when $\beta > 1 / \Tdyn = 2 \tanh^{-1}(1/K)$ and diverges when $\beta < 1 / \Tdyn$~\cite{morita1975susceptibility}. In fact, the same is true if we add the magnetic field $h$ \emph{only} on the leaves of the tree. The fixed points of the recursion relation at zero bulk field and polarized boundary conditions are shown in~\autoref{fig:ising}. As in our treatment of \Tdyn~in the Tanner-Ising models in the main text (\autoref{sec:recursive}), the boundary field $h$ sets the initial condition of the recursive flow, $m_0=\tanh(h)$. 

As $T$ is decreased further, a careful study of $\mathcal{Z}(T,h)$ shows that the precise nature the aforementioned non-analyticity at $h=0$ changes at a number of discrete temperatures, but ceases to change below $\Tglass = (2\mathrm{arctanh}(1/\sqrt{K})^{-1}$~\cite{muller1974new}. Below this temperature, in the limit $h \to \pm 0$, the average \textit{magnetization} $\frac{1}{n} \sum_i \langle \sigma_i \rangle$ (which is always dominated by contributions from sites near the boundary) is still zero~\cite{matsuda1974infinite}, but the global \textit{susceptibility} on the Cayley tree, $\chi = \frac{1}{n} \sum_{i,j} \langle \sigma_i \sigma_j \rangle$, diverges as $n\rightarrow\infty$~\cite{matsuda1974infinite,morita1975susceptibility}.

In contrast with the LDPC Tanner codes considered in this work, the Ising model undergoes a continuous transition at both \Tdyn~and \Tglass. Below \Tdyn, the paramagnetic fixed point becomes unstable to infinitesimal ferromagnetic perturbations on the boundaries, i.e., to any boundary distribution with nonzero average magnetization. Similar, below \Tglass, the paramagnetic fixed point becomes unstable to the spin glass fixed point, i.e. to any boundary distribution with nonzero variance~\cite{chayes1986mean}. In contrast, the paramagnetic fixed point for the Tanner-Ising model with local code [7,4,3] code remains stable to both types of perturbations, and ferromagnetic and spin glass order, respectively, are accessed only by sufficiently strong boundary fields.

Another point of contrast is in the relationship between models on the Bethe lattice and locally tree-like graphs containing loops. Whereas the LDPC Tanner codes exhibit spin glass order \textit{without} frustration, the ferromagnetic Ising model on a random regular graph has no glass phase, and in fact has a reconstruction threshold at $T=\Tdyn$, not $T=\Tglass$~\cite{Gerschenfeld2007}. Rather, the tree reconstruction threshold corresponds to the spin glass transition of the $\pm J$ Ising model on a random regular graph, which has both disorder and frustration.
\begin{figure}[t]
\includegraphics[width=\linewidth]{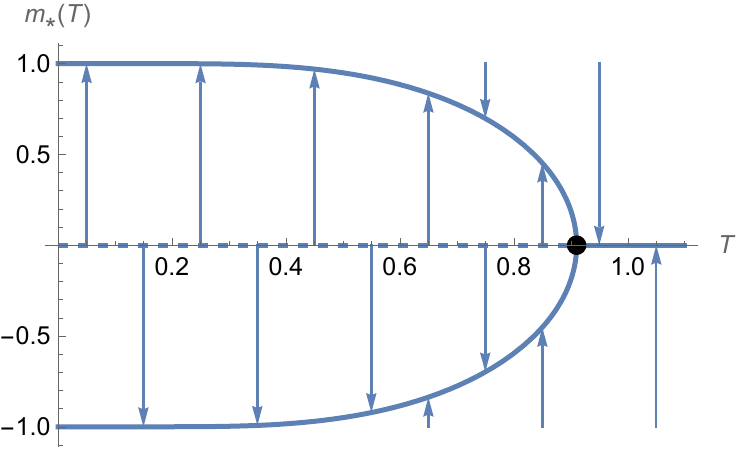}
\caption{Fixed points of the root magnetization under polarized BCs in the Ising model with $K=2$. Arrows indicate the flow under recursion (cf.~\autoref{fig:ferro}).\label{fig:ising} Black point indicates \Tdyn.}
\end{figure}
\subsection{Memory in the Ising model on a tree\label{app:tree_ising_memory}}

It is interesting to consider the relationship of the above transitions to the dynamics of the model. It has been shown~\cite{martinelli2003ising,Martinelli2004} that Glauber dynamics for the Ising model on a tree undergoes a transition at $T = T_\text{SG}$. Above this temperature, Glauber dynamics has a fast mixing time, reaching the global equilibrium state on a time $\mathcal{O}(\log{n})$, while this time-scale becomes polynomial in $n$ below $T_\text{SG}$ (see also Ref. \onlinecite{Berger2005} for a rigorous proof the latter fact for sufficiently large $\beta$).

We present numerical data supporting this picture regarding the dynamics of the bulk magnetization in the Ising model. In particular, we provide evidence for the fact that while the global mixing time changes its dependence on system size only at the spin glass transition \Tglass, this is not true for magnetization of the central site. 
Our physical picture for this is that while the mixing time $\taumix$, which characterizes the relaxation of an arbitrary initial state to equilibrium, scales as $\sim \log n$ at all temperatures above \Tglass, the \emph{mechanism} for this decay is different above and below $\Tdyn$.

Above $\Tdyn$, all sites relax independently with a relaxation time $\tau\in \order{1}$. The global autocorrelation function of any initial state as function of time then decays exponentially as $C(t) \sim e^{-t/\tau}$ and in particular decays to an $\order{1}$ value on a time scale $\taumix\sim\log n$.

For $\Tglass < T < \Tdyn$, the bulk of the tree is ordered under appropriate boundary conditions, but with free boundary conditions the mixing time is still $\taumix\sim\log n$ even when starting from a ferromagnetic initial state. The physical intuition behind this fact however is now less trivial. Since \emph{boundary} spins still relax on an $\order{1}$ time scale and are only at $\log n$ distance from the central spin, domain walls propagating with finite speed inwards from the boundary disorder the whole system on a timescale $\taumix \sim \log n$.

The above picture suggest that while the \emph{mixing time} does not change its system size dependence qualitatively at $\Tdyn$, the memory time of the central spin $\sigma_0$ when initializing the whole system in a uniform state should change from $\tau_0 \sim \order{1}$ above $\Tdyn$ to $\tau_0\sim \order{\log n}$ for $\Tglass < T < \Tdyn$.

\begin{figure}
    \centering
    \includegraphics{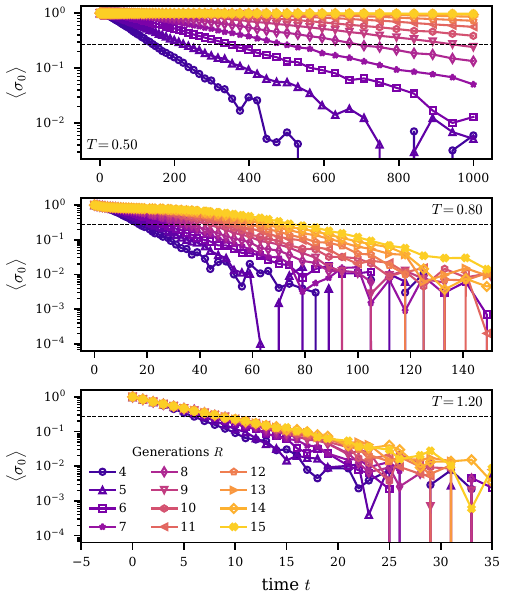}
    \caption{Magnetization of the central site, $\expval{\sigma_0}$, as a function of Monte-Carlo time $t$, when initializing the system in the all-zero state. We show data for three different temperatures, each in one of the three distinct phases (see main text for details). }
    \label{fig:tree_ising_m0t}
\end{figure}

We give numerical evidence for this picture, by computing the expectation value of the central spin, $\expval{\sigma_0}$ as a function of time $t$ when initializing the whole system in the all-zeros state. In \autoref{fig:tree_ising_m0t}, we show this for $K=2$, and a range of system sizes at three different temperatures $T$. The data shown is averaged over $2\times 10^5$ individual traces.
Each temperature is in one of the three phases described above (for $K=2$, we obtain $T_{\rm G} \approx 0.57$ and $\Tdyn \approx 0.91$. At the lowest temperature, $T = 0.50 < T_{\rm G}$, there is clearly a strong system size dependence of the relaxation time, while at the highest temperate, $T = 1.20 > \Tdyn$, there is clearly no such dependence for $R > 7$.
In the intermediate regime, $T=0.80$, there is clearly a weak system size dependence on the time scale that is needed to reach a particular value of the central-spin expectation value. 
However, the slope of the decay of the expectation value at late times is roughly independent of system size. Instead, the expectation value is more long-lived due to an initial plateau, the extend of which increases with system size. This is consistent with the picture draw above, that disordering the bulk happens via domain walls proliferating into the system from the boundaries. 

\begin{figure}
    \centering
    \includegraphics{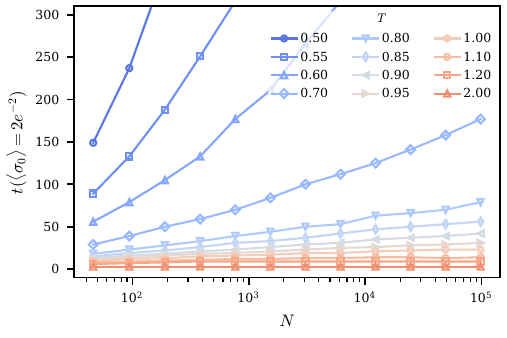}
    \caption{Memory time of the Ising model on the tree as a function of temperature. Shown is the time at which the expectation value of the central spin drops below a threshold (horizontal dashed line in \autoref{fig:tree_ising_m0t}) as a function of system size for different temperatures $T$.}
    \label{fig:tree_ising_tau}
\end{figure}

More quantitatively, we can estimate the memory time of the central spin explicitly, by computing the time at which the average expectation value of the central spin drops below a certain threshold. We here take the threshold to be $2/e^2 \approx 0.27$, which is marked as a vertical dashed line in \autoref{fig:tree_ising_m0t}.
The data is consistent with the fact that below $\Tdyn$, the memory time does scale with the logarithm of the number of spins $N$ (that is it grows linearly in the number of generations $R$). The growth is expected to become super-logarithmically in $N$ (super-linear in $R$) below $T_{\rm G}$.

In summary, our data is consistent with the picture that even in the Ising model on the tree, the three different temperatures have direct consequences for the use of the Ising model as a memory. Special initial conditions (such as all-zeros) have a diverging memory time already below $\Tdyn$. In contrast to the Tanner-Ising models studies in the main text, the Ising model on the tree is not a very \emph{good} memory because of boundary effects.

\section{Construction of Expander Codes\label{app:constructions}}

In this appendix, we explain different constructions of expander codes including, in particular, the models used for the numerical studies in \autoref{sec:numerics}.

\subsection{Gallager codes}

\begin{figure}
    \centering
    \includegraphics{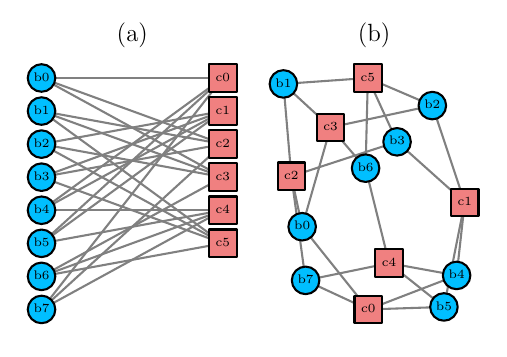}
    \caption{Tanner graph for a Gallager code with $\wbit = 3$, $\wcheck= 4$ and $n=8$. Bits are represented by blue circles and checks by red squares. We draw the same code in (a) a layout separated by bits and checks and (b) in spring layout.}
    \label{fig:gallager_tanner_graph}
\end{figure}

LDPC codes were invented by Robert Gallager in his PhD thesis \cite{gallager1960thesis}. In particular, he introduced the so called $(\wbit, \wcheck)$ random code ensembles, where the degree of every check (the number of bits it is acting on) is $\wcheck$ and the number of checks that involve a given bit (the bit degree) is $\wbit < \wcheck$ for every bit, but otherwise the checks are chosen randomly. 
It is often useful to think of such codes in terms of their \emph{Tanner graph}, which is a bipartite graph where one set of vertices corresponds to bits and another to checks; two vertices are connected only if the check corresponding to one includes the bit corresponding to the other. 
Gallager codes then correspond to $(\wbit,\wcheck)$-regular random bipartite graphs. In \autoref{fig:gallager_tanner_graph}, we show the Tanner graph for a Gallager code with $\wbit = 3$, $\wcheck= 4$ and $n=24$.
One can obtain such graphs, for example, by drawing a random permutation of $n\wbit = m\wcheck$ elements. This defines a bi-partite graph with degree $1$, from which the desired graph can be obtained by combining nodes in groups of $\wbit$ and $\wcheck$ into a single node on the two sides of the graph~\cite{gallager1962low}. 

One can show that the checks generated in this random way are linearly independent with high probability as $n\to\infty$, and hence the code rate approaches $k/n = 1 - \wbit/\wcheck$. In particular
\begin{lemma}[Gallager Codes have no redundancies, Lemma 3.27 in \cite{richardson2008modern}]\label{lem:gallager_no_redundancies}
    Consider the $(\wbit, \wcheck)$ ensemble of classical LDPC codes with $n$ bits, and let
    \begin{equation}
        k_{\rm des} = 
        \begin{cases}
            n\,\left( 1 - \tfrac{\wbit}{\wcheck} \right) - 1 & \text{if}~\wbit~\text{even} \\
            n\,\left( 1 - \tfrac{\wbit}{\wcheck} \right) \phantom{-1} & \text{if}~\wbit~\text{odd} \\
        \end{cases}.
    \end{equation}
    
    Then, for $\wcheck > \wbit \geq 2$, we have for the actual number of logical bits $k := n-\rank H$
    \begin{equation}
        {\rm Prob}(k = k_{\rm des}) \xrightarrow[n\to\infty]{} 1
    \end{equation}
\end{lemma}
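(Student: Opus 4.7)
The plan is to control the rank deficiency $m - \rank H = \log_2 |\ker H^T|$ by a first-moment computation on the set of left-redundancies $r \in \mathbb{F}_2^m \setminus \{0\}$ with $r^T H = 0$. I first note that the all-ones vector $r = \mathbf{1}$ is always a redundancy when $\wbit$ is even (since every column of $H$ has weight $\wbit$) and is never a redundancy when $\wbit$ is odd; this parity effect accounts for the offset in $k_{\rm des}$. It therefore suffices to show that with probability $1-o(1)$ no \emph{other} non-zero redundancy exists, so that the set of non-zero redundancies reduces to $\emptyset$ or $\{\mathbf{1}\}$ depending on parity.

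I would then compute $\mathbb{E}[N_s]$, the expected number of weight-$s$ non-zero redundancies, in the configuration-model description of the $(\wbit,\wcheck)$-ensemble, in which $H$ arises from a uniformly random perfect matching of $m\wcheck$ check-half-edges to $n\wbit$ bit-half-edges. For a fixed $s$-subset $S$ of rows, the event $\sum_{i \in S} H_i \equiv 0$ is equivalent to every bit being incident to an even number of $S$-half-edges, which a standard generating-function argument expresses as
\begin{equation}
\mathbb{E}[N_s] \;=\; \binom{m}{s}\,\frac{[x^{s\wcheck}]\,\bigl(\tfrac{1}{2}[(1+x)^{\wbit}+(1-x)^{\wbit}]\bigr)^{n}}{\binom{n\wbit}{s\wcheck}}.
\end{equation}
Stirling's formula then recasts this as $\mathbb{E}[N_s] \sim e^{n\varphi(\sigma)}$ with $\sigma = s/m$, where $\varphi$ is an explicit function built from the Legendre transform of the generating function's exponent.

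The core of the argument is to show that $\varphi(\sigma) < 0$ on the open interval $(0,1)$, which via Markov's inequality makes the bulk contribution to $\sum_s \mathbb{E}[N_s]$ vanishing. Two boundary regimes need separate attention: for small $s$, I would use the elementary bound $\binom{m}{s} \le m^s$ together with the observation that the probability that a random $s$-tuple of rows sums to zero is polynomially small in $1/n$ for each fixed $s \ge 1$; and in the even case, the involution $r \leftrightarrow r + \mathbf{1}$ maps weight $s$ to weight $m - s$ and reduces the near-$m$ regime (excluding the singleton $\{\mathbf{1}\}$) to the small-$s$ regime. The principal obstacle is the uniform negativity $\varphi(\sigma) < 0$ on $(0,1)$: this is the same saddle-point computation that controls the asymptotic weight enumerator of Gallager's $(\wbit,\wcheck)$-ensemble and appears (in a closely related form) already in Gallager's thesis, but it must be carried out carefully to maintain uniformity near the endpoints where $\varphi$ vanishes, with the $\wbit$-even and $\wbit$-odd cases differing in their endpoint behaviour due to the extra $\mathbb{Z}_2$ symmetry in the even case.
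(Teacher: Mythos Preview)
The paper does not give its own proof of this lemma; it is quoted verbatim as Lemma~3.27 of Richardson--Urbanke and invoked as a black box. Your first-moment argument on left-kernel vectors is the standard route to this result and is essentially what underlies the cited reference: the generating-function identity you write for $\mathbb{E}[N_s]$ in the configuration model is correct, the parity observation about the all-ones vector is the right explanation for the even/odd offset in $k_{\rm des}$, and you have accurately flagged the genuine technical work as establishing $\varphi(\sigma)<0$ on $(0,1)$ together with the endpoint analysis. There is nothing to compare against in the present paper, and your sketch is a sound outline of the textbook proof.
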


Additionally, Gallager codes are with high probability expanding in the sense of \autoref{eq:expansion}~\cite{sipser_spielman1996}. This follows from the fact that random bipartite graphs are unique neighbor expanders

\begin{definition}\label{def:UNexpander}
    A bipartite graph $G=(V=L\uplus R,E)$ with left-degree $\ell$ is a \emph{$(\delta,\gamma)$-unique neighbor expander} if for every $S\subseteq L$ with $|S| \leq \delta |L|$ it holds that $\abs{\Gamma_u(S)} \geq \gamma |S|$, where $\Gamma_u(S) = \lbrace v\in R \mid \exists!\, u\in S : \{ v,u\} \in E \rbrace$ is the set of unique neighbors of $S$.
\end{definition}

We can associate to every biregular graph $G$ with $n = |L| > m=|R|$ a bi-adjacency matrix $\Hcheck_G\in \mathbb{F}_2^{m\times n}$.
It follows that if $G$ is a $(\delta,\gamma)$-unique neighbor expander graph then the LDPC codes defined by~$\Hcheck_G$ is $(\delta,\gamma)$-expanding.
The expansion of Gallager codes then follows from the following theorem.
\begin{theorem}\label{thm:gallagerexp}
    Pick an $(\ell,r)$-biregular graph $G$ uniformly at random and choose $\eta \in \left(0,1-\frac{2}{\ell}\right)$.
    Then it holds with probability at least $1-O\left(n^{-(1-\eta)\ell+1}\right)$ that there exists a $\delta \in (0,\frac{1}{\eta r})$ such that $G$ is a $(\delta,\eta \ell)$-unique neighbor expander.
\end{theorem}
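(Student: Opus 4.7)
My plan is a classical first-moment (union bound) argument in the configuration model for random biregular graphs, built around the reduction of unique-neighbor expansion to ordinary vertex expansion. I would work in the pairing model where each left vertex carries $\ell$ half-edges, each right vertex carries $r$ half-edges, and the graph is drawn from a uniformly random perfect matching of the $n\ell = mr$ half-edges; a standard calculation shows that this matching yields a simple biregular graph with probability bounded away from zero as $n \to \infty$ for fixed $\ell, r$, so any high-probability bound in this model transfers (with at most a constant-factor loss) to the uniform distribution over simple biregular graphs.

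The key reduction is the deterministic inequality $|\Gamma_u(S)| \geq 2|\Gamma(S)| - \ell|S|$, which follows from $u + 2t \leq \ell|S|$ (edges from $S$ must either land on a unique neighbor or else be absorbed by a multiply-hit one, which takes at least two) together with $|\Gamma(S)| = u + t$, where $u$ and $t$ denote the numbers of unique and non-unique right-neighbors. It therefore suffices to show that with high probability every $|S| \leq \delta n$ satisfies the vertex-expansion bound $|\Gamma(S)| \geq (1+\eta)\ell|S|/2$; this immediately yields $|\Gamma_u(S)| \geq \eta\ell|S|$ as required by \cref{def:UNexpander}.

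For fixed $S \subseteq L$ with $|S|=s$ and $a = \lceil(1+\eta)\ell s/2\rceil$, the event $\{|\Gamma(S)|\leq a\}$ is contained in the union over $a$-subsets $T \subseteq R$ of the event $\{\Gamma(S)\subseteq T\}$. In the configuration model, the probability that all $s\ell$ half-edges from $S$ get matched inside a fixed $T$ carrying $ar$ half-edges is $\prod_{i=0}^{s\ell-1}(ar-i)/(mr-i) \leq (a/m)^{s\ell}$, giving the union bound
\begin{equation}
\Pr\!\bigl[\exists\, S,\,|S|=s,\, |\Gamma(S)| \leq a\bigr] \;\leq\; \binom{n}{s}\binom{m}{a}\bigl(a/m\bigr)^{s\ell}.
\end{equation}
Writing $s = \alpha n$ and applying Stirling, the right-hand side takes the form $\bigl(g_{\eta,\ell,r}(\alpha)\bigr)^{\alpha n}$ with $g_{\eta,\ell,r}(\alpha) \sim C\,\alpha^{(1-\eta)\ell/2 - 1}$ as $\alpha \to 0$. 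The hypothesis $\eta < 1 - 2/\ell$ is precisely what makes the exponent of $\alpha$ strictly positive, so $g < 1$ for $\alpha$ below a threshold $\delta$, giving super-polynomial decay once $s$ grows with $n$.

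The dominant contribution therefore comes from constant $s$, which is the most delicate step and the main obstacle in matching the exact exponent. For such $s$, Stirling is of no help and one has to count configurations directly: the failure event on a fixed $S$ requires roughly $(1-\eta)\ell s/2$ "collision events'' among the $s\ell$ half-edges from $S$, each of probability $\Theta(1/n)$. The main difficulty is extracting the precise exponent $-(1-\eta)\ell+1$ quoted in the theorem from this collision count. A naive combination of the vertex-expansion reduction with the union bound $\binom{n}{s} \leq n^s$ loses factors that must be recovered by a more careful, multiplicity-sensitive analysis of small-$|S|$ events (for instance, enumerating pairs of colliding half-edges and the right-vertices that receive them, rather than only tracking neighborhood sizes); once this finer accounting is carried out for each fixed small $s$ and summed against the exponentially decaying tail from $s \to \infty$, the total failure probability is bounded by $O(n^{-(1-\eta)\ell + 1})$, as claimed.
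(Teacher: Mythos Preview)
The paper does not give a substantive proof of this statement: its argument is a one-sentence citation to Theorem~8.7 of Richardson--Urbanke's \emph{Modern Coding Theory} (the first-moment bound for vertex expansion of random biregular graphs in the configuration model) and Lemma~11.3.4 of Guruswami et al.'s \emph{Essential Coding Theory} (the deterministic reduction $|\Gamma_u(S)| \geq 2|\Gamma(S)| - \ell|S|$). Your proposal is precisely a reconstruction of those two ingredients---the reduction you write down is Guruswami's lemma, and your configuration-model union bound is the content of the Richardson--Urbanke theorem---so you are not taking an alternative route but rather unpacking the black boxes the paper leaves closed.

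Your self-diagnosis of the constant-$s$ regime is accurate and worth recording. The bound $\binom{n}{s}\binom{m}{a}(a/m)^{s\ell}$ overcounts by union-bounding over all $a$-subsets $T\supseteq\Gamma(S)$ rather than over actual neighborhood configurations; for $s=2$ this costs exactly one power of $n$ relative to the claimed $O(n^{-(1-\eta)\ell+1})$. The direct collision count you allude to---for a fixed pair of left vertices, bounding the probability of more than $(1-\eta)\ell$ shared right-neighbors by $\sim\binom{\ell}{K}^2 K!\,(r/n\ell)^K$ and union-bounding over the $\binom{n}{2}$ pairs---recovers that power and shows that $s=2$ is indeed the dominant contribution. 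So the ``more careful, multiplicity-sensitive analysis'' you flag is exactly the right fix, and once carried out the argument is complete.
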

\begin{proof}
    This is a direct consequence of Theorem~8.7 in \cite{richardson2008modern} and
    Lemma~11.3.4 in~\cite{guruswami2019essential}.
\end{proof}

Gallager codes are thus good expander LDPC codes.

\subsection{Sipser-Spielman Codes}

Another construction of families of good LDPC codes which are \emph{Tanner codes} \cite{tanner1981recursive} based on expander graphs, also called Sipser-Spielman Codes. Here, one bit is placed on each edge of a given $s$-regular graph, and each vertex is identified with a ``local code'' defined on $n_{\rm L} = s$ bits, as also illustrated in \autoref{fig:tanner_codes} of the main text. A ``global code'' is then defined by identifying as codewords those configurations of bits such that for each vertex, the bit configuration on the incident edges is a codeword of the local code. To define a family of Tanner codes, we usually consider a family of $s$-regular graphs with increasing size and leave the local code fixed. Note that such a family is naturally LDPC.

The parameters $[n, k, d]$ of the global code are determined by the properties of the graph and the parameters $[n_{\rm L}, k_{\rm L}, d_{\rm L}]$ of the local code. 
For example, the global number of logical bits $k$ can be lower bounded using the number of  checks $m$ of the global code
\begin{equation}
    k \geq n - m = n \left(\frac{2 k_{\rm L}}{n_{\rm L}} - 1 \right).
\end{equation}
A family of global codes thus has finite rate if the local rate  is larger than one half: $r_{\rm L} \equiv \frac{k_{\rm L}}{n_{\rm L}} > \tfrac{1}{2}$.

\subsubsection{Expander Graphs}

Sipser and Spielman \cite{sipser_spielman1996} further showed that the Tanner construction yields \emph{good} LDPC codes ($k \propto n$ and $d \propto n$) if one uses defined the code on an \emph{expander graph}. Informally, a graph is called expanding if it has a nonvanishing surface-to-volume ratio for any subset of vertices. 
There are several ways of formalizing this intuition, for example \emph{edge expansion} which for a given graph $G=(V,E)$ is quantified by the so called Cheeger constant
\begin{equation}\label{eq:cheeger}
    h(G) := \min \left\{\frac{\abs{\partial S}}{\abs{S}} : S \subset V, \abs{S} < |V|/2\right\},
\end{equation}
where $\partial S$ denotes the set of edges connecting a subset of vertices $S$ to its compliment, and $\abs{\bullet}$ denotes the number of elements of a set.
We call a family of graphs with increasing size $|V|$ edge expanding if $h(G)$ remains larger than zero even in the limit $|V| \to \infty$. We discuss several constructions very explicitly below.

While expansion can be hard to prove for a family of graphs, it is easy to check numerically for a set of given examples. This is because the Cheeger constant, remarkably, can be related to the spectral properties of the graph via the so-called Cheeger inequalities
\begin{equation}\label{eq:app:cheeger_bound}
    \tfrac{1}{2} \left( s - \lambda_2 \right) \leq h(G) \leq \sqrt{2s \left(s - \lambda_2 \right)}
\end{equation}
where $\lambda_2$ is the second-largest eigenvalue of the adjacency matrix of $G$ (the largest eigenvalue being $s$). We call a graph with a non-trivial value of $\lambda_2 < s$ a \emph{spectral expander}. The Cheeger inequalities imply that edge expansion implies spectral expansion and vise versa.

\subsubsection{Expander Codes from Spectral Expanders}

One can then prove that the Tanner code is expanding, if the underlying graph is a sufficiently good expander and the small code itself has a sufficiently large distance. In particular, Sipser and Spielman derived a lower bound on the distance of the global code \cite{sipser_spielman1996}
\begin{equation}
    d \geq \frac{(d_{\rm L} - \lambda_2) d_{\rm L}}{(s - \lambda_2)s} n,
\end{equation}
which in particular implies a linear distance if $d_{\rm L} > \lambda_2$\footnote{Note that this already implies that the underlying graph must be expanding, since otherwise $\lambda_2 \to s \geq d_{\rm L}$.}. 
Similarly, one can lower bound the expansion parameters itself
\begin{theorem}[Code Expansion from Spectral Expansion, Theorem 12 in \cite{breuckmann2021balanced}]\label{thm:expansion_sipser_spielman}
    Let $\Hcheck$ be the parity check matrix of a Tanner codes defined on a graph $G$ and with local code $[n_{\rm L}, k_{\rm L}, d_{\rm L}]$, and let $\lambda_2$ be the second-largest eigenvalue of the adjacency matrix of $G$. Then for $\delta > 0$, the global code is $(\delta, \gamma_1\gamma_2)$-expanding with
    \begin{subequations}
    \begin{align}
        \gamma_1 &= \frac{\sqrt{\lambda_2^2 + 4 n_{\rm L}(n_{\rm L} - \lambda_2)\delta}-\lambda_2}{n_{\rm L}(n_{\rm L} - \lambda_2)\delta} \\
        \gamma_2 &= \frac{n_{\rm L}(d_{\rm L} - \lambda_2) - 4\delta(n_L - \lambda_2)}{d_{\rm L}n_{\rm L}}
    \end{align}   
    \end{subequations}
\end{theorem}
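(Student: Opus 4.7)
The plan is to combine two properties: the spectral expansion of the base graph $G$, which constrains how many vertices a small set of edges can touch, and the distance $d_{\rm L}$ of the local code, which forces most touched vertices to have violated checks. The target bound $|\Hcheck \vec{x}| \geq \gamma_1 \gamma_2 |\vec{x}|$ will factorize through the intermediate quantity $|V_S|$, the number of vertices of $G$ incident to the support $S \subseteq E(G)$ of $\vec{x}$.

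First I would lower-bound $|V_S|$. Since each edge of $S$ has both endpoints in $V_S$, we have $|S| \leq |E(V_S)|$. The sharp spectral form of the expander mixing lemma for $n_{\rm L}$-regular graphs gives $|E(A)| \leq \frac{(n_{\rm L}-\lambda_2)|A|^2}{2|V|} + \frac{\lambda_2 |A|}{2}$ for any $A \subseteq V$. Substituting $|V| = 2n/n_{\rm L}$ and inverting the resulting quadratic inequality expresses $|V_S|/|S|$ as a monotone-decreasing function of $|S|/n$; evaluating at the worst case $|S| = \delta n$ yields $|V_S| \geq \gamma_1 |S|$ with $\gamma_1$ as stated.

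Second, I would lower-bound the number of triggered vertices $|V_T| \subseteq V_S$ (those where at least one local check fires). Split $V_S = V_T \sqcup V_G$, where $V_G$ is the set of vertices whose local restriction of $\vec{x}$ forms a \emph{nonzero} codeword of the local code---these are exactly the vertices where edges of $S$ are incident but no local check is triggered. By the local distance property, each $v \in V_G$ has at least $d_{\rm L}$ incident edges in $S$, so $d_{\rm L}|V_G| \leq \sum_{v \in V_G} d_v$. Bounding this sum by a second application of the expander mixing lemma---controlling edges of $S$ inside $V_G$ via the sharp bound on $|E(V_G)|$, together with edges of $S$ between $V_G$ and $V_T$---yields an upper bound $|V_G| \leq (1-\gamma_2)|V_S|$, hence $|V_T| \geq \gamma_2 |V_S|$. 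Since each triggered vertex contributes at least one nonzero entry to the syndrome, $|\Hcheck\vec{x}| \geq |V_T| \geq \gamma_1 \gamma_2 |\vec{x}|$, completing the proof.

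The main obstacle is getting the precise stated form of $\gamma_2$. A naive double count using only the global constraint $\sum_v d_v = 2|S|$ gives $|V_G| \leq 2|S|/d_{\rm L}$, which yields the correct leading behavior $\gamma_2 \to (d_{\rm L}-\lambda_2)/d_{\rm L}$ as $\delta \to 0$ but a suboptimal subleading correction in $\delta$. Recovering the sharper expression $\gamma_2 = [n_{\rm L}(d_{\rm L}-\lambda_2) - 4\delta(n_{\rm L}-\lambda_2)]/(d_{\rm L}n_{\rm L})$ requires separating the internal edges of $V_G$ from cross-edges to $V_T$ and applying spectral bounds to each contribution. The remainder of the argument is purely combinatorial given the two spectral inputs.
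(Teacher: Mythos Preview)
The paper does not actually prove this theorem; it is stated as \emph{Theorem 12 in \cite{breuckmann2021balanced}} and simply quoted without argument. So there is no proof in the paper to compare your proposal against.

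That said, your outline follows the standard Sipser--Spielman strategy and the first half is exactly right: applying the sharp expander mixing lemma to $V_S$ and inverting the quadratic gives precisely the stated $\gamma_1$. Your own diagnosis of the second half is accurate: the naive double count $|V_G| \leq 2|S|/d_{\rm L}$ does not recover the stated $\gamma_2$, and the refinement you sketch---separating edges internal to $V_G$ from cross-edges to $V_T$ and applying spectral bounds to each---is indeed the mechanism in the cited reference. You have correctly identified both the structure of the argument and the one place where care is needed; since the paper defers entirely to the external reference for the details, your proposal is as complete as anything the paper itself provides.
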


Note that the bound on $\gamma$ is nontrivial for some $\delta >0$ exactly if $d_{\rm L} > \lambda_2$.

In summary, the Tanner construction yields a good expander LDPC code if the rate of the local code exceeds one half, and the distance of the local code is larger than the spectral gap of the underlying graph.

\subsubsection{Constructions used for our numerical results}

\begin{table}[]
    \centering
    \begin{ruledtabular}
    \begin{tabular}{ccc|ccc}
     \multicolumn{3}{c}{HGRRG} & \multicolumn{3}{c}{Hyperbolic}\\
n & k & d & n & k & d \\
133 & 19 & 32 & 84 & 12 & 25 \\
763 & 109 & & 252 & 35 & \\
4543 & 649 & & 546 & 78 &  \\
27223 & 3889 & & 672 & 96 & \\
    \end{tabular}
    \end{ruledtabular}
    \caption{Parameters of the codes used in the numerical study of Tanner-Ising models in \autoref{sec:numerics}. Note than in all cases, $k = N_{\rm v} (\tfrac{s}{2} - (n_{\rm L} - k_{\rm L}))$, and hence there are \emph{no} redundancies in the parity check matrix of the global code $\Hcheck_{\rm G}$. 
    Note that only the smallest sizes are amenable to computation of the distance since the computational effort required is exponential in $k$.
    }
    \label{tab:codes}
\end{table}

For the numerical study in \autoref{sec:numerics}, we consider two families of Sipser-Spielman codes, both with the local code chosen to be the $[7, 4, 3]$ Hamming code but defined on two different families of expander graphs. We consider $\{3, 7\}$ tessellations of closed hyperbolic manifolds and degree$-7$ high-girth random-regular graphs with increasing girth $g$ (see below for details on how to construct these graphs). We tabulate the resulting code parameters in \autoref{tab:codes}.

\subsection{Constructing expander graphs\label{app:constructing_expander_graphs}}

\begin{table*}[]
    \centering
    \begin{ruledtabular}
    \begin{tabular}{c c c c | c c c c | c c c c c}
 \multicolumn{4}{c}{HGRRG ($s=7$)}  & \multicolumn{4}{c}{Hyperbolic $\{3, 7\}$} & \multicolumn{5}{c}{LPS}\\
 girth &  $N_v$ &   $N_e$ & $\lambda_2$ 
 & systole & $N_v$ &   $N_e$ & $\lambda_2$ 
 & $p$ & $q$ &   $N_v$ &   $N_e$ & $\lambda_2$ \\
   3 &     38 &    133 & 3.92 &   4 &     24 &     84 & 2.65 &   7 &  11 &   1320 &   5280 & 5.12 \\
   4 &    218 &    763 & 4.67 &   6 &     72 &    252 & 4.76 &   7 &  13 &   2184 &   8736 & 5.00 \\
   5 &   1298 &   4543 & 4.80 &   7 &    156 &    546 & 5.30 &   7 &  17 &   4896 &  19584 & 5.11 \\
   6 &   7778 &  27223 & 4.87 &   7 &    192 &    672 & 5.23 &   7 &  19 &   3420 &  13680 & 5.06 \\
 & & & &   10 &  1404 &   4914 & 6.10 &   7 &  23 &  12144 &  48576 & 5.08 \\
 & & & &   8 &  1536 &   5376 & 6.05 &   7 &  29 &  12180 &  48720 & 5.24 \\
 & & & &   12 &  1740 &   6090 & 6.03 &   7 &  31 &  14880 &  59520 & 5.05 \\
 & & & & & & & &  11 &   7 &    168 &   1008 & 6.00 \\
 & & & & & & & &  11 &  13 &   2184 &  13104 & 6.41 \\
 & & & & & & & &  11 &  17 &   4896 &  29376 & 6.00 \\
 & & & & & & & &  11 &  19 &   3420 &  20520 & 6.27 \\
 & & & & & & & &  11 &  23 &  12144 &  72864 & 6.57 \\
 & & & & & & & &  13 &  11 &   1320 &   9240 & 6.24 \\
 & & & & & & & &  13 &  17 &   2448 &  17136 & 6.11 \\
 & & & & & & & &  13 &  19 &   6840 &  47880 & 6.85 \\
 & & & & & & & &  13 &  23 &   6072 &  42504 & 7.17 \\
 & & & & & & & &  13 &  29 &  12180 &  85260 & 6.76
    \end{tabular}
    \end{ruledtabular}
    \caption{Table of expander graphs referenced in this work. The high-girth random-regular graphs and the regular $\{r, s\}$ tessellations of the hyperbolic plane are used for the numerical simulations in \autoref{sec:numerics}, while the LPS expanders are relevant for a construction in \appref{app:landscape_proof}. For the latter we show all instances with $p \leq 13$ and less than $10^5$ edges. In all cases, we tabulate the second largest eigenvalue of the adjacency matrix, $\lambda_2$, of the instances used in this work.}
    \label{tab:graphs}
\end{table*}

There are many known constructions of expander graphs, both random and explicit. We will here review three of them which are relevant for this work which are (1) a family of random-regular graphs with guaranteed girth (that is the size of the shortest loop) (2) regular tesselations of the hyperbolic plane, and (3) a symmetric construction due to  Lubotzky, Phillips, and Sarnak \cite{lps1988expanders}. We tabulate the instances studied in this work in \autoref{tab:graphs}.

\subsubsection{High-girth random-regular graphs}

Random regular graphs (RRGs) with degree $s\geq 3$ have been shown to be expanders with high probability by Friedman \cite{friedman2003rrg}. We use the construction from Ref. \onlinecite{linial2019rhgrg}, which constructs RRGs with high girth $g$ (the girth is the size of the smallest loop in the graph), modified by some heuristics to improve efficiency.
In particular, given a degree $s \geq 3$, we construct an RRG with girth at least $g$ on 
$N_v = (s-1)^{(g-1)}$ vertices as follows.

Initialize $G = (V, E)$ as the cycle graph with $N_v$ vertices. Then while the minimum degree in $G$, $s_{\rm min}$ is less than $s$ do
\begin{enumerate}
    \item Choose a vertex $v$ randomly among those vertices in $G$ with degree $s_{\rm min}$
    \item Choose a second vertex $u$ among those vertices in $G$ with degree $s_{\rm min}$ and graph distance ${\rm dist}(u, v) > g$
    \item Add the edge $(u, v)$ to $G$
\end{enumerate}

The procedure described above can get stuck in principle, for example if at any point all vertices with minimum degree left in the graph are close.
However, as shown in Ref. \onlinecite{linial2019rhgrg}, the procedure terminates with high probability.

Of course, the family of RRGs constructed here is not exactly the same as the one considered by Friedman in Ref. \onlinecite{friedman2003rrg} and hence the fact that they are expanding is not rigorously established. 
Still, in \autoref{tab:graphs} we show the second eigenvalue of the adjacency matrix, $\lambda_2$, for the examples used in the main text. Although $\lambda_2 > 3 = d_L$ (for the $n=7$ Hamming code), which results in a trivial bound on the expansion parameter $\gamma$, we find that it is bounded away from the degree ($\lambda_2  < s$) and hence the graphs are edge expanders with nonzero Cheeger constant. In fact, for HGRRGs, we even have $\lambda_2 \leq 2\sqrt{s-1}$ meaning that these graphs are \emph{Ramanujan expanders}.

\subsubsection{Regular tessellations of closed hyperbolic manifolds}

It is well known that the perimeter of any connected area scales with the size of the enclosed area in negatively curved space. Regular tessellations of closed hyperbolic manifold are hence expanding, intuitively, due to the negative curvature of the underlying manifold.

Regular tessellations are characterized by their so called Schl\"afli symbol $\{r, s\}$, which denotes that the tiling consists of regular $r$-gons and $s$ such polygons meet at every corner. 
The hyperbolic plane in admits an infinite number of regular tessellations, for any Schl\"afli symbol $\{r, s\}$ as long as $r^{-1} + s^{-1} < \tfrac{1}{2}$.

A detailed introduction to the construction of regular tessellations of closed and orientable hyperbolic manifolds can be found for example in Ref. \onlinecite{breuckmann2016hyperbolic}. In the following, we will simply describe the procedure without motivating its particular details.
We associate with a particular tessellation $\{r, s\}$ of the infinte plane a group of orientation and distance preserving maps
\begin{equation}
    G_{r, s}^+ = \expval{\rho, \sigma \mid p^r=\sigma^s = (\sigma\rho)^2 = 1}
\end{equation}
where $\expval{S\mid R}$ denotes the presentation of a group with generators $S$ and relations $R$. 
Then, to obtain a tessellation of an orientable, closed surface we consider the quotient group $G_{r, s}^+ / H$, where $H$ is a torsion free, normal subgroup of $G_{r, s}^+$.
Note that such subgroups have been tabulated e.g. by Conder \cite{conder_hyperbolic} for the case of triangle groups, from which $G_{r, s}^+$ follows by taking $\rho = xy$ and $\sigma=yz$.
The quotient group $G_{r, s}^+ / H$ can be constructed using standard computer algebra software such as \texttt{GAP}, \texttt{MAGMA}, or \texttt{sage}. To construct a graph, note that vertices, edges, and faces of the tessellation correspond to left cosets of the subgroups $\expval{\sigma}$, $\expval{\rho}$, and $\expval{\rho\sigma}$, respectively. Further, vertices, edges, and faces are incident if and only if their associated cosets share an element.

\subsubsection{Cayley Graphs of Non-Abelian Simple Groups}

Finite simple groups are the fundamental building blocks of all finite groups. Analogous to prime numbers for integers, they cannot be decomposed into smaller normal subgroups. 
They include families like alternating groups, groups of Lie type (such as $\operatorname{PSL}_n$ and classical groups), and sporadic groups.

One can establish a connection between the representation theory of these groups and the expansion of their Cayley graph. This is done via Fourier analysis of the adjacency matrix:
for a Cayley graph $\operatorname{Cay}(G,S)$ with symmetric generating set $S$, the eigenvalues of the adjacency matrix are precisely $\lambda_\chi = \sum_{s\in S} \chi(s)$, where $\chi$ ranges over all irreducible characters of $G$. 
The trivial representation yields the largest eigenvalue $|S|$, while expansion is controlled by the second-largest eigenvalue in absolute value.

The intuition behind many results on Cayley expanders for finite simple groups is that when $S$ is chosen appropriately, the character sums for non-trivial representations remain bounded away from the trivial case. 
The representation-theoretic perspective thus transforms the combinatorial problem of understanding graph structure into the algebraic problem of controlling character sums.
This enables a characterization through tools from the harmonic analysis on finite groups.

In \cite{FSG_expanders} it was shown that there exist constants $k \in \mathbb{N}$ and $\epsilon > 0$ such that every non-abelian finite simple group~$G$ (except possibly Suzuki groups) possesses a generating set $|S|$ of size $k$ making $\operatorname{Cay}(G,S)$ an edge expander with $h(\operatorname{Cay}(G,S))>\epsilon$. 

The case for Suzuki groups was later proved in \cite{suzuki_expanders}.

In fact, expander graphs with \emph{optimal} spectral expansion can be constructed from the finite simple groups $\operatorname{PSL}_2(\mathbb{F}_q)$ as shown in \cite{margulis1988expanders,lps1988expanders}.

\section{Complex Energy Landscapes and Spin Glass Order from Expansion\label{app:complexity}}

In this appendix, we formulate the rigorous versions of, and prove, \cref{thm:landscape_expanders} and \cref{thm:sg_expanders} of the main text. The appendix is designed to be self-contained, at the cost of re-introducing some nomenclature and definitions already presented in the main text.

\subsection{Clustering and Configurational Entropy of the Energy Landscape}

We start by characterizing the complexity of the low \emph{energy landscape}. We do this by analyzing the structure of the configuration space below a given cutoff energy density $\epsilon$:
\begin{equation}\label{eq:omega_eps}
    \Omega(\epsilon) = \{\vec x; \abs{H \vec x} < \epsilon n\}.
\end{equation}
We will show that this set, for certain models at sufficiently low $\epsilon$, can be decomposed into exponentially many disjoint and far-separated ``clusters". Each cluster is associated with a local minimum in the energy landscape: it is impossible to traverse from one cluster to another via a path that flips only (at most) a small fraction of spins at a time without passing through an extensive energy barrier.  
Further, most clusters do not contain ground states, and 
no cluster contains more than an exponentially small fraction of the total number of states below the energy cutoff. Taken together, these features define a complex energy landscape, which will be quantified by the configurational entropy of the cluster decomposition.

To proceed in making these statements precise and rigorous, let us review some definitions, in particular that of expansion, clustering and their relation (from Refs. \onlinecite{anshu2022cnlts, anshu2022nlts}).

\begin{definition}\label{def:expansion}
    Let $\delta,\gamma > 0$.
    We call a binary matrix $\Hcheck\in\mathbb{F}_2^{m\times n}$ \emph{$(\delta,\gamma)$-expanding} if for all $\vec x\in \mathbb{F}_2^{n}$ with $\abs{\vec x} \leq \delta n$ it holds that $\abs{\Hcheck \vec x}\geq \gamma \abs{\vec x}$.

    We call a sequence of binary matrices $\{\Hcheck_i\}_{i\in \mathbb{N}}$ \emph{$(\delta,\gamma)$-expanding} if $\Hcheck_i$ is \emph{$(\delta,\gamma)$-expanding} $\forall i$.
\end{definition}

Note that from \Cref{def:expansion} it immediately follows that the distance of a code defined by a $(\delta,\gamma)$-expanding parity check matrix $\Hcheck$ is larger than $\delta n$.

\begin{definition}\label{def:clustering}
    Let $\epsilon > 0$ and $0<\mu<\nu$.
    We call a binary matrix $\Hcheck\in\mathbb{F}_2^{m\times n}$ \emph{$(\epsilon,\mu,\nu)$-clustering} if for all $x\in \mathbb{F}_2^{n}$ with $\abs{\Hcheck \vec x} \leq \epsilon n$ it holds that either $\abs{\vec x}\leq \mu n$ or $\abs{\vec x}\geq \nu n$.
\end{definition}

\begin{lemma}[Lemma 8 in \cite{anshu2022nlts}, see also Theorem 9 in~\cite{anshu2022cnlts}]\label{lem:clustering}
    If $\Hcheck\in\mathbb{F}_2^{m\times n}$ is $(\delta,\gamma)$-expanding then it is $(\epsilon,\frac{\epsilon}{\gamma},\delta)$-clustering for any $0 < \epsilon < \delta \gamma$.
\end{lemma}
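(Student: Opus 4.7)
The plan is to prove the lemma by a direct pointwise argument, treating the expansion hypothesis as a contrapositive. I would fix any $\vec x \in \mathbb{F}_2^n$ with $\abs{\Hcheck \vec x} \leq \epsilon n$ and show that its Hamming weight $\abs{\vec x}$ cannot lie in the intermediate ``gap'' region strictly between $\epsilon n/\gamma$ and $\delta n$. This is exactly the clustering conclusion with $\mu = \epsilon/\gamma$ and $\nu = \delta$.

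Concretely, I would split into two cases based on $\abs{\vec x}$. If $\abs{\vec x} \geq \delta n$, the second clause of the clustering conclusion holds and there is nothing further to do. Otherwise $\abs{\vec x} < \delta n$, so the $(\delta,\gamma)$-expansion hypothesis applies and yields $\abs{\Hcheck \vec x} \geq \gamma \abs{\vec x}$. Combining this with $\abs{\Hcheck \vec x} \leq \epsilon n$ and dividing by $\gamma > 0$ gives $\abs{\vec x} \leq \epsilon n/\gamma$, which is the first clause of clustering.

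The only bookkeeping is to check consistency with the definition of clustering, which requires $\mu < \nu$. This is precisely where the assumption $0<\epsilon < \delta\gamma$ enters: it guarantees $\epsilon/\gamma < \delta$, so that the forbidden gap is nonempty and the statement is nontrivial. I do not foresee any real technical obstacle: the lemma is essentially a restatement of the expansion inequality read in the direction $\abs{\vec x} \leq \abs{\Hcheck \vec x}/\gamma$ whenever the expansion hypothesis is applicable, and the linearity of $\Hcheck$ over $\mathbb{F}_2$ plays no role beyond what is already folded into the definitions of expansion and clustering.
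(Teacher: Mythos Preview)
Your proposal is correct and is essentially the same argument as the paper's: both prove the lemma by a direct case split on $\abs{\vec x}$ and invoke the expansion inequality. The only cosmetic difference is that the paper splits at $\abs{\vec x} = \epsilon n/\gamma$ and uses the contrapositive of expansion to conclude $\abs{\vec x} \geq \delta n$, whereas you split at $\abs{\vec x} = \delta n$ and apply expansion directly; these are two orderings of the same one-line computation.
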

\begin{proof}
    Assume $H$ $(\delta, \gamma)$-expanding and $\abs{H \vec x} < \epsilon n$. If $\vec x < \epsilon/\gamma n$ then we are done, otherwise $\abs{H \vec x} \leq \epsilon n \leq \gamma \abs{\vec x}$. Since $H$ is $(\delta, \gamma)$ expanding this can only be true if $\abs{\vec x} \geq \delta n$.
\end{proof}

One can use the definition above to define an equivalence relation on states such that each state of energy $\leq \epsilon n$ belongs to exactly one equivalence class (the ``clusters'').
In Ref.\ \onlinecite{anshu2022nlts}, the authors show that for states in $\Omega(\epsilon)$ with $\epsilon < \delta\gamma/2$ such a relation is given by
\begin{equation}
    \vec x \sim \vec y 
    ~\Leftrightarrow~
    \abs{\vec x \oplus \vec y} \leq 2 \mu n
\end{equation}

We will define a slightly different relation, and call two states equivalent if they are connected by a sequence of spin flips that flip only a (small) finite fraction of spins at each step.
\begin{definition}\label{def:equiv_xi}
    Consider some subset $\Omega \subset \mathbb{F}_2^n$. Then, 
    $\vec x \sim \vec y$ iff there exists a sequence of states $\{\vec v_i\}_{i=0}^\ell$ such that $\vec x = \vec v_0$, $\vec v_\ell = \vec y$, $\vec v_i \in \Omega$ $\forall i$, and $\abs{\vec v_i \oplus \vec v_{i+1}} < \xi n$ $\forall i$ where $\xi > 0$ is a constant. 
\end{definition}

The above is a valid equivalence relation on any subset of states. 
Reflexivity is obvious ($\vec x\sim \vec x$, choose empty path), symmetry ($\vec x\sim \vec y \implies \vec y\sim \vec x  $) follows from path reversal, and transitivity ($\vec x\sim \vec y \wedge \vec y\sim \vec z  \implies \vec x\sim \vec z  $) follows from path concatenation.

By definition, any equivalence relation on a set $\Omega$ induces a decomposition into disjoint \emph{equivalence classes}\footnote{The classes are disjoint by transitivity.}. The particular equivalence relation defined above further implies that states belonging to different equivalence classes are separated at least by Hamming distance $\xi n$:

\begin{definition}[Cluster decomposition of low-energy states]\label{def:landscape_cluster_decomposition}
    Consider $H\in \mathbb{F}_2^{m\times n}$, an energy cutoff $\epsilon >0$ and an equivalence cutoff $\xi > 0$. 
    We can then decompose the set of low energy states of $H$ into clusters:
    \begin{equation}\label{eq:landscape_cluster_decomposition}
    \Omega(\epsilon) = 
    \biguplus_{j} C_{j}.
    \end{equation}
    where $\Omega(\epsilon)$ is the set of low energy states defined in \autoref{eq:omega_eps}, and we define the \emph{clusters} $C_j$ as equivalence classes under the relation in \cref{def:equiv_xi} with parameter $\xi$.

    By definition, the clusters are separated by extensive distance: $\dist(C_i, C_j) > \xi n$ for $i\neq j$. 
\end{definition}

While the above decomposition is technically valid for all $\xi > 0$, it becomes trivial when the parameter $\xi$ is too large (for example, if $\xi = 1$ there is only a single cluster containing all states). We will thus always have in mind the case of $\xi \ll 1$, but give precise upper bounds in all our statements below. 

For $\xi\ll 1$, the decomposition characterizes which states in $\Omega(\epsilon)$ are connected by moves that flip at most a small fraction of spins at a time. In particular, states in distinct clusters are \emph{not} connected by such moves while staying below the energy density cutoff $\epsilon$. In this sense, clusters can be viewed as local minima of the energy landscape: Going from one cluster to another via a path comprising small moves requires passing through a configuration with energy above $\epsilon n$.

We have already shown that different clusters are separated by an extensive distance; next, in order to show that $\Omega(\epsilon)$ has a complex decomposition with weight on many clusters, we will need to bound the diameter of individual clusters. This can be done, if the energy cutoff $\epsilon$ is sufficiently small, and $H$ is sufficiently expanding:

\begin{lemma}\label{lem:cluster_diameter}
    Let $H\in \mathbb{F}_2^{m \times n}$ and $(\delta, \gamma)$ expanding, and $\vec x, \vec y\in \Omega(\epsilon)$ with $\epsilon < \tfrac{1}{2}\delta\gamma$ and $\vec x \sim \vec y$ under the relation defined in \cref{def:equiv_xi} with $\xi < \frac{1}{2}(\delta - 2\epsilon/\gamma)$. Then $\abs{\vec x \oplus \vec y} \leq 2\epsilon n/\gamma$, i.e. ${\rm diam}(C_j) \leq 2\epsilon n/\gamma$ $\forall\; j$ in \cref{def:landscape_cluster_decomposition}.
\end{lemma}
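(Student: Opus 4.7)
The plan is to exploit two facts simultaneously: the \emph{linearity} of the syndrome map $\vec{x}\mapsto \Hcheck\vec{x}$ over $\mathbb{F}_2$, and the \emph{clustering} already established in \cref{lem:clustering}. The key observation is that although \cref{lem:clustering} is phrased about individual low-syndrome strings, we can apply it to the \emph{difference} of two low-energy states: if $\vec{u}, \vec{w} \in \Omega(\epsilon)$ then $\abs{\Hcheck(\vec{u}\oplus\vec{w})} = \abs{\Hcheck\vec{u} \oplus \Hcheck\vec{w}} \leq 2\epsilon n$ by the triangle inequality in Hamming weight. Since $2\epsilon < \delta\gamma$ by hypothesis, \cref{lem:clustering} (applied with cutoff $2\epsilon$) yields the dichotomy $\abs{\vec{u}\oplus\vec{w}} \leq \tfrac{2\epsilon}{\gamma}n$ or $\abs{\vec{u}\oplus\vec{w}} \geq \delta n$, with no admissible values in between.

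Given this dichotomy, the remaining task is to rule out the ``large difference'' case along any admissible path from $\vec{x}$ to $\vec{y}$. I would do this by induction along a witness sequence $\vec{v}_0,\vec{v}_1,\dots,\vec{v}_\ell$ from \cref{def:equiv_xi}. The inductive claim is that $\abs{\vec{v}_0 \oplus \vec{v}_i} \leq \tfrac{2\epsilon}{\gamma}n$ for every $i$. The base case $i=0$ is trivial. For the inductive step, assuming the bound at step $i$, the triangle inequality gives
\begin{equation}
   \abs{\vec{v}_0 \oplus \vec{v}_{i+1}} \leq \abs{\vec{v}_0 \oplus \vec{v}_{i}} + \abs{\vec{v}_i \oplus \vec{v}_{i+1}} \leq \tfrac{2\epsilon}{\gamma}n + \xi n,
\end{equation}
which is strictly less than $\delta n$ precisely because $\xi < \tfrac{1}{2}(\delta - 2\epsilon/\gamma) < \delta - 2\epsilon/\gamma$. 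Applying the clustering dichotomy to $\vec{v}_0 \oplus \vec{v}_{i+1}$ (which lies in the small-difference regime by the bound above), we conclude $\abs{\vec{v}_0 \oplus \vec{v}_{i+1}} \leq \tfrac{2\epsilon}{\gamma}n$, closing the induction.

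Taking $i=\ell$ gives $\abs{\vec{x}\oplus\vec{y}} = \abs{\vec{v}_0 \oplus \vec{v}_\ell} \leq \tfrac{2\epsilon}{\gamma}n$, which is the desired bound on the cluster diameter; since $\vec{x},\vec{y}$ were arbitrary members of the same cluster $C_j$, the statement $\mathrm{diam}(C_j) \leq 2\epsilon n/\gamma$ follows. The only subtle point in this plan is the precise choice of constants: the factor of $\tfrac{1}{2}$ in the bound on $\xi$ is somewhat stronger than what the induction strictly requires (the weaker bound $\xi < \delta - 2\epsilon/\gamma$ would suffice), but presumably the stated constant leaves headroom that will be used in subsequent arguments (e.g.\ when unioning with boundary neighborhoods $\partial_\eta C_j$). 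I do not anticipate any genuine obstacle: the combinatorics is essentially forced, and the real content lies in the two earlier inputs, linearity of $\Hcheck$ and \cref{lem:clustering}.
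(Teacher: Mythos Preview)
Your proof is correct and follows essentially the same route as the paper's: both use linearity of the syndrome map to apply the clustering dichotomy (\cref{lem:clustering} with cutoff $2\epsilon$) to differences $\vec{v}_0\oplus\vec{v}_i$ along a witness path, and both use the bound on $\xi$ to prevent a single step from jumping across the forbidden gap $(\tfrac{2\epsilon}{\gamma}n,\delta n)$. The only cosmetic difference is that the paper phrases the path argument as a contradiction (assume $|\vec{x}\oplus\vec{y}|>\tfrac{2\epsilon}{\gamma}n$, then some intermediate $\vec{v}_j$ must land in the gap, contradicting clustering), whereas you run the same logic as a clean forward induction; your observation that only $\xi < \delta - 2\epsilon/\gamma$ is actually needed here is also accurate.
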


\begin{proof}
    We prove by contradiction. 
    
    Define for convenience $\mu = 2\epsilon/\gamma$ and suppose $\abs{\vec x \oplus \vec y} > \mu n$ with $\vec x \sim \vec y$.
    Because $H$ by assumption is $(\delta, \gamma)$-expanding and $2\epsilon < \delta\gamma$, $H$ is also $(2\epsilon,\mu,\delta)$ clustering by \cref{lem:clustering} and since $\abs{H(\vec x \oplus \vec y)} \leq \abs{H\vec x} + \abs{H \vec y} < 2\epsilon n $ we have $\abs{\vec x \oplus \vec y} > \delta n$.
    
    Since $\vec x \sim \vec y$, by definition, there exists a sequence $\{\vec v_j\}_{j=0}^{\ell} \subset \Omega(\epsilon)$ such that $\abs{\vec v_i \oplus \vec v_{i+1}} < \xi n < \tfrac{1}{2}(\delta-\mu) n$, $\vec x = \vec v_0$, $\vec v_\ell = \vec y$.

    These two statements together however imply that there exists $0 \leq j \leq \ell$ such that 
    $\mu n < \abs{\vec v_0 \oplus \vec v_j} < \delta n$.
    By clustering of $H$ this means $2\epsilon < \abs{H(\vec v_0 \oplus \vec v_j)} \leq \abs{H\vec v_0} + \abs{H\vec v_j}$ which in turn implies $\vec v_j \notin \Omega(\epsilon)$. This however is in contradiction with $\{\vec v_j\}_{j=0}^{\ell} \subset \Omega(\epsilon)$ and hence in contradiction with $\vec x\sim\vec y$.
\end{proof}

We now turn to quantify the ``complexity'' of the energy landscape. To this end, we define the \emph{configurational entropy}, as well as the \emph{configurational min-entropy} of the set $\Omega(\epsilon)$.

\begin{definition}[Configurational entropy of the landscape]\label{def:landscape_sconf}
    Given $H\in \mathbb{F}_2^n$, and a decomposition of the set of low-energy states as defined in \cref{def:landscape_cluster_decomposition}, we define the \textbf{configurational entropy} of the landscape as
    \begin{subequations}
    \begin{align}
        \mathcal S_{\rm conf}(\epsilon) \equiv& n \,\mathfrak s_{\rm conf} = -\sum_j \mathfrak w_j \log_2 \mathfrak w_j \\
        \mathfrak w_j =& \frac{\abs{C_j}}{\abs{\Omega(\epsilon)}}.
    \end{align}
    We also define the \textbf{configurational min-entropy} of the landscape as
    \begin{equation}
        \mathcal S_{\rm conf}^{(\rm min)}(\epsilon) \equiv n \,\mathfrak s_{\rm conf}^{(\rm min)} = \log_2 \frac{1}{{\rm max}\{\mathfrak w_j\}}
    \end{equation}
    \end{subequations}
    Note that naturally, $\mathcal S_{\rm conf} \geq \mathcal S_{\rm conf}^{(\rm min)}$.
\end{definition}

Both entropies defined above depend on the matrix $H$, the cutoff $\epsilon$, and the parameter of the equivalence relation, $\xi$.
Thinking of the $\mathfrak w_j$ as a probability distribution over the clusters, then $\mathcal{S}_{\rm conf}$ and $\mathcal{S}_{\rm conf}^{(\rm min)}$ are exactly the Shannon and min-entropy of this distribution, respectively.
The former characterizes the number of distinct ``relevant'' components that make up $\Omega(\epsilon)$, while the latter characterizes the size of the largest component.
If the configurational entropy is extensive, i.e. $\mathfrak s_{\rm conf}(\epsilon) > c$ for some $c > 0$, then there are exponentially many (in system size) components of comparable size.
Thus, if the entropy density is larger at finite cutoff $\epsilon > 0$ as it is at $\epsilon = 0$, then almost all clusters contain no ground states. Physically, this means that the energy landscape is dominated by \emph{local} minima, and we will call this \emph{incongruence} below. Note that this terminology is inspired by, but our definition differs from, the idea of incongruence as introduced in Ref. \onlinecite{huse_fisher1987incongruent}.
If the configurational min-entropy is extensive, then no single component contains more than an exponentially small fraction of the total number of states in $\Omega(\epsilon)$. We will call this \emph{shattering} below.

Remarkably, by only assuming sufficiently strong expansion ($\gamma > \gamma^*$) and the fact that $H$ is full rank, we can lower bound the configurational entropy density and show that it is both finite, and an increasing function of the cutoff $\epsilon$. 
Intuitively, sufficiently strong expansion guarantees that the size of individual clusters is small enough (by \cref{lem:cluster_diameter}) such that capturing the total weight of $\Omega(\epsilon)$ requires many clusters.

\begin{theorem}[Complexity of the Energy Landscape]\label{thm:energy_landscape_shattering}
    Consider $H\in \mathbb{F}_2^{m\times n}$ of full rank and $(\delta, \gamma)$-expanding.
    Consider further the decomposition of the set of low-energy states into clusters as defined in \cref{def:landscape_cluster_decomposition}, with energy cutoff $\epsilon < \delta\gamma/2$ and parameter $\xi < \frac{1}{2}(\delta - 2\epsilon/\gamma)$.

    Then there exist positive constants $\gamma^*, \epsilon^*$, $c_1$, $c_2$, such that for $\gamma > \gamma^*$ and $0 < \epsilon < \epsilon^*$ and sufficiently large $n$
    \begin{align}
        \mathfrak s_{\rm conf}^{(\rm min)}(\epsilon) &\geq c_1 > 0\label{eq:app:landscape_shattering}\\
        \mathfrak s_{\rm conf}(\epsilon) &\geq c_2 + \mathfrak s_{\rm conf}(0)  > \mathfrak s_{\rm conf}(0) = r. \label{eq:app:landscape_incongr}
    \end{align}
    where $r = 1-m/n$ is the code rate.
    In this case, we say that the decomposition of the landscape displays \textbf{shattering} \eqref{eq:app:landscape_shattering} and \textbf{incongruence} \eqref{eq:app:landscape_incongr}.
\end{theorem}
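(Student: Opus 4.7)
The plan is to combine two elementary bounds: a volume upper bound on each cluster, derived from its diameter, and a counting lower bound on the total number of low-energy states, derived from the full-rank hypothesis. Putting these together will lower-bound the configurational min-entropy $\mathfrak{s}_{\rm conf}^{(\min)}$ and hence $\mathfrak{s}_{\rm conf}$, and comparing to the value $\mathfrak{s}_{\rm conf}(0)=r$ will yield incongruence.

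First, I would use \cref{lem:cluster_diameter}: the chosen parameters $\epsilon<\delta\gamma/2$ and $\xi<\tfrac{1}{2}(\delta-2\epsilon/\gamma)$ imply that every cluster $C_j$ has diameter at most $2\epsilon n/\gamma$. Fixing any point $\vec x_0\in C_j$, the cluster is contained in the Hamming ball of radius $2\epsilon n/\gamma$ centered at $\vec x_0$, so
\begin{equation}
\abs{C_j}\;\leq\;\sum_{w=0}^{\lfloor 2\epsilon n/\gamma\rfloor}\binom{n}{w}\;\leq\;2^{n\,h_2(2\epsilon/\gamma)},
\end{equation}
valid whenever $2\epsilon/\gamma<1/2$, where $h_2$ denotes the binary entropy. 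Next, because $\Hcheck$ is full rank, the linear map $\vec x\mapsto \Hcheck\vec x$ is surjective onto $\mathbb{F}_2^m$ with every syndrome possessing exactly $2^k=2^{n-m}$ preimages. Thus
\begin{equation}
\abs{\Omega(\epsilon)}\;=\;2^k\,\sum_{w=0}^{\lceil\epsilon n\rceil-1}\binom{m}{w}\;\geq\;2^{\,n[r+(1-r)h_2(\epsilon/(1-r))-o(1)]},
\end{equation}
after writing $k=nr$ and $m=n(1-r)$ and applying standard binomial estimates.

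Combining these bounds gives the central estimate
\begin{equation}
\mathfrak{s}_{\rm conf}(\epsilon)\;\geq\;\mathfrak{s}_{\rm conf}^{(\min)}(\epsilon)\;\geq\;r+(1-r)\,h_2\!\left(\tfrac{\epsilon}{1-r}\right)-h_2\!\left(\tfrac{2\epsilon}{\gamma}\right)-o(1).
\end{equation}
For shattering, the right-hand side exceeds $r/2>0$ for $\epsilon$ small enough that $h_2(2\epsilon/\gamma)\leq r/2$, establishing \autoref{eq:app:landscape_shattering}. For incongruence, I would first handle the $\epsilon=0$ case separately: $\Omega(0)=\codespace$ consists of $2^k$ codewords which, by the distance bound $d\geq\delta n>\xi n$ implied by expansion, are pairwise non-equivalent under \cref{def:equiv_xi}. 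Hence each codeword is its own singleton cluster, giving $\mathfrak{s}_{\rm conf}(0)=r$ exactly. The incongruence claim \autoref{eq:app:landscape_incongr} then reduces to
\begin{equation}
(1-r)\,h_2\!\left(\tfrac{\epsilon}{1-r}\right)\;>\;h_2\!\left(\tfrac{2\epsilon}{\gamma}\right)+o(1).
\end{equation}
Expanding $h_2(x)=x\log_2(1/x)+x/\ln 2+O(x^2)$ around $x=0$, the ratio of the two leading terms approaches $\gamma/2$ as $\epsilon\to 0^+$. Therefore any fixed $\gamma>\gamma^\ast(r)$, with $\gamma^\ast$ chosen so that the $r$-dependent subleading terms are dominated, produces a strict positive gap $c_2>0$ on an interval $0<\epsilon<\epsilon^\ast$.

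The only delicate step is pinning down $\gamma^\ast$ as a function of $r$ alone. The leading-order comparison suggests $\gamma^\ast\to 2$ as $\epsilon\to 0$, but to get a \emph{finite} window $\epsilon\in(0,\epsilon^\ast)$ with a \emph{uniform} positive gap $c_2$, the subleading $\epsilon/\ln 2$ contributions must be tracked explicitly; this will force $\gamma^\ast$ strictly above $2$ and make $\epsilon^\ast$ and $c_{1,2}$ decrease as $\gamma$ approaches $\gamma^\ast$. I expect this to be the only technical bookkeeping in the argument, with everything else reducing to the two volume/counting bounds above.
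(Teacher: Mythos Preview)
Your proposal is correct and follows essentially the same route as the paper's proof: bound each cluster's size by the Hamming ball of radius $2\epsilon n/\gamma$ via \cref{lem:cluster_diameter}, count $\abs{\Omega(\epsilon)}$ exactly using full rank, and combine via standard binomial estimates to obtain the same inequality $\mathfrak{s}_{\rm conf}^{(\min)}\geq r+(1-r)h_2(\epsilon/(1-r))-h_2(2\epsilon/\gamma)-o(1)$. Your explicit treatment of the $\epsilon=0$ case and the small-$\epsilon$ asymptotic determining $\gamma^*$ are slightly more detailed than the paper's (which simply asserts that the bound exceeds $r$ for $\gamma$ large and $\epsilon$ small), but the argument is otherwise identical.
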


\begin{proof}
    We use the fact that the configurational entropy is lower bounded by the min-entropy:
    \begin{equation}\label{eq:sconf_energy_proof1}
        \mathfrak s_{\rm conf} \geq \mathfrak s_{\rm conf}^{(\rm min)} = \frac{1}{n} \log_2\left(\frac{\abs{\Omega(\epsilon)}}{\abs{C_{\rm max}}}\right),
    \end{equation}
    where $C_{\rm max} = {\rm argmax}_{C_j} \abs{C_j}$. 
    \cref{eq:sconf_energy_proof1} can be lower-bounded by upper-bounding $\abs{\Omega(\epsilon)}$ and lower-bounding $\abs{C_j}$.
    By \cref{lem:cluster_diameter}, we have
    \begin{equation}\label{eq:sconf_energy_proof2}
        \abs{C_{\rm max}} \leq \mathcal B_n\left(\frac{2\epsilon}{\gamma}\right)
    \end{equation}
    where $\mathcal B_n(\rho) := \sum_{i=0}^{\rho n}{n\choose i}$ is the volume of a Hamming ball of (relative) size $\rho$.
    Further, since $H$ is full rank, we know that
    \begin{equation}\label{eq:sconf_energy_proof3}
        \abs{\Omega(\epsilon)} = 2^{rn} \mathcal B_m\left(\frac{\epsilon n}{m}\right)
        = 2^{rn} \mathcal B_m\left(\frac{\epsilon}{1-r}\right).
    \end{equation}
    This equality follows because the set of all states at energy density $\epsilon$ are given by the solutions to the linear equation $H\vec x = \vec s$ for all possible right hand sides with $\abs{\vec s} = \epsilon n$, and for each RHS $\vec s \in \mathbb{F}_2^m$ this equation has exactly $2^{rn}$ solutions. Here $r = 1-m/n$ is the code rate, so there are $2^{rn}$ ground states and symmetry sectors. 

   The above two expressions can be bounded using standard bounds for the volume of Hamming Balls \cite{worsch1994_binomial_bounds}. In particular, $\forall \zeta >0~\exists n_0$ such that $\forall n > n_0$
   \begin{align}\label{eq:sconf_energy_proof4}
       \frac{1}{1+\zeta}\frac{1}{\sqrt n} \frac{1}{\sqrt{\rho (1 - \rho)}} \Upsilon(\rho)^n \leq& \nonumber\\
       \sqrt{2\pi} \,\mathcal B_n(\rho)& \nonumber\\
       \leq (1+\zeta) \sqrt n &\sqrt{\frac{\rho}{1-\rho}} \Upsilon(\rho)^n 
   \end{align}
   with $\Upsilon(x) = x^{-x}(1-x)^{1-x}$.

   Plugging the upper bound in \autoref{eq:sconf_energy_proof4} into \autoref{eq:sconf_energy_proof2} and the lower bound in \autoref{eq:sconf_energy_proof4} into \autoref{eq:sconf_energy_proof3}, and using both in \autoref{eq:sconf_energy_proof1} then yields
   \begin{align}\label{eq:thm:landscape_complexity}
       \mathfrak s_{\rm conf} \geq \mathfrak s_{\rm conf}^{(\rm min)} \geq \; & r 
       + (1-r) \log_2 \Upsilon\left(\frac{\epsilon}{1-r}\right)
       - \log_2\Upsilon\left(\frac{2\epsilon}{\gamma}\right) \nonumber\\
       &- O\left(\frac{\log n}{n}\right)
   \end{align}
   which is the desired result by noting that for $\epsilon\to0$ we have $\mathfrak s_{\rm conf}\to r$. 
   It is also easy to see that for sufficiently large $\gamma > \gamma^*$ and sufficiently small $\epsilon < \epsilon^*$, the above lower bound is strictly larger than $r$. Thus, $\mathfrak s_{\rm conf}(\epsilon) > \mathfrak s_{\rm conf}(0) = r \geq 0.$ The value of $\gamma^*$ depends on the rate, and $\epsilon^*$ depends on the rate and $\gamma$.
\end{proof}

The above theorem can be instantiated using Gallager codes, which can be realized with arbitrarily large $\gamma$ (\cref{thm:gallagerexp}), and which are of full rank with high probability  (\cref{lem:gallager_no_redundancies}). Explicitly, we show the bound in \autoref{eq:thm:landscape_complexity} in \autoref{fig:complexity_energy} of the main text, for parameters chosen for a particular family of Gallager codes that realize sufficiently strong expansion.

\subsection{Spin Glass Order from Expansion}

In the previous section, we showed that certain non-redundant expander codes display a complex energy landscape, i.e., the set of configurations below a given energy density cutoff has a complex cluster decomposition with shattering and incongruence. 
In this subsection, we will be interested in properties of the \emph{Gibbs distribution} (also called the \emph{Gibbs state})
\begin{equation}\label{app:gibbs_dist}
    \pG(\vec x) = Z^{-1} \exp(-\beta \abs{H \vec x}),
\end{equation}
where $\beta$ is the inverse temperature and $Z = \sum_{\vec x} \exp(-\beta \abs{H\vec x})$ is the partition function.
We will show that similar to our landscape result, the Gibbs state at low temperature also displays \emph{shattering} and \emph{incongruence}, which together we will call \emph{spin glass order} (\cref{def:spin_glass_order}).  

To obtain these results, we will use the properties of the energy landscape derived above, together with the fact that the Gibbs state at a given temperature is supported almost entirely on states within an energy window of infinitesimal width. 

\subsubsection{Preliminaries}

We begin with a few preliminaries. We start by the following definition:

\begin{definition}[Microcanonical Energy Shell]\label{def:xi_beta}
Given a matrix $H\in\mathbb{F}_2^n$, we define the microcanonical energy shell of width $\omega n$
\begin{equation}\label{eq:xi_beta}
    \Xi_{\omega}(\beta) = \left\{\vec x\in \mathbb{F}_2^n; 
    \abs{\expval{E}_\beta - \abs{H\vec x}} \leq \omega n
    \right\}
\end{equation}
where $\expval{E}_{\beta} \equiv n\expval{\varepsilon}_{\beta} \equiv \sum_{\vec x}\pG(x) \abs{H\vec x}$ is the expectation value of the energy at inverse temperature $\beta$.
\end{definition}

In the absence of redundancies, it is straightforward to show that this set contains all but an exponentially small fraction of the total weight of the Gibbs state:
\begin{lemma}\label{lem:xi_beta_weight}
    Consider $H\in\mathbb{F}_2^{m\times n}$ of full rank. Then the Gibbs state is given by $\pG(\vec x) = Z^{-1} e^{-\beta \abs{H\vec x}}$ with $Z = 2^{n-m} (1 + e^{-\beta})^{m}$ and we have 
    \begin{equation}
        \pG[\Xi_{\omega}(\beta)] \geq 1 - 2 e^{-\zeta n}
    \end{equation}
    where $\Xi_{\omega}(\beta)$ is defined in \cref{def:xi_beta} and $\zeta = \omega^2 n/(2m)$.
\end{lemma}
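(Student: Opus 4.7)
\medskip

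The plan is to exploit the full-rank assumption on $H$ to reduce the problem to a concentration statement for a sum of i.i.d.\ Bernoulli variables, and then apply a Hoeffding-type bound.

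First I would compute $Z$ explicitly. Since $H$ has full rank $m$, the linear equation $H\vec x = \vec s$ has exactly $2^{n-m}$ solutions for every syndrome $\vec s \in \mathbb{F}_2^m$, and the energy of $\vec x$ depends only on $\vec s$ via $\abs{H\vec x} = \abs{\vec s}$. Therefore
\begin{equation*}
    Z = \sum_{\vec s \in \mathbb{F}_2^m} 2^{n-m}\, e^{-\beta \abs{\vec s}} = 2^{n-m}\sum_{k=0}^{m}\binom{m}{k} e^{-\beta k} = 2^{n-m}(1+e^{-\beta})^{m},
\end{equation*}
which is the claimed expression.

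Next, I would observe that the same full-rank counting implies that the induced distribution on syndromes under $\pG$ factorizes:
\begin{equation*}
    P(\vec s) = \sum_{\vec x:\, H\vec x=\vec s}\pG(\vec x) = \frac{e^{-\beta\abs{\vec s}}}{(1+e^{-\beta})^{m}} = \prod_{i=1}^{m}\frac{e^{-\beta s_i}}{1+e^{-\beta}}.
\end{equation*}
Thus the $m$ syndrome bits are i.i.d.\ Bernoulli random variables with parameter $p = e^{-\beta}/(1+e^{-\beta})$, and the energy $\abs{H\vec x}=\sum_{i} s_i$ is a sum of $m$ independent $\{0,1\}$-valued random variables with mean $\expval{E}_\beta = mp$.

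Finally, I would apply Hoeffding's inequality to this sum:
\begin{equation*}
    \pG\!\left[\,\abs{\,\abs{H\vec x} - \expval{E}_\beta\,} \geq \omega n\,\right] \leq 2\exp\!\left(-\tfrac{2\omega^{2} n^{2}}{m}\right),
\end{equation*}
which is precisely the complementary probability of $\Xi_\omega(\beta)$. The right-hand side is bounded by $2 e^{-\zeta n}$ for $\zeta = \omega^{2} n/(2m)$ (indeed, by a constant four times larger), giving the claim. Since $m \leq n\,\wcheck$ by the LDPC condition, $\zeta$ is a genuine constant in $n$.

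There is no real obstacle here: the only subtle point is recognizing that full rank of $H$ converts the Gibbs measure on $\vec x$ into a product measure on the syndrome $\vec s$, after which a one-line concentration inequality finishes the argument. Had $H$ possessed redundancies, the syndromes would be constrained to an affine subspace of $\mathbb{F}_2^m$ and this factorization would fail; handling that case would require the Kramers--Wannier-type arguments mentioned in Sec.~\ref{sec:shattering_general}.
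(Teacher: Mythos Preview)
Your proof is correct and follows exactly the same route as the paper: use full rank of $H$ to compute $Z$ and to recognize the energy as a sum of $m$ i.i.d.\ Bernoulli variables, then apply Hoeffding. Your write-up is in fact more explicit than the paper's two-line version. One minor quibble: the aside ``$m \leq n\,\wcheck$ by the LDPC condition'' is not the right justification (row sparsity does not bound $m$); the fact that $\zeta$ is bounded away from zero follows simply from $m \leq n$, which is automatic since $H$ has full rank $m$.
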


\begin{proof}
    We compute $Z$ explicitly by noting that
    \begin{align}
        Z &= \sum_{\vec x \in \mathbb{F}_2^n} e^{-\beta \abs{H\vec x}} 
        = 2^k\sum_{\vec s\in \mathbb{F}_2^m} e^{-\beta \abs{\vec s}} \nonumber\\
        &= 2^{n-m} \prod_{j=1}^m (1 + e^{-\beta})
        = 2^{n-m} (1 + e^{-\beta})^{m}.
    \end{align}
    where in the first line we have used that $H$ has full rank.
    The inequality in the theorem follows directly from Hoeffding's inequality: since if $H$ is full rank, the energy $E$ is a sum of i.i.d.\ random variables.
\end{proof}

Now, given any decomposition of the configuration space, and an energy width $\omega n$, we can separate elements of the decomposition into \emph{typical} and \emph{atypical} elements.

\begin{definition}[Typical and atypical subsets]\label{def:atypical_clusters}
    Consider $\Hcheck\in \mathbb{F}_2^{m \times n}$, $\beta, \omega > 0$, and a decomposition of the configuration space into disjoint subsets of the form 
    \begin{equation}
        \mathbb{F}_2^n = \biguplus_{j} \Omega_j.
    \end{equation}
    Given an inverse temperature $\beta$ and energy width $\omega n$, we call an element of this decomposition, $\Omega_j$ \emph{typical} if
    \begin{equation}
        \frac{\pG[\Omega_j \cap \Xi_{\omega}(\beta)]}{\pG[\Omega_j]} \geq \frac{1}{2}.
    \end{equation}
    and we call $\Omega_j$ \emph{atypical} if
    \begin{equation}
        \frac{\pG[\Omega_j \cap \Xi_{\omega}(\beta)]}{\pG[\Omega_j]} < \frac{1}{2}.
    \end{equation}
\end{definition}

In other words, less than half the weight of an atypical subset is supported on configurations in the microcanonial energy shell $\Xi_{\omega}(\beta)$. Thus, atypical subsets $\Omega_j$ are those that mostly contain states at energies far above or far below the average at a given temperature. 

The name `atypical' is justified in the sense that the total weight of \emph{all} atypical clusters  is exponentially small in system size. 
Note, however, that the above definition of typicality depends on a specific choice of $\omega$. Below we will work with fixed $\omega$, but always have the limit $\omega \to 0$ in mind. Note that the order of limit is important: for the tail bounds above to be useful, the ``thermodynamic limit'' means that we take $n \to\infty$ first, and then $\omega\to0$.

\begin{lemma}[Atypical subsets carry almost no weight]\label{lem:atypical_clusters}
    Consider $H\in \mathbb{F}_2^n$ of full rank, and a decomposition of configuration space of the form $\mathbb{F}_2^n = \biguplus_{j} \Omega_j$
    Then, for all $\beta > 0$ and $\omega > 0$ atypical subsets as defined in \cref{def:atypical_clusters} carry only an exponentially small fraction of the Gibbs weight
    \begin{equation}
        \sum_{{\rm atypical}\, j} \pG[\Omega_j] \leq 4 e^{-\zeta n}
    \end{equation}
    where $\zeta = \omega^2 n/(2m)$.
\end{lemma}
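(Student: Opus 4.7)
The plan is to combine the definition of atypicality with the concentration bound on $\Xi_\omega(\beta)$ established in the preceding lemma. The key observation is that atypicality says a majority of the mass of $\Omega_j$ sits \emph{outside} the microcanonical shell, so the total mass of atypical clusters is, up to a factor of two, bounded by the mass of the complement of $\Xi_\omega(\beta)$ in the Gibbs state, which is exponentially small.

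Concretely, I would first unpack \cref{def:atypical_clusters}: if $\Omega_j$ is atypical, then
\begin{equation*}
\pG[\Omega_j \setminus \Xi_\omega(\beta)] = \pG[\Omega_j] - \pG[\Omega_j \cap \Xi_\omega(\beta)] > \tfrac{1}{2}\pG[\Omega_j],
\end{equation*}
so $\pG[\Omega_j] < 2\,\pG[\Omega_j \setminus \Xi_\omega(\beta)]$. Then I would sum over all atypical indices and exploit disjointness of the $\Omega_j$: the sets $\Omega_j \setminus \Xi_\omega(\beta)$ are pairwise disjoint subsets of $\mathbb{F}_2^n \setminus \Xi_\omega(\beta)$, hence
\begin{equation*}
\sum_{\text{atypical } j}\!\pG[\Omega_j] \,<\, 2\sum_{\text{atypical } j}\!\pG[\Omega_j\setminus \Xi_\omega(\beta)] \,\leq\, 2\,\pG[\mathbb{F}_2^n\setminus \Xi_\omega(\beta)].
\end{equation*}

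Finally, I would invoke \cref{lem:xi_beta_weight} (which uses the full-rank hypothesis to write the energy as a sum of i.i.d. Bernoullis and apply Hoeffding), giving $\pG[\mathbb{F}_2^n\setminus \Xi_\omega(\beta)] \leq 2 e^{-\zeta n}$ with $\zeta = \omega^2 n/(2m)$. Chaining these two inequalities yields the claimed bound of $4 e^{-\zeta n}$.

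There is no real obstacle here---the statement is essentially a combinatorial rearrangement plus a direct appeal to the previous lemma. The only subtlety worth flagging is that disjointness of the $\Omega_j$ is essential when summing the ``outside'' contributions, and that the full-rank hypothesis on $H$ enters only indirectly via \cref{lem:xi_beta_weight}. Note also that, while $\zeta = \omega^2 n/(2m)$ appears to scale with $n$, in the settings of interest $m = \Theta(n)$ so $\zeta$ is an $\omega$-dependent constant, and one should take the limit $n\to\infty$ before $\omega\to 0$.
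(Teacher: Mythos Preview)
Your proof is correct and follows essentially the same route as the paper: derive $\pG[\Omega_j] < 2\,\pG[\Omega_j \cap \Xi_\omega(\beta)^c]$ from the definition of atypicality, sum over disjoint clusters to bound by $2\,\pG[\Xi_\omega(\beta)^c]$, and invoke \cref{lem:xi_beta_weight}. The only cosmetic difference is that the paper writes $\Omega_j \cap \Xi_\omega^c(\beta)$ where you write $\Omega_j \setminus \Xi_\omega(\beta)$, and it relaxes the intermediate sum to all $j$ rather than just atypical ones before using disjointness.
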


\begin{proof}
    We have for any $j$ such that $\Omega_j$ is atypical
    \begin{align}
        1 &= \frac{\pG[\Omega_j \cap \Xi_{\omega}(\beta)]}{\pG[\Omega_j]}
            + \frac{\pG[\Omega_j \cap \Xi_{\omega}^c(\beta)]}{\pG[\Omega_j]} \\
        &< \frac{1}{2} + \frac{\pG[\Omega_j \cap \Xi_{\omega}^c(\beta)]}{\pG[\Omega_j]}
    \end{align}
    where $\Xi_{\omega}^c(\beta) = \mathbb{F}_2^n / \Xi_{\omega}(\beta) $ denotes the complement of $\Xi_{\omega}(\beta)$. We have used that $\Omega_j$ is atypical in the second line.
    This implies directly that
    \begin{align}
        \pG[\Omega_j] < 2\,\pG[\Omega_j \cap \Xi_{\omega}^c(\beta)].
    \end{align}
    Now note that
    \begin{align}
        \sum_{{\rm atypical}\, j} \pG[\Omega_j] &< 2 \sum_j \pG[\Omega_j \cap \Xi_{\omega}^c(\beta)] \\
        &\leq 2\,\pG[\Xi_{\omega}^c(\beta)] \\
        &\leq 4 e^{-\zeta n}
    \end{align}
    where in the second line we have used that the clusters are disjoint, and in the third line we used \cref{lem:xi_beta_weight}.
\end{proof}

\subsubsection{Gibbs State Decomposition\label{app:gibbs_decomposition_precise}}

In the spirit of the Gibbs state decomposition described in \autoref{sec:gibbs_decomposition} of the main text, we now define the decomposition of the Gibbs state into components more formally.

As a motivation, recall first the classical bottleneck theorem
\begin{theorem}[Classical Bottleneck theorem]
    Let $M:\chi\times\chi\to[0,1]$ be a Markov generator with configuration space $\chi$ with steady state $\pi:\chi\to[0,1]$.
    Let $\chi = A \uplus B_1 \uplus B_2 \uplus C$ be a partition such that $M(\vec x, \vec y) = 0$ if $\vec x \in B_2\uplus C$ and $\vec y\in A$ or if $\vec x \in A\uplus B_1$ and $\vec y \in C$. Then the restricted steady state $\pi^{(A)}(\vec x) = \pi(A)^{-1}\indicator_A(\vec x) \cdot \pi (\vec x)$ is an approximate steady state of $M$:
    \begin{equation}
        \sum_{\vec x \in \chi} \abs{(M\pi^{(A)})(\vec x) - \pi^{(A)}(\vec x)} \leq 2 \frac{\pi(B)}{\pi(A)}
    \end{equation}
    where $(M\pi^{(A)})(\vec x) = \sum_{\vec y\in \chi}M(\vec x, \vec y)\pi_A(\vec y)$ and $B = B_1\uplus B_2$.
\end{theorem}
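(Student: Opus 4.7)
The plan is to directly unpack the $L^1$ distance between $\pi^{(A)}$ and $M\pi^{(A)}$, estimating the contributions from the different pieces of the partition separately and invoking stationarity of $\pi$ and the two bottleneck conditions.

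First I would observe that $\pi^{(A)}$ is supported on $A$, and that by the first hypothesis (no transitions from $A$ into $B_2\uplus C$), the pushforward $M\pi^{(A)}$ is supported on $A\uplus B_1$. Therefore, on $B_2\uplus C$ both distributions vanish, and the $L^1$ norm splits as a sum over $A$ and over $B_1$.

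For the contribution from $\vec x \in A$, I would use stationarity of $\pi$ together with the second hypothesis (no transitions from $C$ into $A\uplus B_1$) to rewrite
\begin{equation*}
\pi(\vec x) = \sum_{\vec y \in A \uplus B_1 \uplus B_2} M(\vec x,\vec y)\,\pi(\vec y),
\end{equation*}
so that
\begin{equation*}
\pi^{(A)}(\vec x) - (M\pi^{(A)})(\vec x) = \pi(A)^{-1}\!\!\sum_{\vec y \in B_1 \uplus B_2}\! M(\vec x,\vec y)\,\pi(\vec y) \geq 0.
\end{equation*}
Summing over $\vec x \in A$ and using $\sum_{\vec x}M(\vec x,\vec y) \leq 1$ then yields the bound $\pi(B)/\pi(A)$. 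For the contribution from $\vec x \in B_1$, the difference is $\pi(A)^{-1}\sum_{\vec y \in A} M(\vec x,\vec y)\,\pi(\vec y)$; invoking stationarity at $\vec x$ and again dropping the $\vec y \in C$ terms via the second hypothesis gives $\sum_{\vec y \in A}M(\vec x,\vec y)\pi(\vec y)\leq \pi(\vec x)$, and summing over $\vec x \in B_1$ produces a bound of $\pi(B_1)/\pi(A)\leq \pi(B)/\pi(A)$. Adding the two contributions yields the stated $2\pi(B)/\pi(A)$.

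Since each step is an application of stationarity plus one of the two explicit vanishing conditions, there is no real obstacle; the only point requiring care is keeping straight which condition applies on which side. In particular, the asymmetry between $B_1$ and $B_2$ is real but harmless: the first hypothesis ensures $M\pi^{(A)}$ cannot leak past $B_1$, while the second ensures that the stationarity identity for any $\vec x \in A\uplus B_1$ receives no contribution from $C$, which is exactly what both bounds above require.
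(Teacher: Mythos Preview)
Your proof is correct. The paper does not actually supply a proof of this statement; it simply cites external references (Theorem~SI.1 of Ref.~\cite{rakovszky2024bottleneck} and the textbook~\cite{levin2017markov}) for the argument, so there is no in-paper proof to compare against. Your direct decomposition---splitting the $L^1$ distance over $A$ and $B_1$, then using stationarity plus the two vanishing hypotheses to bound each piece by $\pi(B)/\pi(A)$---is exactly the standard route and is cleanly executed.
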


Intuitively, the above states that for a given dynamics with steady state $\pi$, the restriction of $\pi$ to a subset of configuration that is ``surrounded'' by a region of low (relative) weight is an approximate steady state of the dynamics, if the dynamics cannot ``skip'' over the region of low weight.
The general statement is well known, but we use here a slightly unusual formulation to emphasize the fact that the restrictions of the steady state are approximate steady states of the dynamics if they are surrounded by a bottleneck. For a proof of the above version, see Theorem SI.1 of Ref.~\onlinecite{rakovszky2024bottleneck}, and we refer interested readers to Ref.~\onlinecite{levin2017markov} for a general introduction. 

The bottleneck theorem informs our definition of Gibbs state components below. In particular, consider a region of configuration space $\Omega$ and its $\eta$-boundary defined as  $\partial_\eta\Omega \equiv \{\vec x \in \Omega^c; \dist(\vec x,\Omega) \leq \eta n \}$ [where $\dist(\vec x,\Omega) = \min_{\vec y \in \Omega}(\abs{\vec x\oplus \vec y})$]. Then if
\begin{equation}
    \frac{\pG[\partial_{\eta}\Omega]}{\pG[\Omega]} = \Delta
\end{equation}
then the restricted Gibbs state $\pG^{(\Omega)}$ has a lifetime of $\Delta^{-1}$ under \emph{any} local dynamics (flipping less than $\eta n/2$ variables at a time) that has the Gibbs state as its unique steady state. 
To see this is, identify $A = \Omega$, $B_1 = \partial_{\eta/2}\Omega$ and $B_2 = \partial_{\eta}\Omega / B_1$ above.

For a family of Gibbs distributions $p_{{\rm G}, n}$, one can then define a Gibbs state decomposition into components via a sequence of decomposition of configuration space into regions surrounded by bottlenecks, and demanding that $\Delta(n)\to 0$ as $n\to 0$ above.
In the concrete definition of the Gibbs states decomposition below, we \emph{almost} follow the definition in Chapter 22.1 of Ref. \onlinecite{mezard2009information}, with the slight addition that we allow for a junk component $\Lambda$ carrying some vanishing fraction of the weight. 
This allows the components $\Omega_j$ to only contain ``typical'' states (see also \cref{lem:gibbs_decomposition_typical} below), and we can absorb atypical regions of configuration space, e.g. the bottlenecks (c.f. \autoref{fig:Gibbs_states}), into the junk component (which we will not demand to be surrounded by a bottleneck): 

\begin{definition}[Decomposition of the Gibbs state]\label{def:gibbs_decomposition}
    Consider a sequence of Gibbs states $p_{{\rm G}, n}:\chi^n\to[0,1]$ and associated configuration spaces $\chi^n$. For each $n$, consider a decomposition of the configuration space
    \begin{equation}
        \chi^n = \biguplus_{j} \Omega_{j, n} \uplus \Lambda_n
    \end{equation}
    such that
    
    \begin{enumerate}
    \item Each $\Omega_j$ is surrounded by a bottleneck: there exists $\eta > 0$, and a function $\Delta(n)$ with $\Delta(n)\xrightarrow[n\to\infty]{}0$ such that 
        \begin{equation}\label{eq:app:bottleneck}
        \frac{\pG[\partial_{\eta}\Omega_{j, n}]}{\pG[\Omega_{j, n}]} \leq \Delta(n)
        \end{equation}
        where the $\eta$-boundary of a subset $\Omega$ is defined as  $\partial_\eta\Omega \equiv \{\vec x \in \Omega^c; \dist(\vec x,\Omega) \leq \eta n \}$.
    \item The set $\Lambda$ contains only a vanishing fraction of the weight
        \begin{equation}
            p_{{\rm G}, n}(\Lambda_n) \xrightarrow[n\to\infty]{}0.
        \end{equation}
        We call $\Lambda$ the ``junk'' set.
    \end{enumerate}

    \noindent  Gibbs state components are then defined as the (normalized) restrictions of $\pG$ to  $\Omega_j$:
    \begin{equation}
       p_{{\rm G}, n}^{(\Omega_{j, n})} = \frac{\indicator_{\Omega_{j, n}} \cdot p_{{\rm G}, n}}{p_{{\rm G}, n}(\Omega_{j, n})}
    \end{equation}
    where $\indicator_\Omega$ is the indicator function of the set $\Omega$.
    The Gibbs distribution can now be written the convex sum of its components
    \begin{equation}\label{eq:def:gibbs_decomposition}
        p_{{\rm G}, n} = \sum_{j=1}^{M_n} w_{j, n}\,p_{{\rm G}, n}^{(\Omega_{j, n})} + w_{\Lambda_n}\,p_{{\rm G}, n}^{(\Lambda_n)}
    \end{equation}
    where $w_{j, n} \equiv p_{{\rm G}, n}(\Omega_{j, n})$ and $w_{\Lambda_n}\equiv p_{{\rm G}, n}(\Lambda_n)$.
    We then call \autoref{eq:def:gibbs_decomposition} a \textbf{Gibbs state decomposition}.
\end{definition}

Note that the above decomposition is not unique, since some subsets of $\chi$ that have small weight (e.g. part of the boundary) may be assigned to different components or the junk component without invalidating the bottleneck condition. 

One can define the components to be \emph{extremal} (sometimes also called \emph{pure}) by demanding that they cannot be further decomposed while still fulfilling the bottleneck condition [\autoref{eq:app:bottleneck}] above.

In the following, to avoid overly cluttered notation, we will sometimes drop the explicit subscripts indicating the dependence of quantities and subsets on $n$, but we will always have sequences of matrices and sequences of decompositions in mind.

Given the decomposition defined above, the weights $\{w_{j, n}\}$ define a distribution over components of the Gibbs states. We then define two notions of configurational entropy to characterize the Gibbs state. The first, defined as the Shannon entropy of the weights, characterizes the number of relevant components contributing to $\pG$. The second corresponds to an upper bound on the size of the largest weight.

\begin{definition}[Configurational Entropy of the Gibbs state]\label{def:sconfig_gibbs}
    Given a decomposition of the Gibbs state as defined in \cref{def:gibbs_decomposition}, we define the configurational entropy as
    \begin{equation}
        \Sconf \equiv n \sconf \equiv -\sum_j w_{j, n} \log_2 w_{j, n} - w_{\Lambda_n}\log_2 w_{\Lambda_n}.
    \end{equation}
    We also define the \textbf{configurational min-entropy} of the decomposition as
    \begin{equation}
        \Sconf^{(\rm min)}(\epsilon) \equiv n \,\sconf^{(\rm min)} = \log_2 \frac{1}{{\rm max}\{ w_{j, n}\}}
    \end{equation}
    Note that naturally, $\Sconf \geq \Sconf^{(\rm min)}$.
\end{definition}

Before showing spin glass order from expansion, we prove one more statement about the Gibbs state decomposition. Given any decomposition, we can define a new valid decomposition, such that all components are supported on typical sets (see \cref{def:atypical_clusters}), and the configurational entropy of the old decomposition is lower-bounded by that of the new one. 

\begin{lemma}\label{lem:gibbs_decomposition_typical}
    Consider a sequence of binary matrices $H_n\in \mathbb{F}_2^n$ of full rank, with a decomposition of the Gibbs state as in $\cref{def:gibbs_decomposition}$
    \begin{equation}
        \mathbb{F}_2^n = \biguplus_{j} \Omega_{j, n} \uplus \Lambda_n
    \end{equation}
    Then, $\forall \omega > 0$ there exists another valid decomposition
    \begin{equation}
        \mathbb{F}_2^n = \biguplus_{j} \Omega'_{j, n} \uplus \Lambda'_n
    \end{equation}
    such that all $\Omega'_{j}$ are typical subsets in the sense of \cref{def:atypical_clusters}, and 
    \begin{equation}
         \sconf \geq \sconf'
    \end{equation}
    for some $\zeta > 0$.
\end{lemma}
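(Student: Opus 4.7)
The plan is to construct the new decomposition explicitly by moving all atypical subsets from the original decomposition into an enlarged junk set, and then verify that all the required properties are preserved.

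More concretely, first I would partition the index set $\{j\}$ into "typical" indices $T_n$ and "atypical" indices $A_n$ according to \cref{def:atypical_clusters}, using the given value of $\omega$. I would then define the new decomposition via
\begin{equation}
    \Omega'_{j,n} = \Omega_{j,n} \text{ for } j \in T_n, \quad \Lambda'_n = \Lambda_n \cup \bigcup_{j \in A_n} \Omega_{j,n}.
\end{equation}
By construction, each $\Omega'_{j,n}$ is typical. Moreover, since $\Omega'_{j,n} = \Omega_{j,n}$, its $\eta$-boundary is unchanged, so the bottleneck condition \autoref{eq:app:bottleneck} with the same function $\Delta(n)$ is inherited from the original decomposition. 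To show that the new junk set still has vanishing weight, I combine $\pG(\Lambda_n) \to 0$ (from the original decomposition) with \cref{lem:atypical_clusters}, which ensures that the total weight of the atypical subsets is bounded by $4 e^{-\zeta n}$ with $\zeta = \omega^2 n/(2m)$, hence also vanishes. This verifies that the new decomposition is valid in the sense of \cref{def:gibbs_decomposition}.

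It remains to compare the configurational entropies. Writing $w_{\Lambda'_n} = w_{\Lambda_n} + \sum_{j \in A_n} w_{j,n}$, the difference reads
\begin{equation}
    \sconf - \sconf' = \frac{1}{n}\Bigl[-\sum_{j \in A_n} w_{j,n}\log_2 w_{j,n} - w_{\Lambda_n}\log_2 w_{\Lambda_n} + w_{\Lambda'_n}\log_2 w_{\Lambda'_n}\Bigr].
\end{equation}
This quantity is non-negative by the grouping property of the Shannon entropy (equivalently, by concavity of $-x\log_2 x$): coarse-graining a probability distribution by merging outcomes can only decrease its Shannon entropy. Hence $\sconf \geq \sconf'$, which is the desired inequality. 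The parameter $\zeta > 0$ referenced in the statement can be taken as the exponent $\omega^2/(2m/n)$ controlling the decay of the weight absorbed into the junk set, inherited from \cref{lem:atypical_clusters}.

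I do not expect any step to present a substantive obstacle: the bottleneck and junk-weight conditions carry over almost trivially, and the entropy inequality is a direct consequence of the concavity of $-x\log x$. The only mild subtlety is to check that the order of limits is correct, i.e. that for fixed $\omega > 0$ the weight of the atypical subsets vanishes as $n \to \infty$, which is precisely the content of \cref{lem:atypical_clusters}.
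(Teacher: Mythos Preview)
Your proposal is correct and follows essentially the same approach as the paper: absorb the atypical components into the junk set, invoke \cref{lem:atypical_clusters} to verify the new junk weight still vanishes, inherit the bottleneck condition for the surviving (typical) components, and use the coarse-graining/concavity property of Shannon entropy to get $\sconf \geq \sconf'$.
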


\begin{proof}
    The idea is to absorb all atypical components into the definition of $\Lambda'$, that is
    \begin{equation}
        \Lambda' = \Lambda \bigcup_{{\rm atypical}\,j} \Omega_j.
    \end{equation}
    Because of \cref{lem:atypical_clusters}, this does not change the fact that $\Lambda'$ has vanishing weight. Furthermore, since the remaining clusters are a subset of the original ones, $\{\Omega_i'\}_i \subset \{\Omega_j\}_j$, all $\Omega_i'$ are surrounded by a bottleneck. Hence the new decomposition is also valid. 

    What is left to show it that the configurational entropy can only decrease.
    To this end, for some choice of energy width $\omega > 0$, split the sum into typical and atypical clusters (\cref{def:atypical_clusters})
    \begin{align}
        \Sconf =& -\sum_{{\rm typical}\,j} w_j \log_2 w_j - \sum_{{\rm atypical}\,j} w_j\log_2 w_j \nonumber\\  &- w_{\Lambda}\log_2 w_{\Lambda} \\
        \geq& -\sum_{{\rm typical}\,j} w_j \log_2 w_j \nonumber\\
        &- \left(w_{\Lambda} + \sum_{{\rm atypical}\,j} w_j\right)
        \log_2\left(w_{\Lambda} + \sum_{{\rm atypical}\,j} w_j\right) \\
        =& -\sum_{{\rm typical}\,j} w_j \log_2 w_j + w_{\Lambda'}\log_2 w_{\Lambda'}. \\
        =& \Sconf'
    \end{align}
    where we have used that summarizing previously separated components into one can only lower the entropy of the distribution.
\end{proof}

\subsubsection{Spin Glass Order from expansion}

With these definitions in mind, we can now inspect the properties of the Gibbs state decomposition for LDPC codes (defined in terms of a parity check matrix $H\in \mathbb{F}_2^n$) with expansion. We will show that for codes without redundancies ($H$ is full rank) and sufficiently strong expansion ($\gamma > \gamma^*$ for some constant $\gamma^*$ that depends on the rate), there exists a decomposition of the Gibbs state as defined in \cref{def:gibbs_decomposition} such that the configurational entropy density is both finite at low $T$, and also strictly larger than the zero-temperature limit. 
We will take these features to be our definition of spin glass order.
\begin{definition}[Spin Glass Order]\label{def:spin_glass_order}
    Given a family of binary matrices, $H\in\mathbb{F}_2^n$, we say that it realizes \textbf{spin glass order} at inverse temperature $\beta$, if there exists positive constants $c_1, c_2, n^*$ and a Gibbs state decomposition [\cref{def:gibbs_decomposition}], such that for all $n > n^*$
    \begin{align}
        \sconf^{(\rm min)}(\beta) &\geq c_1 > 0, \label{eq:app:gibbs_shatter}\\
        \sconf(\beta) &\geq c_2 + \sconf(0) > \sconf(0) \geq 0.\label{eq:app:gibbs_incongr}
    \end{align}
    We then say that the Gibbs state displays \textbf{shattering} \eqref{eq:app:gibbs_shatter} and \textbf{incongruence} \eqref{eq:app:gibbs_incongr}, respectively.
\end{definition}

Note that with the order of limits $n\to\infty$ at finite $\beta$, then $n \sconf(0) = \Sconf(0)$ is just the logarithm of the number of ground states of the model. The constants $c_1$, $c_2$ guarantee that for all $n$ the configurational (min-)entropy densities stay bounded away from zero and the zero-temperature limit, respectively. (as opposed to something like $\sconf(\beta) = \sconf(0) + 1/n \geq \sconf(0)$).
Informally speaking, our definition of spin glass order demands that (i) no component carries more than an exponentially small fraction of the weight, and (ii) that the number of ``relevant'' Gibbs state components is (strictly) exponentially larger than the number of ground states of the model. This means that almost no component can contain ground states, and provides an intuitive picture for the hardness of reaching the ground state via annealing under local dynamics. 

Let us now show that the above scenario is realized in certain models with expansion. We begin by showing that for sufficiently small energy cutoffs $\epsilon < \delta\gamma/2$, and low temperatures such that $\expval{\varepsilon}_{\beta} < \epsilon$,  the cluster decomposition of $\Omega(\epsilon)$ (cf. \cref{def:landscape_cluster_decomposition}) can be used to obtain a valid Gibbs state decomposition in the sense of \cref{def:gibbs_decomposition}.

\begin{lemma}[Decomposition of the Gibbs State from Expansion]\label{lem:gibbs_decomposition_exp}
    Consider a sequence of binary matrices $H_n\in \mathbb{F}_2^{m\times n}$, of full rank and $(\delta, \gamma)$ expanding.  
    Then there exists constants $\omega^* > 0$, $\beta^* < \infty$, and a decomposition of configuration space
    \begin{align}
        \mathbb{F}_2^n = \biguplus_{j}\Omega_j \uplus \Lambda
    \end{align}
     such that for all $0 < \omega < \omega^*$ and $\beta > \beta^*$:
    \begin{enumerate}
        \item All $\Omega_j$ are typical subsets (\cref{def:atypical_clusters}),
        \item ${\rm diam}[\Omega_j\cap \Xi_{\omega}(\beta)] \leq  2 n (\expval{\varepsilon}_\beta + \omega)/\gamma$ with $\Xi_{\omega}(\beta)$ the microcanonical shell [\autoref{eq:xi_beta}].
        \item The $\{\Omega_j\}$ and $\Lambda \equiv \mathbb{F}_2^n / (\uplus_{j=1}^M\Omega_j)$ define a decomposition of the Gibbs state in the sense of \cref{def:gibbs_decomposition}.
        \item $\pG(\Lambda) < 4 e^{-\zeta n}$ for $\zeta = \omega^2 n /(2m)$ and sufficiently large $n$.
    \end{enumerate} 
\end{lemma}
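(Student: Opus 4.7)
The plan is to obtain the Gibbs-state decomposition by recycling the energy-landscape cluster decomposition of \cref{def:landscape_cluster_decomposition}, choosing the energy cutoff $\epsilon$ just below $\delta\gamma/2$, and sweeping both the complement $\Omega(\epsilon)^c$ and any atypical clusters into the junk set $\Lambda$. First I would fix constants: pick $\epsilon \in (0,\delta\gamma/2)$ (say $\epsilon = \delta\gamma/2 - \kappa$), pick $\xi$ satisfying the hypothesis $\xi < (\delta-2\epsilon/\gamma)/2$ of \cref{lem:cluster_diameter}, and choose $\omega^\ast < \epsilon$ together with $\beta^\ast$ large enough that $\expval{\varepsilon}_\beta + \omega < \epsilon$ for all $\beta > \beta^\ast$ and $\omega < \omega^\ast$. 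The limit $\expval{\varepsilon}_\beta\to 0$ as $\beta\to\infty$ follows from the closed-form partition function of \cref{lem:xi_beta_weight}, which uses full rank of $H$. With this data, form the clusters $\{C_j\}$ of $\Omega(\epsilon)$ and set $\widetilde\Lambda := \mathbb{F}_2^n \setminus \Omega(\epsilon)$.

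For the bottleneck I would fix any $\eta < \xi$; by the inter-cluster separation $\partial_\eta C_j$ lies entirely outside $\Omega(\epsilon)$, so every state there has energy strictly above $\epsilon n$. A typical cluster must contain at least one state in the microcanonical shell $\Xi_\omega(\beta)$, giving $\pG(C_j) \geq e^{-\beta(\expval{\varepsilon}_\beta + \omega) n}/Z$. Combined with the crude upper bound $\pG(\partial_\eta C_j) \leq 2^n e^{-\beta \epsilon n}/Z$, this gives the ratio $\exp[n(\ln 2 - \beta(\epsilon - \expval{\varepsilon}_\beta - \omega))]$, which vanishes as $n\to\infty$ provided $\beta$ is large enough (absorb any additional requirement into $\beta^\ast$).

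The subtle part is the diameter bound $2(\expval{\varepsilon}_\beta+\omega)n/\gamma$, which is sharper than what \cref{lem:cluster_diameter} directly supplies (that only yields $2\epsilon n/\gamma$). I would argue as follows: for $\vec x, \vec y \in C_j \cap \Xi_\omega(\beta)$ one has $\abs{H\vec x} + \abs{H\vec y} \leq 2(\expval{\varepsilon}_\beta + \omega)n$, so applying the clustering property of \cref{lem:clustering} with the finer parameter $\epsilon' = \expval{\varepsilon}_\beta + \omega$ forces either $\abs{\vec x \oplus \vec y} \leq 2\epsilon' n/\gamma$ or $\abs{\vec x \oplus \vec y} \geq \delta n$. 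Since $\vec x \sim \vec y$ inside $\Omega(\epsilon)$, \cref{lem:cluster_diameter} at the coarser level $\epsilon$ already forces $\abs{\vec x \oplus \vec y} \leq 2\epsilon n/\gamma < \delta n$, ruling out the second branch and yielding the desired refinement.

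Finally, to enforce typicality and bound $\pG(\Lambda)$, I would invoke \cref{lem:gibbs_decomposition_typical} to move any atypical $C_j$ into the junk, defining $\Lambda := \widetilde\Lambda \,\cup\, \bigcup_{\text{atypical }j} C_j$ and keeping only the typical clusters as the $\Omega_j$; the bottleneck property is inherited from the original clusters since atypical ones are only \emph{removed}. For the weight, use that $\widetilde\Lambda \subseteq \Xi_\omega(\beta)^c$ (because $\epsilon > \expval{\varepsilon}_\beta + \omega$), so $\pG(\widetilde\Lambda) \leq \pG(\Xi_\omega^c)$, and combine this with the intermediate estimate $\sum_{\text{atypical}\,j} \pG(C_j) \leq 2\,\pG(\Omega(\epsilon) \cap \Xi_\omega^c)$ from the proof of \cref{lem:atypical_clusters} to obtain $\pG(\Lambda) \leq 2\,\pG(\Xi_\omega^c) \leq 4 e^{-\zeta n}$ via \cref{lem:xi_beta_weight}. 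The only conceptual obstacle is the diameter refinement of paragraph three; the rest is packaging of pieces already in hand.
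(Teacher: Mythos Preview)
Your proposal is correct and follows essentially the same route as the paper: start from the landscape cluster decomposition of $\Omega(\epsilon)$ with $\expval{\varepsilon}_\beta+\omega<\epsilon<\delta\gamma/2$ and $\xi<\tfrac12(\delta-2\epsilon/\gamma)$, let $\Lambda$ absorb $\Omega(\epsilon)^c$ together with the atypical clusters, establish the bottleneck for typical clusters via $\partial_\eta C_j\subset\Omega(\epsilon)^c$ and a single microcanonical witness state, and control $\pG(\Lambda)$ through \cref{lem:atypical_clusters}. Two cosmetic differences: the paper gets the refined diameter bound directly from expansion (since \cref{lem:cluster_diameter} already gives $\abs{\vec x\oplus\vec y}<\delta n$, expansion applies and $\gamma\abs{\vec x\oplus\vec y}\le\abs{H\vec x}+\abs{H\vec y}$ immediately yields the contradiction) rather than via the clustering dichotomy, and it treats $\Omega(\epsilon)^c$ itself as an atypical element of the decomposition so that \cref{lem:atypical_clusters} applies in one stroke; your appeal to \cref{lem:gibbs_decomposition_typical} is unnecessary since that lemma presupposes a valid Gibbs decomposition, but the direct estimate you give via the proof of \cref{lem:atypical_clusters} is exactly what is needed.
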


\begin{figure}
    \centering
    \includegraphics{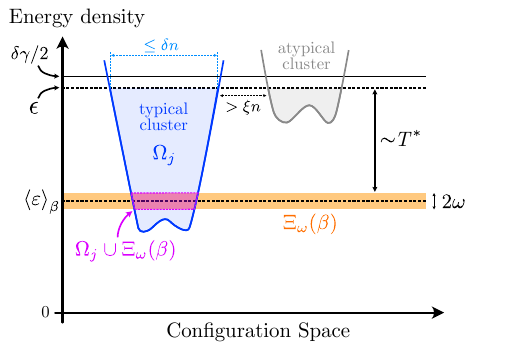}
    \caption{Sketch of the various energy scales and distances involved in the Gibbs state decomposition from expansion.}
    \label{fig:sgo_proof_bounds}
\end{figure}

\begin{proof}
    To aid understanding, we sketch the various quantities and choices of parameters involved in the proof in \autoref{fig:sgo_proof_bounds}.
    
    Consider the cluster decomposition of the set of low energy-density states $\Omega(\epsilon)$ with $\expval{\varepsilon}_{\beta} < \epsilon < \delta\gamma/2$ as defined in \cref{def:landscape_cluster_decomposition}, with $\xi < \tfrac{1}{2}(\delta - 2\epsilon / \gamma)$. That is, we have a decomposition of the total configuration space of the form
    \begin{equation}
        \mathbb{F}_2^n = \Omega(\epsilon)^c \uplus \biguplus_j C_j
    \end{equation}

    We now distinguish the clusters $C_j$ by whether they are typical or atypical using the definition of \cref{def:atypical_clusters}. Note that if $\epsilon > \expval{\varepsilon}_{\beta}$, then for all $\omega < \omega^* = \epsilon - \expval{\varepsilon}_{\beta}$, $\Omega(\epsilon)^c$ is an atypical set since $\Omega(\epsilon)^c\cap\Xi_{\omega} = \emptyset$. Then, we define the junk set $\Lambda$ as the union of all atypical clusters and $\Omega(\epsilon)^c$
    \begin{equation}
        \Lambda = \Omega(\epsilon)^c \cup \bigcup_{{\rm atypical}\,j} C_j 
    \end{equation}
    and by \cref{lem:atypical_clusters}, $\pG(\Lambda) < 4 e^{-\zeta n}$, which shows (4).

    Our decomposition of the configuration space that instantiates the above lemma is then given by 
    \begin{equation}
        \mathbb{F}_2^n = \Lambda \uplus \biguplus_{{\rm typical}\, j} C_j
    \end{equation}
    which shows (1). From here on, we will denote these typical $C_j$ by $\Omega_j$. 

    Next, we bound the diameter of the clusters. For this note that by our choice of $\xi < \tfrac{1}{2}(\delta - 2\epsilon / \gamma)$, by \cref{lem:cluster_diameter} the diameter of any $\Omega_j$ is bounded by $2\epsilon n / \gamma < \delta n$. Thus, for any two states $\vec x$, $\vec y \in \Omega_j$ we have $\abs{\vec x \oplus \vec y} < \delta n$. 
    Now consider any $\vec x, \vec y \in \Omega_j \cap \Xi_{\omega}(\beta)$. In this case we get from the triangular inequality and expansion
    \begin{align}
        \abs{H\vec y} + \abs{H\vec x} \geq \abs{H(\vec x \oplus \vec y)} \geq \gamma \abs{\vec x \oplus \vec y}.
    \end{align}
    Now suppose for contradiction that $\abs{\vec x \oplus \vec y}> 2 n (\expval{\varepsilon}_{\beta} + \omega)/ \gamma$. Then
    \begin{align}
        \abs{H\vec y} >& 2 n (\expval{\varepsilon}_{\beta} + \omega) - \abs{H \vec x} \\
        \geq& 2 n (\expval{\varepsilon}_{\beta} + \omega) - n(\expval{\varepsilon}_{\beta} + \omega) \\
        \geq& n (\expval{\varepsilon}_{\beta} + \omega)
    \end{align}
    which implies that $\vec y\notin \Xi_{\omega}(\beta)$, a contradiction. In conclusion, we have shown (2):
    \begin{equation}
        {\rm diam}[\Omega_j \cap \Xi_{\omega}(\beta)] \leq 2 n (\expval{\varepsilon}_{\beta} + \omega)/ \gamma.
    \end{equation}

    Finally, let us show the bottleneck condition (property (3)) is fulfilled for $\beta > \beta^*$. To this end, we note that the clusters are separated by extensive distance by definition. Choosing $\eta < \xi$, we can then write
    \begin{align}
        \frac{\pG[\partial_{\eta} \Omega_j]}{\pG[\Omega_j]} 
            &= \frac{
            \sum_{\vec x\in\partial_{\eta} \Omega_j} e^{-\beta\abs{\vec x}}}
                {\sum_{\vec y \in \Omega_j}e^{-\beta\abs{\vec y}}} \\
            &\leq\frac{ 2^n  e^{-\beta \epsilon n}}{e^{-\beta(\expval{\varepsilon}_{\beta} + \omega) n}} \\
            &= \exp(-n \left( \beta\left(\epsilon- \expval{\varepsilon}_{\beta} - \omega\right)- \ln 2\right) ).
    \end{align}
    In the second line, we used in the numerator that $\abs{\partial_{\eta} \Omega_j} < 2^n$ and also that for $\eta < \xi$,  $\partial_{\eta} \Omega_j \subset \Omega(\epsilon)^c$. In the denominator, we used that since $\Omega_j$ is typical, it contains at least one state in the microcanonical shell and the probability of this state lower bounds $\pG[\Omega_j]$.
    We then arrive at the last line. Note that it is independent of the cluster $j$ and goes to zero for $n\to\infty$ since by our choice of parameters above $0 < \omega < \omega^* = \epsilon - \expval{\varepsilon}_{\beta}$. Choosing $\beta^*$ such that $\beta^* = \ln 2 / (\omega^* - \omega)$ then concludes the proof.
     
\end{proof}

Note that the decomposition established in \cref{lem:gibbs_decomposition_exp} is not guaranteed to be a decomposition into \emph{extremal} states in the sense introduced before and in \autoref{sec:gibbs_decomposition} of the main text. However, as also mentioned before, the configurational entropy defined in \cref{def:sconfig_gibbs} can only grow if clusters are subdivided further. Because of this, the above decomposition will suffice to show the existence of spin glass order at low temperature.

We are now finally able to prove our lower bound of the configurational entropy of the Gibbs state from expansion.

\begin{theorem}[Spin Glass order from Expansion]
    Consider a family of binary matrices $H_n$ of full rank, $(\delta, \gamma)$-expanding, and the decomposition of the Gibbs state as defined in $\cref{lem:gibbs_decomposition_exp}$.
    
    Then, there exist positive constants $\gamma^*, T^*$ such that if $\gamma > \gamma^*$ and for $T < T^*$, the family realized spin glass order in the sense of \cref{def:spin_glass_order}.
\end{theorem}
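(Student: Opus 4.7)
The plan is to leverage the explicit decomposition furnished by \cref{lem:gibbs_decomposition_exp} and to upper-bound the weight $w_j$ of each typical component directly, using only (i) typicality, (ii) the diameter bound, and (iii) the closed-form partition function available when $H$ is full rank.

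First, I would bound the weight of each typical component. By typicality, $w_j \leq 2\,\pG[\Omega_j\cap\Xi_\omega(\beta)]$. Each configuration in the microcanonical shell contributes at most $e^{-\beta n(\langle\varepsilon\rangle_\beta-\omega)}/Z$ to the Gibbs state, and by property (2) of \cref{lem:gibbs_decomposition_exp}, $|\Omega_j\cap\Xi_\omega(\beta)|$ is no larger than the volume of a Hamming ball of radius $2(\langle\varepsilon\rangle_\beta+\omega)/\gamma$ around any of its elements. Combining these with the standard estimate $\mathcal B_n(\rho)\leq 2^{nH_2(\rho)}$, the exact formula $Z=2^{n-m}(1+e^{-\beta})^m$ from \cref{lem:xi_beta_weight}, and the identity $\langle\varepsilon\rangle_\beta = (1-r)/(1+e^\beta)$, taking $-\log_2$ and letting $\omega\to 0^+$ after $n\to\infty$ yields the uniform (over $j$) lower bound
\begin{equation}
\sconf^{(\rm min)}(\beta)\;\geq\; r + \Phi_\gamma(\beta),
\end{equation}
where
\begin{equation}
\Phi_\gamma(\beta)\;:=\;\frac{\beta\langle\varepsilon\rangle_\beta}{\ln 2} + (1-r)\log_2\!\bigl(1+e^{-\beta}\bigr) - H_2\!\left(\tfrac{2\langle\varepsilon\rangle_\beta}{\gamma}\right).
\end{equation}
Since $\sconf\geq\sconf^{(\rm min)}$, the same bound controls the Shannon configurational entropy.

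Second, I would analyse $\Phi_\gamma(\beta)$ at low temperature. At $T=0$, all three temperature-dependent terms vanish, so the bound saturates at $r$; this is tight because $H$ being full rank forces the kernel to contain exactly $2^{rn}$ ground states, each forming its own component under expansion, giving $\sconf(0)=r$ exactly. Setting $y=e^{-\beta}$ and expanding, the leading behaviour in $y(-\log_2 y)$ reads
\begin{equation}
\Phi_\gamma(\beta)\;=\;(1-r)\!\left(1-\tfrac{2}{\gamma}\right)y(-\log_2 y) + O(y),
\end{equation}
so $\Phi_\gamma$ is strictly positive at small enough $T>0$ whenever $\gamma$ exceeds an explicit threshold $\gamma^*(r)$ (any $\gamma > 2$ works to leading order; the quantitative constant must also be compatible with the clustering condition $\langle\varepsilon\rangle_\beta<\delta\gamma/2$ and with $\beta>\beta^*$ from \cref{lem:gibbs_decomposition_exp}). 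Choosing $T^*$ small enough that all three of these requirements hold simultaneously and $\Phi_\gamma(\beta)\geq c_0>0$ uniformly on $(0,T^*)$ then gives $\sconf^{(\rm min)}(\beta)\geq c_0$ (shattering, with $c_1=c_0$) and $\sconf(\beta)\geq r+c_0>\sconf(0)$ (incongruence, with $c_2=c_0$), which is exactly \cref{def:spin_glass_order}.

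The main obstacle is the quantitative comparison in the second step: the Hamming-ball term $H_2(2\langle\varepsilon\rangle_\beta/\gamma)$ and the Boltzmann gain $\beta\langle\varepsilon\rangle_\beta/\ln 2$ both scale like $y(-\log_2 y)$ with only a factor-$2/\gamma$ difference, so the positivity of the bound at low but nonzero $T$ is not automatic and requires the \emph{strong}-expansion hypothesis $\gamma>\gamma^*(r)$ rather than merely $\gamma>0$. This is precisely the same threshold that controls \cref{thm:energy_landscape_shattering}, and indeed the argument above can be viewed as the canonical-ensemble counterpart of the microcanonical counting used there: the only genuinely new ingredient is the replacement of $\mathcal B_m(\epsilon/(1-r))$ (the microcanonical shell) by the full $Z$, which is available in closed form precisely because $H$ has no redundancies. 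Everything else is a routine combination of typicality, the diameter bound, and standard Hamming-volume estimates.
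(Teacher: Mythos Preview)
Your proposal is correct and follows essentially the same route as the paper: use the decomposition from \cref{lem:gibbs_decomposition_exp}, invoke typicality to bound $w_j$ by twice the weight in the microcanonical shell, upper-bound the number of states there by the Hamming-ball volume of radius $2(\langle\varepsilon\rangle_\beta+\omega)/\gamma$, plug in the exact partition function, and obtain the same expression $r+\Phi_\gamma(\beta)$ (the paper writes $\log_2\Upsilon(\rho)$ instead of $H_2(\rho)$, but these are identical). Your low-temperature expansion is more explicit than the paper's ``it is easy to see'' but reaches the same conclusion that $\gamma>\gamma^*$ suffices.
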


\begin{proof}

Consider the decomposition of the Gibbs state in \cref{lem:gibbs_decomposition_exp}.
We can lower bound the configurational entropy by upper bounding the size of the largest component, which will also yield the necessary lower bound of the configurational min-entropy. Note that
\begin{align}
    \Sconf =& -\sum_{j} w_j \log_2 w_j - w_{\Lambda}\log_2 w_{\Lambda} \\
    \geq& -\log_2 w_{\rm max} + O(n e^{-\zeta n})
\end{align}
where $w_{\rm max} \equiv {\rm max}(\{w_j\})$ does not include the junk component, but we have already shown that the weight of that is exponentially small. 
Thus, if we can show that $-\log_2 w_{\rm max}$ is lower-bounded away from $r$, then both \autoref{eq:app:gibbs_shatter} and \autoref{eq:app:gibbs_incongr} follow.

We take $\Omega_{{\rm max}\,w}$ to be a component with weight $w_{\rm max}$.
For this we can use that, by definition, all $\Omega_j$ are typical subsets in the sense of \cref{def:atypical_clusters} and hence have large overlap with $\Xi_{\omega}(\beta)$:
\begin{align}
    w_{\rm max} &= \pG[\Omega_{{\rm max}\,w}] \\
    &< 2\,\pG[\Omega_{{\rm max}\,w} \cap \Xi_{\omega}(\beta)] \\
    &\leq 2 Z^{-1} \sum_{\vec x \in \Omega_{{\rm max}\,w} \cap \Xi_{\omega}(\beta)} e^{-\beta \abs{H \vec x}} \\
    &\le 2 Z^{-1} \, \abs{\Omega_{{\rm max}\, w} \cap \Xi_{\omega}(\beta)} e^{-\beta n (\expval{\varepsilon}_{\beta} - \omega)}
\end{align}
Now we use that also by \cref{lem:gibbs_decomposition_exp}, we have
\begin{equation}
    \abs{\Omega_{{\rm max}\, w} \cap \Xi_{\omega}(\beta)} \leq \mathcal B_n\left(\frac{2(\expval{\varepsilon}_{\beta} + \omega)}{\gamma}\right).
\end{equation}
and use the upper bound on the size of a Hamming ball in \autoref{eq:sconf_energy_proof4}, together with the exact expression of the partition function derived in \cref{lem:xi_beta_weight} to obtain our final upper bound for $w_{\rm max}$: $\forall \omega>0$
\begin{align}\label{eq:thm:pmax_upper_bound}
    \frac{1}{n}\log_2 \frac{1}{w_{\rm max}} \geq& 
    r + (1-r)\log_2(1 + e^{-\beta}) \nonumber\\
    &+ \beta(\expval{\varepsilon}_{\beta}-\omega)\log_2 e \nonumber\\
    &- \log_2\Upsilon\left(\frac{2(\expval{\varepsilon}_{\beta} + \omega)}{\gamma}\right) \nonumber\\
    &- O\left( \frac{\log n}{n} \right)
\end{align}

Notably,  for $\omega\to0$, this is \emph{exactly} the same bound as obtained for the configurational entropy of the landscape in \cref{thm:energy_landscape_shattering}, when substituting the energy density cutoff $\epsilon$ by the expectation value of the energy density
\begin{align}
    \expval{\epsilon} &= -\frac{1}{n} \pdv{\beta} \log_2 Z \\
    &= - \pdv{\beta} \left[r + (1-r) \log_2 (1+e^{-\beta})\right] \\
    &= \frac{1-r}{1 + e^{\beta}}.
\end{align}
Similar to there, evidently for $T\to0$ we have $\sconf \to r = 1-m/n$, which is the logarithm of the number of ground states (the dimension of the kernel of $H$), since $H$ is full rank. Further, it is easy to see that the bound is strictly larger than $r$ for sufficiently large $\gamma$.
\end{proof}

As before, the above theorem can be instantiated using Gallager codes, which can be realized with arbitrarily large $\gamma$ (\cref{thm:gallagerexp}), and which are of full rank with high probability  (\cref{lem:gallager_no_redundancies}). Explicitly, we show the bound in \autoref{eq:thm:pmax_upper_bound} in \autoref{fig:complexity_energy} of the main text, for parameters chosen for a particular family of Gallager codes that realize sufficiently strong expansion.

\section{Features of the energy landscape from deleting checks\label{app:landscape_proof}}

In the main text and in \appref{app:complexity} we have used expansion and the absence of redundancies to argue for a complex energy landscape. Here we present an alternative route to derive properties of the energy landscape. These allow only more limited statements, but they do not rely on the absence of redundancies and require weaker expansion.

\subsection{General results}

We will always assume that the number of checks is proportional to the number of bits $m = \Theta(n)$.

\begin{definition}\label{def:sparse}
    We call a binary matrix $\Hcheck\in\mathbb{F}_2^{m\times n}$ \emph{$(w,b)$-sparse} if each row has Hamming weight $\leq w$ and each column has Hamming weight $\leq b$.
\end{definition}

\begin{theorem}\label{thm:low_energy_separated}
    Consider an expander code of rate $r$ defined by a full-rank parity check matrix $\Hcheck \in \mathbb{F}_2^{m\times n}$ that is $(w,b)$-sparse and $(\delta,\gamma)$-expanding with $\gamma > 1$.
    
    It holds that\\
    \begin{enumerate}[label=(\roman*)]
        \item there exists an exponential number (within each symmetry sector) of states that have non-zero energy density and that are separated by Hamming distance $> \delta n$ from each other and from ground states,
        \item for any $0 < \epsilon < \min \lbrace \frac{\delta \gamma}{2}, \frac{1-r}{1+bw} \rbrace$ there exist exponentially many clusters of states of energy $\leq \epsilon n$ that do not contain ground states,
        \item these clusters are separated by a $\Theta(n)$ energy barrier from each other and from clusters containing ground states.
    \end{enumerate}
\end{theorem}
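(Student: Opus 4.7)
The plan is to exploit the freedom of deleting parity checks to expose exponentially many low-energy states of $H$ without ever invoking the absence of redundancies. Given the full-rank assumption, removing any subset $R \subset [m]$ of $s$ rows from $H$ produces a smaller parity check matrix $H'$ whose kernel has dimension exactly $k + s$, and every $\vec x \in \ker H'$ has $|H\vec x| \leq s$ since only checks in $R$ can be violated. Choosing $s = \lfloor \epsilon n \rfloor$ immediately plants $2^{k+s}$ states inside $\Omega(\epsilon)$, exponentially more than the $2^k$ codewords of $\ker H$. This is the mechanism underlying (i) and the starting point for everything else.

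To convert this abundance of low-energy states into genuine cluster structure, I would apply \cref{lem:clustering} to $H$ at energy density $\epsilon < \delta\gamma/2$: every element of $\Omega(\epsilon)$ lies in a cluster of diameter at most $2\epsilon n/\gamma$, and different clusters are separated by Hamming distance $\geq \delta n$. Consider the quotient $\ker H' / \ker H$, a vector space of dimension $s$ over $\mathbb{F}_2$: each of its $2^s$ cosets is either close to a ground state (minimum-weight representative of weight $\leq s/\gamma$, by clustering applied to the representative) or lies at Hamming distance $\geq \delta n$ from every codeword. Because distinct cosets have distinct minimum-weight representatives, the number of ``near-ground-state'' cosets is at most the volume of a Hamming ball of radius $s/\gamma$. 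Hence at least $2^s - V(n, s/\gamma)$ cosets contribute clusters of $\Omega(\epsilon)$ that contain no ground state, giving (ii); since a symmetry of $H$ maps one such coset to another coset within the same orbit, one reads off a symmetry-sector-wise exponential count as well, strengthening (i).

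Claim (iii) — the extensive energy barrier between distinct clusters — then follows immediately from the same lower bound that drives \cref{sec:bottlenecks_proof_main}. Any path in configuration space connecting two states in different clusters of $\Omega(\epsilon)$ must pass through some intermediate $\vec y$ whose Hamming distance to one endpoint $\vec x$ lies in the window $[\tfrac{1}{2}(\delta n - 2\epsilon n/\gamma),\, \delta n]$, and the triangle-inequality-plus-expansion inequality forces $|H\vec y| \geq \gamma|\vec x \oplus \vec y| - |H\vec x|$, which is extensive for $\epsilon < \delta\gamma/2$. The same bound applied to a codeword endpoint also produces barriers between non-ground clusters and the ground-state clusters.

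The principal obstacle — and the place where the sparsity hypothesis $(w,b)$ and the explicit upper bound $\epsilon < (1-r)/(1+bw)$ should enter — is making the counting $2^s \gg V(n, s/\gamma)$ work at $\gamma$ just above $1$, since the naive Hamming-ball estimate only yields the cruder threshold $\gamma > 2$. I would refine the count by noting that the bits flipped between two cosets of $\ker H'$ modulo $\ker H$ can be effectively localized near the support of the deleted rows: the bit-support of $R$ has size at most $sw$, and each incident check lies in at most $b$ rows, so the ``influenced'' set of checks has size at most $s(1+bw)$. Requiring this influenced set to fit inside the $m = (1-r)n$ rows of $H$ gives $s \leq (1-r)n/(1+bw)$, which matches the bound in the theorem and plausibly suffices to organize the deletion so that the new codewords are constrained to the sparser subcode on the influenced bits, shrinking the effective ``near-ground-state'' volume and buying the gap at $\gamma > 1$. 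Working out this localization carefully is the step I expect to require the most care.
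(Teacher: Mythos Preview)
Your high-level strategy—delete checks to enlarge the kernel, then argue the new codewords are far apart—is correct and matches the paper's. But you are missing the one idea that makes $\gamma > 1$ suffice and that explains the bound $\epsilon < (1-r)/(1+bw)$: the deleted checks must be chosen to have \emph{pairwise disjoint bit-support}.

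If the deleted set $C$ consists of checks no two of which share a bit, then removing $C$ decreases the number of checks incident to any single bit by at most one. Hence the reduced matrix $H_s$ is still $(\delta,\gamma-1)$-expanding, so $\ker H_s$ has minimum distance $> \delta n$. That single observation gives (i) directly: for any $S \subseteq C$ the state $x_S$ solving $H x_S = \sum_{c\in S} e_c$ (unique up to $\ker H$, by full rank) lies in $\ker H_s$, so distinct $x_S$ are automatically $> \delta n$ apart from each other and from all codewords—no Hamming-ball counting required. The size of such a disjoint set $C$ is found by a greedy independent-set argument on the ``check-overlap graph'' (degree $\leq bw$), yielding $|C| \geq m/(1+bw) = (1-r)n/(1+bw)$. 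That is where the mysterious constant comes from: it is the largest $|S| = \lfloor \epsilon n \rfloor$ you can fit inside $C$, not a constraint on an ``influenced set'' as you guessed.

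By contrast, your approach deletes arbitrary checks, so $H'$ need not be expanding at all, and you fall back on bounding how many cosets of $\ker H'/\ker H$ can share a cluster via the ball volume $V(n, 2\epsilon n/\gamma)$. This gives a cluster count of at most $2^{\epsilon n}/V(n,2\epsilon n/\gamma)$, which is \emph{not} exponentially large for small $\epsilon$ (since $h(2\epsilon/\gamma)/\epsilon \to \infty$), and even when it works it yields separation only on the cluster scale, not the claimed $>\delta n$. Your proposed ``localization'' fix in the last paragraph does not repair this; the right move is the disjoint-support trick above, after which parts (ii) and (iii) follow exactly as you sketch.
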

\begin{proof}
    In order to argue (i), we show that we can find a set~$C$ containing a linear number of checks that do not have overlapping support.
    Since all checks in $\Hcheck$ are linearly independent, each check $c$ gives rise to a (up to addition of a code word) unique state $x_c$ for which only $c$ is violated.
    In other words, $x_c$ is the solution of $\Hcheck x_c = e_c$ which is unique up to an element of $\ker \Hcheck$.
    Consequently, for any subset $S\subseteq C$ the state $x_{S} = \sum_{c\in S} x_c$ has energy~$|S|$, so that all linear sized subsets~$S$, of which there are exponentially many, give rise to states~$x_S$ with nonzero energy density.

    We will first find a suitable set of checks $C$.
    To this end, we define an auxiliary graph $G$ with vertex set given by the $m=(1-R)n$ checks in $\Hcheck$.
    Two vertices in $G$ are connected if and only if the corresponding checks have overlapping supports.
    Let $b$ be the maximum number of checks acting on a single bit and $w$ the maximum support of a check (cf.\ \Cref{def:sparse}).
    It holds that each vertex in $G$ has degree $\leq b w$.
    We obtain the set $C$ via a greedy procedure, where we pick any vertex of $G$, add it to $C$ and then remove that vertex as well as all of its neighbors from $G$.
    We iterate this procedure until $G$ is empty.
    At every step we remove at most $1+bw$ vertices, so that the final set has size $|C| \geq \frac{1-R}{1+b w} n$.

    We will now argue that the states $x_S$ defined above are separated by a linear Hamming distance.
    For this, we observe that deleting the checks in~$C$ gives a new code with check matrix $\Hcheck_s$ for which the states~$x_S$ are code words.
    Deleting the checks $C$ reduces the number of checks acting on each bit by at most one, so that we have $|\Hcheck_s x| \geq (\gamma - 1) |x|$ for any $x\in \mathbb{F}_2^n$ with $|x|\leq \delta n$.
    Hence, the code defined by the parity check matrix with checks in~$C$ deleted still has distance~$\geq \delta n$.
    This concludes the proof of statement (i).
    
    Statement (ii) follows by choosing the subsets $S\subseteq C$ to be of size $\lfloor \epsilon n \rfloor$ so that the states~$x_S$ belong to clusters that do not overlap with each other or with clusters containing ground states.
    
    To demonstrate (iii), take any state $x$ inside a cluster such that $e = \frac{|\Hcheck x|}{n} < \epsilon$.
    In order to reach a state in a different cluster, we need to flip at least~$\delta n$ bits in~$x$.
    Hence, we must pass through an intermediate state $x+\Delta x$ where $\frac{\epsilon}{\gamma}n<|\Delta x|<\delta n$.
    By \Cref{lem:clustering} we must have $|\Hcheck(x+\Delta x)| > \epsilon n$, so that the energy difference between the starting state and the intermediate state is $\Delta E = |\Hcheck (x+\Delta x)| - |\Hcheck x| \geq (\epsilon - e)n$.
    Hence, any pair of clusters are separated by a linear energy barrier.
    Note that this argument holds for clusters containing ground states and clusters only containing states of non-zero minimum energy density.
\end{proof}

We can in fact strengthen the statement of \Cref{thm:low_energy_separated} if we assume that $\gamma > 2$.
In the stronger version we do not only have to overcome a linear energy barrier, but also the energy cost grows linearly in the Hamming distance from~$x$ for all intermediate states (up to some linear cut-off $\delta n$).
\begin{definition}
    We say that a state $x\in \mathbb{F}_2^{n}$ is the \emph{bottom of a linear energy well} if there exists a constant $\delta >0$ such that for all bit-flip vectors $\Delta x \in \mathbb{F}_2^{n}$ with $|\Delta x| \leq \delta n$ we have that $\Delta E = |H(x+\Delta x)| - |Hx| \geq \Omega(|\Delta x|)$.
\end{definition}

\begin{theorem}[Stronger version of \Cref{thm:low_energy_separated}]\label{thm:energy_landscape_linear}
    Taking $\gamma > 2$ in the assumptions of \Cref{thm:low_energy_separated} it follows that there exists an exponential number (up to ground state symmetries) of states that have non-zero energy density and that are at the bottom of a linear energy well.
\end{theorem}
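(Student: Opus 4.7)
The plan is to upgrade the proof of \Cref{thm:low_energy_separated} by reusing its construction and then exploiting the stronger expansion assumption to sharpen the bound on the energy cost of local deviations. Recall that the previous proof produced a subset $C$ of at least $\tfrac{1-r}{1+bw}n$ checks with pairwise disjoint supports, and for every $S\subseteq C$ a state $x_S$ satisfying $\Hcheck_C x_S = e_S$ and $\Hcheck_s x_S = 0$, where $\Hcheck_C$ (resp.\ $\Hcheck_s$) denotes the submatrix of $\Hcheck$ given by the rows in $C$ (resp.\ in its complement). For every $S$ of linear size --- of which there are exponentially many, all inequivalent modulo codewords --- the associated $x_S$ has nonzero energy density, and I will show that under $\gamma > 2$ each such $x_S$ sits at the bottom of a linear energy well.

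The central step is a direct row-split of the energy. For any bit-flip $\Delta x \in \mathbb{F}_2^n$,
\begin{equation*}
|\Hcheck(x_S \oplus \Delta x)| \;=\; |e_S \oplus \Hcheck_C \Delta x| \;+\; |\Hcheck_s \Delta x|,
\end{equation*}
while $|\Hcheck x_S| = |S|$. The reverse triangle inequality applied to the first summand gives $|e_S \oplus \Hcheck_C \Delta x| \geq |S| - |\Hcheck_C \Delta x|$, so that
\begin{equation*}
\Delta E \;\geq\; |\Hcheck_s \Delta x| \;-\; |\Hcheck_C \Delta x|.
\end{equation*}
Since the checks in $C$ have pairwise disjoint supports, each bit participates in at most one row of $\Hcheck_C$, which forces $|\Hcheck_C \Delta x| \leq |\Delta x|$. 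Expansion of $\Hcheck$ then yields, for any $|\Delta x| \leq \delta n$, the bound $|\Hcheck_s \Delta x| = |\Hcheck \Delta x| - |\Hcheck_C \Delta x| \geq (\gamma-1)|\Delta x|$, and hence
\begin{equation*}
\Delta E \;\geq\; (\gamma - 2)\,|\Delta x|,
\end{equation*}
which is the desired linear lower bound precisely when $\gamma > 2$.

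The main subtlety I anticipate is accounting for the ``loss of~$2$'' in the expansion coefficient, which enters because $|\Hcheck_C \Delta x|$ appears with a minus sign in two distinct places: once via the reverse triangle inequality on the $C$-syndrome (measuring how much the deviation can cancel the original violations), and once via the decomposition of $|\Hcheck \Delta x|$ into its $C$- and $s$-parts (measuring how much of the expansion is ``wasted'' on checks in $C$). This explains the jump in hypothesis from $\gamma > 1$ (which suffices for the barrier statement of \Cref{thm:low_energy_separated}) to $\gamma > 2$ (required for the linear well). The disjoint-support property of $C$, inherited from the greedy independent-set argument in the auxiliary overlap graph, is exactly what keeps $|\Hcheck_C \Delta x|$ from exceeding $|\Delta x|$; without it, one could only use $|\Hcheck_C \Delta x| \leq b|\Delta x|$ and would need the much stronger $\gamma > 1 + b$. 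The cutoff $|\Delta x| \leq \delta n$ is inherited from the expansion property, closing the argument.
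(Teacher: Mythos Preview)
Your proof is correct and follows essentially the same route as the paper's: both reuse the disjoint-support check set $C$ from \Cref{thm:low_energy_separated}, split the syndrome into $C$- and complement-parts, and arrive at $\Delta E \geq |\Hcheck_s \Delta x| - |\Hcheck_C \Delta x| \geq (\gamma-2)|\Delta x|$ via the same two ingredients (expansion gives $|\Hcheck_s \Delta x| \geq (\gamma-1)|\Delta x|$, disjoint supports give $|\Hcheck_C \Delta x| \leq |\Delta x|$). The only cosmetic difference is that the paper reaches the intermediate bound via the identity $|a \oplus b| = |a| + |b| - 2|a \wedge b|$ with $|a \wedge b| \leq |b|$, whereas you invoke the reverse triangle inequality directly; these are equivalent, and your discussion of why the hypothesis jumps from $\gamma>1$ to $\gamma>2$ is a nice clarification not present in the original.
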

\begin{proof}
    Let the set of checks $C$ be chosen as in the proof of \Cref{thm:low_energy_separated}.
    Define $\Hcheck_s$ as the parity check matrix with all checks from $C$ removed from $\Hcheck$ and $\Hcheck_d$ the matrix consisting of all checks in $C$.
    First we observe that for any $x\in \ker \Hcheck_s$ and $\Delta x \in \mathbb{F}_2^n$ it holds that
    \begin{align}\label{eqn:linear_well}
    \begin{split}
        \Delta E =& |\Hcheck(x+\Delta x)| - |\Hcheck x| \\
        =& |\Hcheck_s(x+\Delta x)| + |\Hcheck_d(x+\Delta x)| - |\Hcheck_s x| - |\Hcheck_d x| \\
        =& |\Hcheck_s \Delta x| + |\Hcheck_d x| + |\Hcheck_d \Delta x|\\
        &-2\, |\Hcheck_d x \wedge \Hcheck_d \Delta x| - |\Hcheck_d x| \\
        \geq& |\Hcheck_s \Delta x| - |\Hcheck_d \Delta x| .
    \end{split}
    \end{align}
    where $\wedge$ refers to the bit-wise \emph{and}-operation and we used that $|\Hcheck_d x \wedge \Hcheck_d \Delta x|\leq |\Hcheck_d \Delta x|$.
    We have already shown in the proof of \Cref{thm:low_energy_separated} that $|\Hcheck_s \Delta x| \geq (\gamma - 1) |\Delta x|$ whenever $|\Delta x| \leq \delta n$.
    As we chose~$C$ as a set of checks that do not share any common bits we have that $|\Hcheck_d \Delta x| \leq |\Delta x|$.
    Hence, we have shown that $|\Hcheck(x+\Delta x)| - |\Hcheck x| \geq (\gamma - 2)|\Delta x|$ when $|\Delta x| \leq \delta n$.
\end{proof}

\subsection{Gallager codes}

We now show that random LDPC codes satisfy the assumptions of \Cref{thm:low_energy_separated} and \Cref{thm:energy_landscape_linear}.

By \Cref{thm:gallagerexp,lem:gallager_no_redundancies} choosing the left-degree $\ell$ to be odd and large enough we obtain a Gallager code with $\Hcheck_G$ being full-rank, $(r,\ell)$-sparse and $(\delta>0,\gamma>2)$-expanding and thus satisfying the assumptions of \Cref{thm:low_energy_separated} and \Cref{thm:energy_landscape_linear}.

\section{Redundancies in Tanner codes on expander graphs\label{app:redundancies}}

\begin{figure}
    \includegraphics{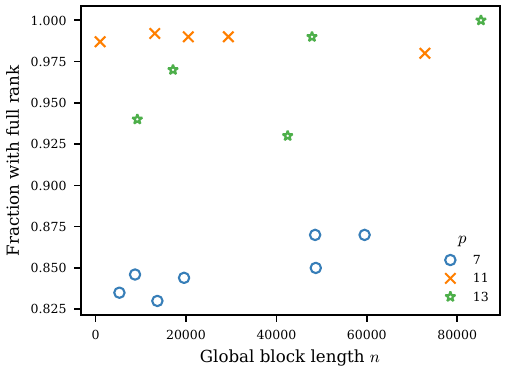}
    \caption{Number of redundant checks in Tanner codes defined on LPS expanders $X_{p, q}$ with local codes being random codes of block length $s=p+1$. We plot the fraction of instances with zero redundancies (i.e. the parity check matrix is full rank) against the block length of the global code $n$. The data was obtained from generating $10^3$ instances of codes with $n < 2 \times 10^4$ and $p \geq 11$ and $10^2$ instances otherwise.}
    \label{fig:app:redundancies_LPS}
\end{figure}

In this section, we study numerically the number of redundant checks in a code obtained from the Tanner construction. While we the codes used to define for the Tanner-Ising models studied numerically \autoref{sec:numerics} have \emph{zero} redundancies (that is their parity check matrix has full rank), there is no general guarantee for this to be the case. 
The fact that application of \cref{thm:landscape_expanders} and \cref{thm:sg_expanders} assumes the absence of redundancies warrants a more systematical study of their occurrence.

Before turning to the numerical results, we note that in Tanner codes, redundancies of the global code are equivalent to the codewords of a Tanner code defined on the same graph, but with the local code on each vertex replaced by its dual ($H_{\rm L}\to H_{\rm L}^\perp$). This is interesting because it means that under certain circumstances, one can rigorously guarantee the absence of \emph{local} redundancies. Concretely, if both the local and its dual have high distance $d_{\rm L}, d_{\rm L}^\perp \geq \lambda_2$, then both the original code and the code whose codewords correspond to the redundancies are expanding (and hence have high distance). This means that there cannot be any local redundancies.

We now turn to our numerical results. For the Tanner codes based on the hyperbolic  $\{3, 7\}$ lattice with the local code being the Hamming code, all of the finite instances we generated have zero redundancies. 
For the Tanner code based on degree-$7$ HGRRGs with the same local code we generated $10^3$ instances of the random graph for girth $3, 4$ and $5$ and $10^2$ instances of girth $6$ and also found that for all generated instances the parity check matrix $\Hcheck$ has full rank. 

A case of interest where redundancies do occur is that of Tanner codes defined on LPS expanders $X_{p, q}$ \cite{lps1988expanders,margulis1988expanders} of degree $s=p+1$. In particular, we consider the case where the local code is chosen randomly with $k_{\rm L} = \floor{s/2} + 1$. This case is of interest because in the limit of large degree $p \gg 1$ it has been rigorously established that these codes are expanding \cite{sipser_spielman1996}.
We generated $10^3$ instances of codes with $n < 2 \times 10^4$ and $p \leq 11$ and $10^2$ instances of all other codes. We show the fraction of instances with zero redundancies as a function of the size $n$ of the global code in \autoref{fig:app:redundancies_LPS}.
For large $n$ and $p > 7$, our data suggests that codes from this ensemble have zero redundancies with high probability. 
Note that in the case of $p+1=s=8$, many instances of the local code have distance $d_{\rm L}=2$.

\section{Sampling from the global Gibbs distribution\label{app:gibbs_samples}}

In this appendix, we explain how to sample states from the global Gibbs distribution $\pGibbs(\vec\sigma) \propto e^{-\beta E(\vec \sigma)}$ of the models studied in the main text. This is used to initialize the simulations of local dynamics in equilibrium at low temperatures.
Sampling from the Gibbs distribution is possible despite the fact local dynamics has exponentially long autocorrelation times at low temperature, because our models are based on \emph{linear} codes and either local redundancies (see below), but no [or at most $\order{1}$] global redundancies. 

\subsection{No redundancies}

Given a parity check matrix $\Hcheck\in\mathbb{F}_2^{m\times n}$ with full rank $m$, the energy of a state $\vec x\in\mathbb{F}_2^n$ is given by $E := \abs{\vec s}$, where $\Hcheck \vec x = \vec s \in \mathbb{F}_2^m$ is called the syndrome of $\vec x$.

Since $s_i \in \{0, 1\}$ and hence $\abs{\vec s} = \sum_i s_i$, we can write the Gibbs distribution as
\begin{equation}
    \pGibbs \propto e^{-\beta E(\vec \sigma)} 
            = e^{-\beta \abs{\vec s}}
            = \prod_i e^{-\beta s_i}.
    \label{eq:app:syndrom_iid}
\end{equation}
In the absence of redundancies, the components of $\vec s$ are i.i.d variables with distribution $p(s_i) = e^{-\beta s_i}$. This means we can sample from the Gibbs distribution simply by drawing $m$ i.i.d variables $s_i$ to construct a syndrome $\vec s$ and then get a corresponding state $\vec x$ by solving the linear equation $\Hcheck \vec x = \vec s$. Note that the solution to this linear equation, $\vec x$, is not unique. To sample the Gibbs distribution faithfully, we hence add to $\vec x$ a random vector from the kernel $\ker \Hcheck$.

Redundancies, i.e. dependencies between the rows of $\Hcheck$, make the distribution of syndromes dependent and hence the above procedure impossible in general.
There are however cases in which the above procedure can be adjusted and we discuss two of them in the following. 
First, we discuss the case of \emph{local} redundancies, which appear, for example, in the ``symmetrized'' models considered in the main text. 
Second, if the number of dependencies is small, we can adjust our general procedure to take this into account, however with an overhead that scales \emph{exponentially} with the number of redundancies.

\subsection{Local redundancies}

We now discuss the case of how to sample from the Gibbs distribution of the ``symmetrized'' models considered in the main text. The symmetrized model is based on the Tanner code on a graph $G$, $\tanner{G}{\Hcheck_{\rm L}^{\rm (sym)}}$, where given a local code $\Hcheck_{\rm L} \in \mathbb{F}_2^{m_0}$ without redundancies,
$\Hcheck_{\rm L}^{\rm (sym)} \in \mathbb{F}_2^{m}$ contains not only the rows of $\Hcheck_{\rm L}$ but \emph{all} $m = 2^{m_0}-1$ non-trivial linear combination of rows. The symmetrized local code $\Hcheck_{\rm L}^{\rm (sym)}$ has  rank $m_0$ and the global code $\mathcal T$ has an extensive number of redundancies. 

We can still sample from the global Gibbs distribution without the exponential overhead needed in the general case with a linearly growing number of redundancies, by exploiting their local structure in the symmetrized model. 

To this end, we sample the syndrome \emph{per vertex} of the graph $G$, that is per local code $\Hcheck_{\rm L}^{\rm (sym)}$. First, note that
\begin{align}\label{eq:symm-check}
    \abs{\Hcheck_{\rm L}^{\rm (sym)} \vec x_{\rm L}} =
    \begin{cases}
        0 & \text{if}~ \vec x_{\rm L} \in \ker{\Hcheck_{\rm L}} \\
        J_{m_0} := 2^{m_0-1} & \text{else}
    \end{cases},
\end{align}
that is the symmetrized model only distinguishes between codewords and non-codewords. Given the fact that the checks of local codes on different vertices are still mutually independent we can then write 
\begin{equation}
    \pGibbs \propto \prod_{v \in G} \exp[-\beta \abs{\vec s_{v}}]
\end{equation}
where $\vec s_v$ is the syndrome restricted to the vertex $v$.
Second, note that there are $2^{m_0}$ distinct syndrome vectors, which correspond to all combinations of violating the $m_0$ independent checks of the non-symmetrized local code. 

We hence draw a random \emph{non-redundant} syndrome $\vec s_v^{(0)}\in\mathbb{F}_2^{m_0}$ at each vertex using 
\begin{align}
    p(\vec s_v^{(0)}) \propto
    \begin{cases}
        1 & \text{if}~\vec s_v^{(0)} = 0 \\
        \exp(-\beta\,J_{m_0}) & \text{else}
    \end{cases}
    \label{eq:app:p_syndrome_dependent}
\end{align}
with a suitable normalization.
Given one such $\vec s_v^{(0)}$ per vertex $v$ we can construct a global non-redundant syndrome $\vec s^{(0)}\in\mathbb{F}_2^{m_{\rm G}}$, where $m_{\rm G} = n_v\,m_0$ is the total number of independent checks of the global code $\mathcal T$. This is in the image of the non-symmetrized global parity check matrix, $\Hcheck^{(0)}$, and hence we can compute a state from it by inverting the linear equation $\Hcheck^{(0)} \vec x = \vec s^{(0)}$.
Since $\vec s^{(0)}$ was drawn from the Gibbs distribution given by the energy $E = \Hcheck^{(\rm sym)}$, the state $\vec x$ will also be from that distribution.

We can use \autoref{eq:app:p_syndrome_dependent} to derive an exact expression for the energy of a symmetrized code in global equilibrium. In particular since all vertices are independent
\begin{align}\label{eq:E-G}
    \expval{E}_{\rm G} &= \frac{2n}{s} \expval{E_v} \\
    &= \frac{2n}{s} 2^{m_0-1}  \frac{ (2^{m_0}-1) e^{-\beta J_{m_0}}}{1 +  (2^{m_0}-1)e^{-\beta J_{m_0}}}
\end{align}
where $\expval{E_v}$ is the expectation value of the energy at a single vertex. Note that this expression does only depend on the graph degree $s$ and number of independent checks $m_0$ in the local code.

\subsection{Constant number of redundancies}

Consider now the general case of a parity check matrix $\Hcheck\in\mathbb{F}_2^{m\times n}$ with rank $m_0$. As before, we want to sample from the Gibbs distribution $\pGibbs \propto e^{-\beta \abs{\vec s}}$.
In the case of no redundancies, we can simply choose a random $\vec s$ by sampling the syndrome components from the i.i.d Gibbs distribution. 
Note that \autoref{eq:app:syndrom_iid} is still true even in the presence of redundancies, but now only some syndromes $\vec s$ are compatible with the constraints corresponding to the redundancies, in other words the equation $\Hcheck \vec x = \vec s$ has solutions only for a subset of all possible $\vec s$.
The allowed syndrome vectors $\vec s$ are given exactly by the image of $\Hcheck$ and a projector onto that can be efficiently constructed. 
We can hence proceed as before by drawing $\vec s$ as $m$ i.i.d. variables, but postselect for those syndromes in the image of $\Hcheck$. The chance of successful postselection is then $2^{m_0-m}$ and hence the overhead is exponentially large in the number of redundancies $m - m_0$. 

A particularly simple but relevant case is that of a global $\mathbb{Z}_2$ bit-flip symmetry, which arises if all checks are of even weight. This is the case for example when placing the extended $[8, 4, 4]$-Hamming code on a degree-eight graph. 
In this case, there is a single global redundancy, corresponding to the product of all checks being trivial. A particular simple basis for the image of $\Hcheck$ in this case is
\begin{equation}
    \mathcal B = \mqty(
    1 & 0 & 0 & \dots & 0 & 0 & 1 \\
    0 & 1 & 0 & \dots & 0 & 0 & 1 \\
    0 & 0 & 1 & \dots & 0 & 0 & 1 \\
      &   &   & \ddots\\
      0 & 0 & 0 & \dots & 1 & 0 & 1 \\
    0 & 0 & 0 & \dots & 0 & 1 & 1 \\
    )
\end{equation}
which means that given an i.i.d. samples syndrome $\vec s$, it suffices to check that $\abs{\vec s}$ is even. Given an even-weight $\vec s$, we can then truncate the full syndrome by ignoring the last entry and solve $\Hcheck^{(0)} \vec x = \vec s^{(0)}$, where $\Hcheck^{(0)}$ is the parity check matrix with the last row deleted and $s^{(0)}$ the truncated syndrome. 
Note that the choice of the last check as ``redundant'' is arbitrary.

Finally, we note that the techniques described above to treat local and (few) global redundancies may be easily combined, e.g. to sample from the Gibbs distribution of a symmetrized Tanner code with a global symmetry.

\section{Recursive calculations on tree graphs}\label{app:TensorNetwork}
In this appendix, we elaborate on the recursive methods used to identify transitions in Tanner codes defined on locally tree-like graphs. We formulate these techniques in the language of tensor networks. Throughout the discussion, we highlight similarities and differences from the Bethe lattice Ising model discussed in~\appref{app:TreeIsing}.
\subsection{Tensor network formulation}\label{app:tn}
The unnormalized Gibbs state, $e^{-\beta E(\vec{\sigma})}$, of a Tanner code factorizes into $\prod_v e^{-\beta E_v(\vec{\sigma}_v)}$ where $\vec{\sigma}_v$ denotes the spin configuration on the edges adjacent to $v$. We can represent this conveniently as a tensor network (see also Refs. \cite{alkabetz2021tensor,pancotti2023one} for similar treatments). Consider assigning a tensor $T_v(\vec{\sigma}_v) = e^{-\beta E_v(\vec{\sigma}_v)}$ to every vertex; for example, if the vertex has degree \gdeg, this is a \gdeg-legged tensor, with each leg $i=1,\ldots,\gdeg$ assigned a variable $\sigma_i = \pm 1$.  Graphically, this tensor is the object shown in~\autoref{fig:tree-sketch}(b)ii. We also assign a 3-legged tensor to every edge as follows. Let $e=(v,v')$ be an edge connecting vertices $v,v'$. We define a tensor
\begin{equation}
    \delta_e(\sigma_v,\sigma_{v'},\sigma_e) = \begin{cases}
        1 & \sigma_v = \sigma_{v'} = \sigma_e \\
        0 & \mathrm{otherwise}
    \end{cases}
\end{equation}
where $\sigma_v, \sigma_v'$ are ``virtual'' legs, $\sigma_e$ is a physical index denoting the value of the bit on that edge, and all three variables take values $\pm 1$. We can now form a network out of these tensors by contracting the virtual legs of $\delta_e$ with the appropriate legs of $T_v$ and $T_{v'}$. The output of the contraction is a tensor with one free leg for each edge, which evaluates to $e^{-\beta E(\vec{\sigma})}$ when the free legs are assigned the configuration $\vec{\sigma}$.

From this tensor network representation, various quantities can be obtained. First, note that if we sum over the physical index $\sigma_e$, we get $\sum_{\sigma_e = \pm 1} \delta_e(\sigma_v,\sigma_{v'},\sigma_e) = \delta(\sigma_v,\sigma_{v'})$. Thus, for any $\sigma_e$ that has been summed over, the tensors $T_v$ and $T_{v'}$ are contracted along the leg corresponding to the edge $e=(v,v')$. 
In particular, the partition function $\mathcal{Z} = \sum_{\vec{\sigma}} e^{-\beta E(\vec{\sigma})}$ is equal to contracting a tensor network made out of the tensors $T_v$. 
Similarly, the (unnormalized) marginal distribution of the variables $\sigma_e$ within some region $A \subset E$ is obtained by by contracting the same tensor network in the complement $B = E \setminus A$ and keeping the original network, with the physical legs~$\sigma_e$ free, within $A$. We can also compute the partition function conditioned on the spins within a certain region, i.e. $\mathcal{Z}(\vec{\sigma}_A)$, simply by contracting fixed spins $\vec{\sigma}$ at those locations. 

The tensor network formulation is valid for any underlying graph, but in general, contracting these tensor networks is made difficult by the presence of loops. This difficulty is avoided if we consider a model on a tree graph, where no loops are present. Expectation values at the root of a tree can be obtained by contracting the tensor network inward from the leaves to the root, and each layer of contraction translates into a recursion relation for the quantity of interest, e.g., the conditional magnetization $m_r$ (\autoref{eq:m-recursion-ferro}) or the distribution $Q^{(r)}$ (\autoref{eq:Q-r}). Root observables are then related to bulk observables by a final recursion step joining two rooted trees, as detailed below. In the remainder of this section, we use these recursion relations to derive the ferromagnetic and spin glass temperatures, as well as an analytical upper bound on \Tglass~using a two-copy tensor network.

\subsection{Recursion relation for general boundary conditions}
Consider forming a rooted tree by joining together $\gdeg - 1$ branches, where the $i$th branch has boundary configuration $\vec{\sigma}_\partial^{(i)}$, partition function $\mathcal{Z}_\partial(\vec{\sigma}_\partial^{(i)})$, and conditional magnetization $m_i$ on the root spin. Let $\bm{m} = (m_2,...,m_\gdeg)$, and let $(\sigma_2,...,\sigma_\gdeg)$ denote a particular spin configuration on the roots of these branches. In terms of the tensors $T_v$ and $p_i(\sigma_i) = (1 + \sigma_i m_i)/2$, the tree produced by joining these branches has a root magnetization of (cf.~\autoref{eq:m-recursion})
\begin{align}
F(\bm{m}) &= \left(\sum_{\sigma_1,\dots,\sigma_\gdeg} T_v(\sigma_1,\dots,\sigma_\gdeg) \sigma_1 \prod_{i=2}^s p_i(\sigma_i) \right)/z(\bm{m})
\end{align}
where (cf. \autoref{eq:z})
\begin{align}
z(\bm{m}) &= \sum_{\sigma_1,\dots,\sigma_\gdeg} T_v(\sigma_1,\dots,\sigma_\gdeg) \prod_{i=2}^s p_i(\sigma_i).
\end{align}

In the main text, we introduced a family of boundary conditions parameterized by $\alpha$ (\autoref{eq:alpha}), which weight the boundary conditions by a factor $\mathcal{Z}_\partial(\vec{\sigma}_\partial)^\alpha$. This global reweighting factor can be handled \textit{locally} at the level of a single recursion step, by noting that the boundary configuration of the output is $\vec{\sigma}_\partial = (\vec{\sigma}_\partial^{(2)},\vec{\sigma}_\partial^{(3)},...,\vec{\sigma}_\partial^{(\gdeg)})$, and its partition function is
\begin{align}\label{eq:Z-partial}
\mathcal{Z}_\partial(\vec{\sigma}_\partial) &= \sum_{\sigma_1,\dots,\sigma_\gdeg} T_v(\sigma_1, \dots,\sigma_\gdeg) \prod_{i=2}^\gdeg \left(p_i(\sigma_i) \mathcal{Z}_\partial(\vec{\sigma}_\partial^{(i)})\right) \notag \\
&= z(\bm{m}) \prod_{i=1}^b \mathcal{Z}_\partial(\vec{\sigma}_\partial^{(i)}).
\end{align}

Now let $Q_{\alpha}^{(r)}$ denote the distribution of the root magnetization of a depth $r$ tree, for a given $\alpha$. The recursion relation for $\alpha = 1$ presented in the main text (\autoref{eq:eq18} straightforwardly generalizes to
\begin{equation}\label{eq:alpha-recursion}
Q_\alpha^{(r+1)}(m) = \frac{\int \delta(m - F(\bm{m})) z(\bm{m})^\alpha \prod_{i=2}^\gdeg dQ_\alpha^{(r)}(m_i)}{\int z(\bm{m})^\alpha \prod_{i=2}^\gdeg dQ_\alpha^{(r)}(m_i)}.
\end{equation}
For $\alpha > 0$, the factor $z(\bm{m})^\alpha$ penalizes the insertion of defects that arise from joining branches with ``inconsistent'' boundary conditions.

For the symmetrized Hamming [7,4,3] code, $T_v(\sigma_1,...,\sigma_\gdeg)=+1$ if $(\sigma_1,...,\sigma_\gdeg)$ is a codeword, and $\exp(-4\beta) \equiv y$ otherwise (cf. \autoref{eq:symm-check}). The $\mathbb{Z}_2$ symmetry of the code implies $Q_\alpha^{(r)}(m) = Q_\alpha^{(r)}(-m)$. The initial condition is a pair of delta functions,
\begin{equation}\label{eq:init}
Q_\alpha^{(0)}(m) = \frac{1}{2}\left[\delta(m-1) + \delta(m+1)\right].
\end{equation}

In general,~\autoref{eq:alpha-recursion} can only be simulated numerically, as the number of delta functions in the distribution quickly proliferates. Even in such cases, we can gain quantitative insight into the distribution and its moments, as we now show. 

\subsection{Memory transition} 

An exceptional case where the fixed point distribution \textit{is} analytically solvable is $\alpha=\infty$, where the only allowed boundary configurations are those consistent with a global codeword. As a result, the distribution at depth $r$ concentrates onto a pair of delta functions:
\begin{equation}
Q_\infty^{(r)}(m) = \frac{1}{2} \left[\delta(m-m_r) + \delta(m + m_r)\right].
\end{equation}
Thus,~\autoref{eq:alpha-recursion} reduces to an equation for a single variable, $m_r$. This is the same flow variable defined in the main text (\autoref{eq:m-recursion-ferro}), where we took the slightly different perspective favoring a \textit{particular} codeword, such as the all zero codeword, which breaks the $\mathbb{Z}_2$ symmetry and selects $m = + m_r$. 

When the local code is the symmetrized Hamming [7,4,3] code,~\autoref{eq:m-recursion-ferro} evaluates to
\begin{equation}\label{eq:mr}
f(m_r) \equiv m_{r+1} =  \frac{4 m_r^3 (1 - y)}{1 + 7 y + 3 m_r^4 (1-y)}.
\end{equation}

\begin{figure}[t]
\includegraphics[width=\linewidth]{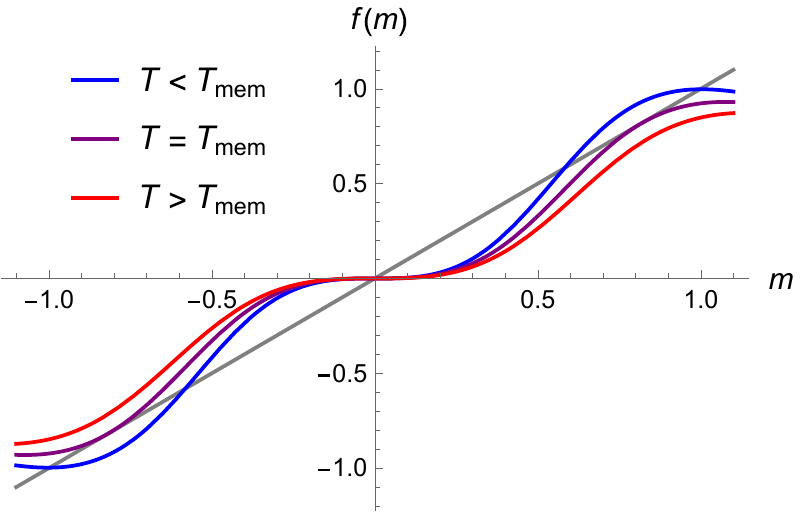}
\caption{\autoref{eq:mr} at $T=0.6<\Tdyn$ (blue), $T=\Tdyn$ (purple), $T=1.6>\Tdyn$ (red), along with the line $Y=m$ (gray).\label{fig:fm}}
\end{figure}

\autoref{fig:fm} shows $f(m)$ for temperatures below, at, and above \Tdyn. The fixed points of~\autoref{eq:mr} are intersections between the curves $Y=f(m)$ and the line $Y=m$~\cite{eggarter1974cayley,baxter2007exactly}.
For $y \leq 1/25$, corresponding to $\Tdyn = 4/\ln(25)$, there are five intersections:
\begin{equation}
m_* = 0, \pm \sqrt{\frac{2\pm \sqrt{(1-25y)/(1-y)}}{3}}.
\end{equation}
As shown in~\autoref{fig:ferro} in the main text, these fixed points arrange into a stable paramagnetic fixed point $m_*=0$; a symmetric pair of stable, ``ferromagnetic'' fixed points near $\pm 1$; and a symmetric pair of unstable fixed points in between. The stability can be inferred from the slope of the curve ($\frac{\mathrm{d} f}{\mathrm{d} m}|_{m_*}$) relative to the slope of the gray line (unity) at the intersections in ~\autoref{fig:fm}: a small perturbation away from a fixed point with $\frac{\mathrm{d} f}{\mathrm{d} m}|_{m_*} < 1$ flows back to the fixed point, while a perturbation away from a fixed point with $\frac{\mathrm{d} f}{\mathrm{d} m}|_{m_*} < 1$ will flow away.

At $T=\Tdyn$, the pairs of stable and unstable nontrivial fixed points ``annihilate'' and become marginal, with $\frac{\mathrm{d} f}{\mathrm{d} m}|_{m_*} = 1$. For $T>\Tdyn$, only the trivial paramagnetic fixed point remains.

\subsubsection{Bulk field}

Rather than fixing boundary conditions, another way to probe the memory transition is by adding \emph{bulk} magnetic fields. Choosing a uniform field which favors either the all-zero or all-ones codeword, the total energy is modified $E \to E + h_\text{bulk} \sum_e \sigma_e$. Through a redefinition of the local terms $E_v$ that appear in the tensors $T_v$, we obtain the same overall tensor network structure, which allows us to evaluate bulk observables. 

When the graph is a tree, we can again obtain the bulk magnetization recursively. The initial condition $m_0$ of the flow is set by the boundary field; if the boundaries are left free, $m_0=0$. The nonzero bulk field modifies the recursion relation~\autoref{eq:mr} 
and its fixed points, which are shown as a function of $h_{\rm bulk}$ at three representative temperatures in~\autoref{fig:bulk-field}a. 

At all nonzero temperatures, the root magnetization is a smooth function of $h_{\rm bulk}$ at small $h_{\rm bulk}$, in contrast to the Ising model on a tree, where the root susceptibility diverges below \Tdyn. This is because in the vicinity of $h_{\rm bulk}=0$, there exists a \textit{stable} fixed point $m_{\mathrm{para}}(h,T)$ which is smoothly connected to the zero-field paramagnet: it passes through the origin and depends only weakly on $h_{\rm bulk}$. This is the fixed point to which the initial condition $m_0=0$ flows at weak field. However, at low temperatures, there exists a critical field $\pm h_\text{bulk}^\text{crit}(T)$ at which $m_{\mathrm{para}}(h,T)$ disappears, so the magnetization jumps discontinuously to the nearest remaining stable fixed point, near $\pm 1$. This first-order transition occurs if there is an interval of $h_{\rm bulk}$ on which multiple fixed points coexist, i.e., if the allowed root magnetization $m(h)$ is multi-valued on some interval.

As \autoref{fig:bulk-field}a illustrates, this bulk-field-induced transition persists up to $T_h \approx 1.893 > \Tdyn$. $h_{\rm bulk}^{\rm crit}(T)$ smoothly increases with temperature up to $T_h$, and while it is finite for any nonzero temperature, $\lim_{T\rightarrow 0} h_{\text{bulk}}^\text{crit}(T) = 0$ (\autoref{fig:bulk-field}b). 

While $h_{\rm{bulk}}^{\rm crit}(T)$ shows no sharp change at \Tdyn, the signatures of the memory transition appear when one considers the effect of bulk and boundary fields together. Recall that for $T\leq\Tdyn$, $m(h=0)$ is multi-valued: depending on the boundary field, the system flows to a stable fixed point with $m_*>0, m_*=0,$ or $m_* < 0$. Turning on a bulk field, we let $h_{\rm bulk}^{\rm opp}(T)$ denote the magnitude of the largest bulk field at which negatively magnetized and positively magnetized stable fixed points coexist. For $T<\Tdyn$, $h_{\rm bulk}^{\rm opp}(T) > 0$: up to this point, a strong enough negative boundary field can overwhelm the positive boundary field. As shown in~\autoref{fig:bulk-field}b, $h_{\rm bulk}^{opp}(T)$ decreases with temperature, vanishing at $T=\Tdyn$. 

Then, in the intermediate regime $\Tdyn < T < T_{h}$, there is still a first-order transition at finite \textit{bulk} field, but in the absence of a bulk field, not even an infinitely strong \textit{boundary} field (i.e., polarized BCs) is sufficient to order the bulk, since $m(h)$ is single-valued at $h=0$. This again contrasts with the Ising model, where the same \Tdyn~can be diagnosed by susceptibility to bulk fields \textit{or} boundary fields. The minimal bulk field necessary to stabilize a nontrivial fixed point, denoted $h_{\rm bulk}^{\rm min}(T)$ in~\autoref{fig:bulk-field}b, is identically zero for $T<\Tdyn$, then increases monotonically up to $T_h$, with $h_{\rm bulk}^{\rm min}(T_h) = h_{\rm bulk}^{\rm crit}(T_h)$.

\begin{figure}[t]
\includegraphics[width=\linewidth]{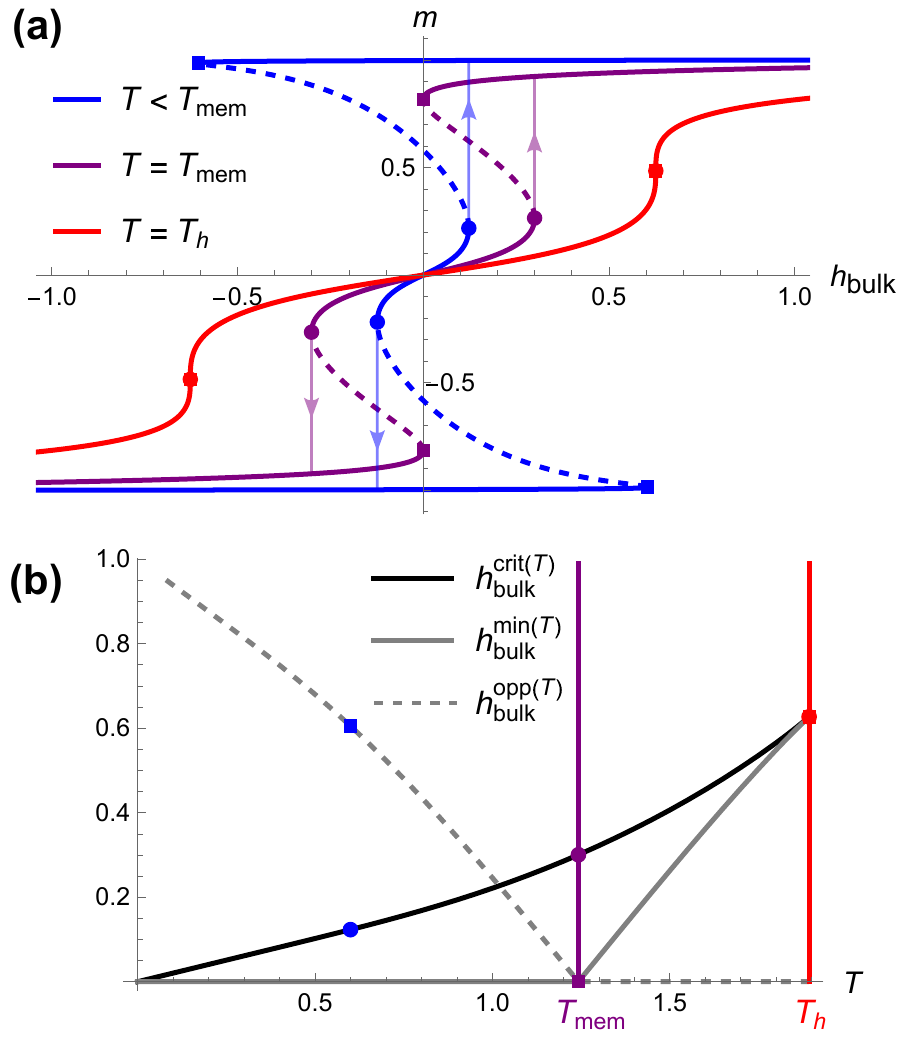}
\caption{(a) Fixed points under uniform bulk field $h_{\mathrm{bulk}}$, at temperatures $T=0.6 < \Tdyn$ (blue), $T=\Tdyn$ (purple), and $T=T_h$ (red). Solid (dashed) curves indicate stable (unstable) fixed points. Light lines marked with arrows indicate the discontinuous jump in the root magnetization at $\pm h_{\mathrm{bulk}}^{\mathrm{crit}}(T)$ for the lower two temperatures. Solid circles mark the endpoints of $m_\mathrm{para}(h,T)$. Solid squares mark $\pm h_{\rm bulk}^{\rm opp}(T)$ (for $T<\Tdyn$) and $\pm h_{\rm bulk}^{\rm min}(T)$ (for $T>\Tdyn$). (b) $h_{\rm bulk}^{\rm{crit}}$ (solid black), $h_{\rm bulk}^{\rm{min}}$ (solid gray), and $h_{\rm bulk}^{\rm opp}$ (dashed gray) as a function of $T$ at low temperatures. Purple, blue, and red points correspond to the points marked in (a).\label{fig:bulk-field}}
\end{figure}

\subsection{Spin glass transitions}

As we have seen, the tensor network formulation naturally allows us to calculate bulk observables conditioned on a particular boundary configuration~$\vec{\sigma}_\partial$. 
In particular, letting $\vec{\sigma}$ denote the configuration of spins on the interior, we can consider the magnetization of some bulk spin $e$ conditioned on the boundary spins
\begin{equation}
    \mathbb{E}[\sigma_e | \vec{\sigma}_\partial] \equiv \sum_{\vec{\sigma}} p(\vec{\sigma}|\vec{\sigma}_\partial) \sigma_e = \frac{\sum_{\vec{\sigma}} e^{-\beta E(\vec{\sigma},\vec{\sigma}_\partial)}\sigma_e}{\sum_{\vec{\sigma}} e^{-\beta E(\vec{\sigma},\vec{\sigma}_\partial)}}.
\end{equation}
In the tensor network language, this corresponds to summing over all bulk legs while keeping the boundary legs fixed in a particular configuration $\vec{\sigma}_\partial$; the numerator involves the insertion of an additional Pauli $Z$ matrix on the physical leg corresponding to $\sigma_e$, before summation. The spin glass transition is probed by the variance of the conditional expectation value as we vary over an ensemble of boundary conditions $\vec{\sigma}_\partial$:
\begin{equation}\label{eq:BC_variance}
    \sum_{\vec{\sigma}_\partial} p_\partial(\vec{\sigma}_\partial) \left[\mathbb{E}[\sigma_e | \vec{\sigma}_\partial]\right]^2 - \left[ \sum_{\vec{\sigma}_\partial} p_\partial(\vec{\sigma}_\partial) \mathbb{E}[\sigma_e | \vec{\sigma}_\partial] \right]^2.
\end{equation}

Taking $p_\partial \propto \mathcal{Z}_\partial^{\alpha}$, and noting that the second term vanishes by symmetry,~\autoref{eq:BC_variance} simplifies to:
\begin{equation}\label{eq:BC_variance_gamma}
     \overline{\mathbb{E}[\sigma_e|\vec{\sigma}_\partial]^2}_\alpha = \frac{\sum_{\vec{\sigma}_\partial} Z_\partial(\vec{\sigma}_\partial)^{\alpha - 2}\left(\sum_{\vec{\sigma}} e^{-\beta E(\vec{\sigma},\vec{\sigma}_\partial)} \sigma_e \right)^2}{ \sum_{\vec{\sigma}_\partial} \mathcal{Z}_\partial(\vec{\sigma}_\partial)^{\alpha}}.
\end{equation}
For $\alpha=1$,~\autoref{eq:BC_variance_gamma} becomes nonzero at the transition temperature \Tglass~numerically determined in the main text.

\subsubsection{Two-copy calculation: $\alpha=2$}

The computation becomes analytically tractable if we instead take $\alpha=2$:
\begin{equation}\label{eq:BC_var_annealed}
     \overline{\mathbb{E}[\sigma_e|\vec{\sigma}_\partial]^2}_2 = \frac{\sum_{\vec{\sigma}_\partial}\left(\sum_{\vec{\sigma}} e^{-\beta E(\vec{\sigma},\vec{\sigma}_\partial)} \sigma_e \right)^2}{ \sum_{\vec{\sigma}_\partial}\left(\sum_{\vec{\sigma}} e^{-\beta E(\vec{\sigma},\vec{\sigma}_\partial)}\right)^2}.
\end{equation}
This latter quantity can be interpreted as an \textit{annealed} average, over infinite boundary fields of random sign, of the correlation function $\langle \sigma_e^{(1)} \sigma_e^{(2)} \rangle$ on two replicas, shown in~\autoref{fig:double-tree}. It has the advantage that both the numerator and the denominator can be evaluated efficiently in the tensor network formalism as we now detail.

Consider first the denominator, $\sum_{\vec{\sigma}_\partial}\mathcal{Z}_\partial(\vec{\sigma}_\partial)^2$. The partition function $\mathcal{Z}_\partial(\vec{\sigma}_\partial)$ is itself a tensor network, where all bulk legs have been summed over but the boundary legs have been left free. Now, evaluating the average of its square simply amounts to taking two copies of the same tensor network and contracting the boundary legs between the two copies. The calculation of the numerator is completely analogous, with an additional Pauli-$Z$ matrix acting on $\sigma_e$ on both copies ($\sigma^{(1)}, \sigma^{(2)}$ in~\autoref{fig:double-tree}a). Overall, this amounts to evaluating the same kind of tensor network as in~\appref{app:tn}, but with the local tensors now given by $T_v \otimes T_v$ for each vertex $v$ (\autoref{fig:double-tree}b). The legs from the two copies are combined into a single leg with bond dimension $4$, corresponding to physical spins $((1,1),(1,-1),(-1,1),(-1,-1))$. Contracting the boundary legs amounts to imposing the boundary condition $p^{(0)} = (1/2,0,0,1/2)$ on this $4$-dimensional Hilbert space.

Again, when the underlying graph is a tree, we can turn this tensor network construction into a recursion relation for the quantity of interest, whose fixed points characterize the bulk observables.
Let us focus again on the case of the local code being the symmetrized $[7,4,3]$ Hamming code. By symmetry, the distribution $p^{(r)}$ can be parameterized by a single parameter $\mu_r$ as
\begin{equation}
    p^{(r)} = \frac{1}{4}\begin{pmatrix}
    1 + \mu_r \\
    1 - \mu_r \\
    1 - \mu_r \\
    1 + \mu_r
    \end{pmatrix}.
\end{equation}
The correlation between the spins $\sigma^{(1)}, \sigma^{(2)}$ at the roots of two depth-$r\rightarrow\infty$ copies is then\footnote{The correlation between bulk spins in two copies of a Cayley tree, where each non-leaf node has degree $b+1$ and there is no distinguished root, would then be $2\mu/(1+\mu^2)$.}
\begin{equation}
    \langle \sigma^{(1)} \sigma^{(2)} \rangle = \mu_{r\rightarrow\infty}.
\end{equation}

\begin{figure}[t]
\includegraphics[width=\linewidth]{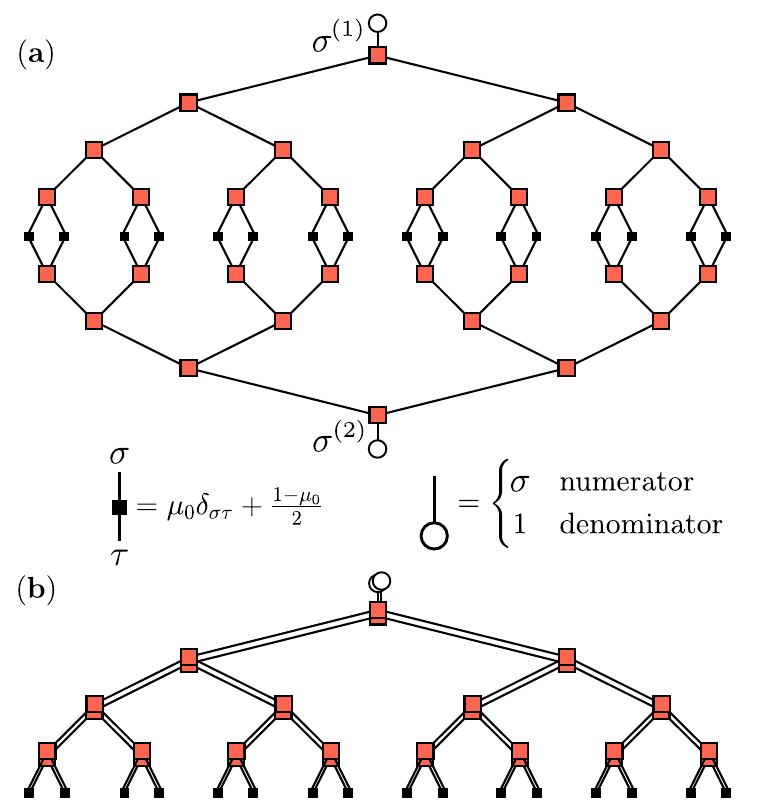}
\caption{Two-copy tensor network used to calculate the variance with $\alpha=2$ BCs. (a) Two trees are contracted along their boundaries (black square tensors), with $\mu_0=0$ in~\autoref{eq:BC_var_annealed}). The root spins of the two copies, $\sigma^{(1)}$ and $\sigma^{(2)}$, are summed over in the denominator of~\autoref{eq:BC_var_annealed}, and contracted with a Pauli Z in the numerator. (b) Folding over the two copies yields a tree with the same structure as in the one-copy calculation, but with bond dimension 4 on each doubled leg. \label{fig:double-tree}}
\end{figure}

To evaluate~\autoref{eq:BC_var_annealed}, we take $\mu_0 = 1$, i.e. the boundary legs of the two copies are perfectly correlated. More generally, initializing to $\mu_0 \in (0,1]$ corresponds to a setup with ferromagnetically coupled boundaries, while $\mu_0 \in [-1,0)$ corresponds to boundaries with antiferromagnetic coupling. The question of whether the (annealed) bulk variance is nonzero thus generalizes to asking whether a given nonzero $\mu_0$ flows to the trivial fixed point $\mu=0$, where the copies are decoupled, or to a nontrivial fixed point where the coupling at the boundary persists into the bulk. The analysis is therefore identical to that of uniformly polarized boundary conditions, but with the initial condition $\mu_0$ now interpreted as the strength of coupling between two trees, whereas $m_0$ parameterized the polarization on the boundary of a single tree.

One layer of the doubled tensor network contraction induces a recursion relation on the variable $\mu_r$:
\begin{equation}
\mu_{r+1} = \frac{4\mu_r^3 (1-y)^2}{(1+7y)^2 + 3\mu_r^4(1-y)^2} \equiv f_2(\mu_r).
\end{equation}

\begin{figure}
\includegraphics[width=\linewidth]{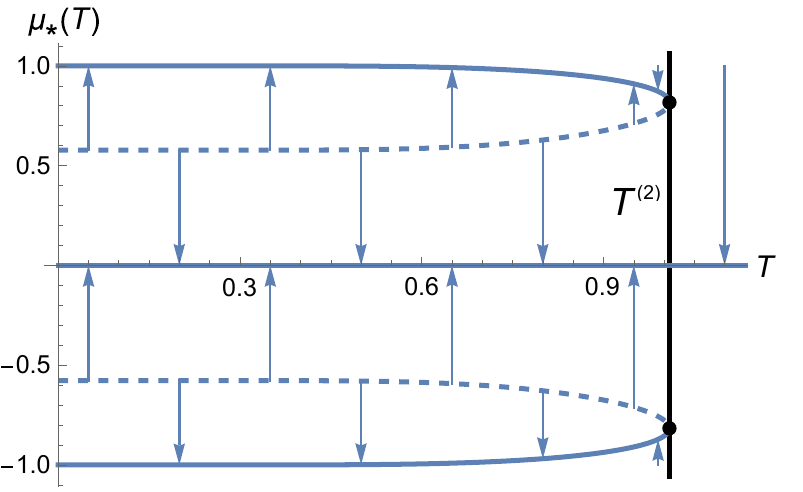}
\caption{Fixed points of the recursion relation for the two-copy ``annealed'' calculation (\autoref{eq:mstar}). Dashed (solid) curves indicate unstable (stable) fixed points. Two pairs of nontrivial solutions merge at $T=T^{(2)}$, where both solutions become marginal (black circles). The trivial solution at $\mu_*=0$ is stable at all $T$. \label{fig:annealed}}
\end{figure}

The fixed points $\mu_* = f_2(\mu_*)$, shown in~\autoref{fig:annealed}, are at
\begin{equation}\label{eq:mstar}
    \mu_* = 0, \pm \sqrt{\frac{2(1-y)\pm \sqrt{1-y(50+143y)}}{3(1-y)}}.
\end{equation}
At high temperatures, only the trivial solution is real. The nontrivial fixed points, which come in pairs symmetric about $\mu=0$, appear at
\begin{equation}
    T^{(2)} = \frac{4}{\ln(143/(16 \sqrt{3} -25))} \approx 1.009.
\end{equation}
For $T\leq T^{(2)}$, the initial condition $\mu^{(0)}=1$ flows to the top branch in~\autoref{fig:annealed}, so as the temperature is lowered, the annealed variance (\autoref{eq:BC_var_annealed}) jumps from zero to a finite value at $T=T^{(2)}$, signaling a first-order transition. As with the transition at \Tdyn, the finite jump results from the merging and annihilation of pairs stable and unstable fixed points.

\subsubsection{Population dynamics: $\alpha = 1$}
The distribution under properly correlated boundary conditions ($\alpha=1$), which shows a transition at $\Tglass = 0.704 \pm 0.005$, is iterated numerically as follows. 

We would like to eliminate the factor of $z(\vec{m})$ in~\autoref{eq:eq9}, so as to sample from the recursion relation without carrying around reweighting factors. To do so, define the conditional distribution
\begin{equation}\label{eq:tildeQ}
\tilde{Q}^{(r)}_{\sigma_0}(m) = (1 + m \sigma_0) Q^{(r)}(m) \Leftrightarrow Q^{(r)} = \frac{1}{2} \left(\tilde{Q}_{1} + \tilde{Q}_{-1}\right),
\end{equation}
where we have suppressed the $\alpha=1$ subscript. $\tilde{Q}_{\sigma_0}$ is the distribution of conditional root magnetizations across the following ensemble: freeze the root to $\sigma_0$ and sample configurations on the leaves, $\vec{\sigma}_\partial$, with probability proportional to $\mathcal{Z}_{\sigma_0,\vec{\sigma}_\partial}$, where $\mathcal{Z}_{\sigma_0,\vec{\sigma}_\partial}$ is the partition function on the tree with the root and leaves frozen to $\sigma_0, \vec{\sigma}_{\partial}$ respectively.

Substituting~\autoref{eq:tildeQ} into~\autoref{eq:eq18} yields 
\begin{align}\label{eq:eq9}
\tilde{Q}^{(r+1)}_{\sigma_0}(m) &\propto \sum_{\sigma_2,\dots,\sigma_{\gdeg}} T_v(\sigma_0,\sigma_2,\dots,\sigma_\gdeg) \notag \\
&\times \int \delta(m - F(\vec{m})) \prod_{i=2}^{\gdeg} d \tilde{Q}_{\sigma_i}^{(r)}(m_i).
\end{align}

Due to the fortuitous cancellation of $z(\vec{m})$, we can now simulate~\autoref{eq:eq9} via the population dynamics method of Ref.~\cite{mezard2006reconstruction}. First, initialize two populations of size $M$, $\tilde{Q}_1$ and $\tilde{Q}_{-1}$. The initial condition for $Q^{(0)}$ (\autoref{eq:init}) translates to $\tilde{Q}^{(0)}_{\sigma}(m) = \delta(m-\sigma)$, i.e. every element in $\tilde{Q}_{\sigma}$ takes the value $\sigma$. Then, to simulate~\autoref{eq:eq9} for a given $\sigma$, for $j=1,...,M$:
\begin{enumerate}
\item Sample the spin configuration $\sigma_2,...,\sigma_\gdeg$ with Boltzmann weight $T_v(\sigma,\sigma_2,...,\sigma_\gdeg)$.
\item For each $\sigma_i$, independently sample $m_i$ from population $\tilde{Q}_{\sigma_i}^{(r)}$.
\item Set the $j$th element of the population $\tilde{Q}_\sigma^{(r+1)}$ equal to $F(m_2,...,m_\gdeg)$.
\end{enumerate}

This method was used in Ref.~\cite{mezard2006reconstruction} to analyze the $q$-state Potts model with spin degrees of freedom at the vertices of a tree. For those models, the $\sigma_i$ in step (1) can be sampled independently, a simplification that no longer applies when the local code has checks of weight $\geq 3$.

Given a population of size $M$ at iteration $r$, only $M/(\gdeg-1)$ independent samples can be generated at iteration $r+1$. Therefore, the above method introduces some correlations between samples, so we must take $M$ large enough that the final result is insensitive to the precise choice of $M$. To check that $M$ is large enough, we perform 10 independent simulations and verify that each run yields consistent results, as evidenced by the small error bars in~\autoref{fig:Ir-plateau}a. $M$ ranges from $10^5$ well away from \Tglass, to $1.5 \times 10^6$ at $T=0.708, 0.71$. As $T \nearrow \Tglass$, $I(r)$ converges more slowly towards its plateau value. At $T=0.708$, $I(r)$ appears to be falling off a plateau, with $I(r)$ dropping sharply to 0 between $r=60$ and $r=80$. However, as different runs begin to diverge at $r\approx 50$, the average mutual information beyond this point is uncertain, and we cannot definitively conclude whether $T=0.708$ is just above or just below the spin glass transition.\footnote{For this reason, we truncate~\autoref{fig:Ir-plateau}a at $r=55$ and exclude $T=0.708$ from the right panel.}

\begin{figure}
\includegraphics[width=
\linewidth]{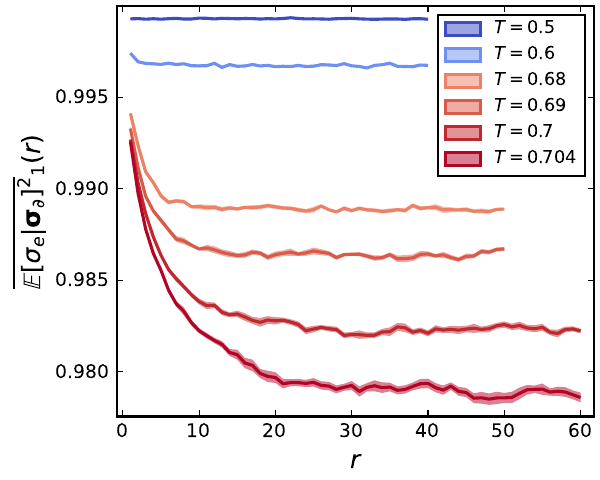}
\caption{Variance of the bulk magnetization with free boundary conditions ($\alpha=1$), determined from~\autoref{eq:var-sample}, as a function of the tree depth $r$. For each $T$, a population size of $M=10^5$ was used, and $M$ bulk samples were drawn according to~\autoref{eq:eta-bulk} for each population. Ribbons show the standard deviation across 10 independent runs.}
\end{figure}

From the sampled $\tilde{Q}_\sigma(m)$, we can also evaluate the variance of a spin deep in the bulk (\autoref{eq:BC_variance}). To transform the magnetization of a root spin (on a dangling edge connected to only one vertex/local code) to the magnetization of a bulk spin (connected to two vertices), a final recursion step is required. For a given $\sigma, r$ independently sample $m_1$ and $m_2$ from $\tilde{Q}^{(r)}_\sigma(m)$, and let
\begin{equation}\label{eq:eta-bulk}
    m_{bulk} = \frac{m_1 + m_2}{1 + m_1 m_2}~.
\end{equation}
Repeating this $M$ times for $\tilde{Q}_1$ and $\tilde{Q}_{-1}$ and aggregating the two populations
 yields the distribution of the central spin in a Cayley tree with $R=r$ generations. As $r\rightarrow \infty$, this distribution converges towards that of a spin deep in the bulk of the Bethe lattice, with its variance estimated numerically as
\begin{equation}\label{eq:var-sample}
\overline{\mathbb{E}[\sigma_e|\vec{\sigma}_\partial]^2}_1 = \frac{1}{2M} \sum_{i=1}^{2M} m_{i, bulk}^2.
\end{equation}
At $T=0.5$ and $T=0.6$, the plateau value is in good agreement with the long-time, large-system autocorrelation from Monte Carlo dynamics on HGRRGs, shown in in~\autoref{fig:glass}b. 

\subsubsection{Quenched boundary condition: $\alpha=0$}

Above we have seen that while both $\alpha=1$ and $\alpha=2$ yield a first-order ``spin glass'' transition, the latter transition occurs at a higher temperature, as the larger $\alpha$ more strongly suppresses the frustrated boundary conditions which tend to disorder the bulk. Let us also briefly comment upon the case $\alpha = 0$, for which~\autoref{eq:BC_variance_gamma} evaluates to:
\begin{equation}\label{eq:BC_variance_quenched}
    \overline{\mathbb{E}[\sigma_e|\vec{\sigma}_\partial]^2}_0 = \frac{1}{2^{|\partial|}}\sum_{\vec{\sigma}_\partial} \left(\frac{\sum_{\vec{\sigma}} e^{-\beta E(\vec{\sigma},\vec{\sigma}_\partial)}\sigma_e}{\sum_{\vec{\sigma}} e^{-\beta E(\vec{\sigma},\vec{\sigma}_\partial)}}\right)^2.
\end{equation}
\autoref{eq:BC_variance_quenched} can be interpreted as a quenched random average, as is performed for the Ising spin glass in Ref.~\cite{chayes1986mean}. In the Ising model, this i.i.d., uncorrelated boundary condition emerges naturally from a quenched $\pm J$ random bond Ising model with polarized boundaries, by applying a gauge transformation to push the disorder onto the leaves, leaving a ferromagnetic model in the bulk~\cite{chayes1986mean}. As noted in~\appref{app:TreeIsing}, the spin glass transition of the Ising model is \textit{continuous}, and \Tglass~is robust to the particular ensemble of boundary conditions (so long as $\alpha$ is finite). 

The story is much different for the first-order transition of our Tanner-Ising model. We analyze $\alpha = 0$ by numerically sampling~\autoref{eq:alpha-recursion}. We numerically find that the only fixed point of this recursion relation, including at zero temperature, is the paramagnetic fixed point $Q(m) = \delta(m)$, corresponding to zero variance in the bulk. Thus, the $\alpha=0$ BC fails to detect any spin glass transition. This behavior is a cautionary example of the argument made in Ref.~\cite{mezard2006reconstruction}: the i.i.d. BC is not a ``good'' model of a spin glass 
because it lacks the correlations that would emerge from a proper point-to-set construction on a locally tree-like closed graph. Indeed, $\alpha=0$ corresponds to sampling a Gibbs state on the full tree at \textit{infinite} temperature, and fails to capture the replica symmetry breaking expected of models with loops.

\subsection{Configurational entropy}\label{sec:complexity-tree}
As discussed in the main text, the configurational entropy is the difference of two terms, $s_{\beta}$ and $s_{\alpha=1}$. We measure the entropy density in bits, so that $s = \beta(\varepsilon - f)/\ln(2)$ where $\varepsilon, f$ are the energy density and free-energy density, respectively, at a given temperature.

Consider a Tanner-Ising model with a symmetrized local code $H_L^{(\mathrm{sym})}$ at each vertex, and no redundancies aside from those generated locally by the symmetrization. The partition function on graph $G$ of $n_v$ vertices is
\begin{equation}
    \mathcal{Z}(G, H_L^{(\mathrm{sym})}) = 2^k (1 + (2^{m_0}-1) y)^{n_v}
\end{equation}
where $m_0$ is the number of linearly independent checks in $H_L$, $k=n_v(s/2-m_0)$ is the number of logical bits, and $y=\exp(-2^{m_0-1} \beta)$. Therefore, the bulk free energy per spin is
\begin{align}\label{eq:f-para}
   \langle f \rangle_\beta &= -\frac{1}{\beta n} \ln\left[\mathcal{Z}(G, H_L^{(\mathrm{sym})})\right] \notag \\
   &= r\ln 2+ \frac{2}{s} \ln[1 + (2^{m_0}-1)y]
\end{align}
where $r$ is the global code rate. 

Meanwhile, the bulk energy per spin is 
\begin{equation}
    \expval{\varepsilon}_\beta  =\frac{\partial(\beta f)}{\partial \beta} = \frac{2^{m_0}}{s} \frac{(2^{m_0}-1)y}{1 + (2^{m_0}-1)y},
\end{equation}
in agreement with~\autoref{eq:E-G}. 

The same bulk energy density is obtained when $\alpha=1$ BCs are imposed. This is because $\alpha=1$ corresponds to ``free'' BCs in the sense that $p(\vec{\sigma},\vec{\sigma}_\partial) \propto e^{-\beta E(\vec{\sigma},\vec{\sigma}_\partial)}$, so linear averages of local quantities (such as the energy around a vertex) match the paramagnet. On the other hand, below \Tglass, imposing $\alpha=1$ BCs modifies the bulk free-energy density, as the bulk of the tree is confined to a single extremal component.\footnote{Note that the thermodynamic relation $\frac{\partial (\beta f)}{\partial \beta} = e$ is no longer satisfied with these boundary conditions, because the derivative does not commute with the average over BCs.} 

As demonstrated in forthcoming work~\cite{Sommers2025}, the free-energy density can be recovered from the fixed point distribution $Q^*$ as
\begin{align}\label{eq:f1}
    \beta &f_{\alpha=1}(T) = -\frac{2}{z(\vec{0})\gdeg} \int z(\vec{m}) \ln(z(\vec{m})) \prod_{i=2}^\gdeg dQ^*(m_i) \notag \\
    &+ \frac{2(\gdeg-2)}{\gdeg} \int \frac{1 + m_1 m_2}{2} \ln \left(\frac{1+m_1m_2}{2}\right) \prod_{i=1}^2 dQ^*(m_i).
\end{align}

When the local code is the symmetrized [7,4,3] Hamming code, substituting $\gdeg=7, m_0=3,r=1/7,y=\exp(-4\beta)$ into~\autoref{eq:f-para} yields
\begin{align}
\langle f \rangle_\beta = \frac{1}{7} \ln(2) + \frac{2}{y} \ln(1 + 7y).
\end{align}
We can also obtain a closed-form expression for the fixed point distribution $Q^*$, to leading order in $y$. Substituting this expression into~\autoref{eq:f1} yields
\begin{equation}
\beta f_{\alpha=1}(y) = 2y \ln(2y) + O(y^2).
\end{equation}
We therefore arrive at (cf.~\autoref{eq:s-conf-series})
\begin{align}
s_{\mathrm{conf}}(y) &= [\beta \langle f\rangle_\beta - \beta f_{\alpha=1}(y)]/\ln(2) \notag \\
&= \frac{1}{7} + 2 y (1 - \ln(2y))/\ln(2).
\end{align}
Together with $\expval{\varepsilon}_\beta = 7y/(1+7y)$, this yields the gray curve in~\autoref{fig:complexity-tree}. The numerical curve was obtained by running the population dynamics until it approximately converged, aggregating the two populations, and resampling to estimate the two contributions to~\autoref{eq:f1}.

\end{document}